\def\ispopl{}
\def\includesequetialscheduleappendix{}
\let\includesequetialscheduleappendix\undefined %
\definecolor{morange}{RGB}{255,100,0}
\definecolor{mblue}{RGB}{86,180,233}
\definecolor{mgreen}{RGB}{0,158,115}
\definecolor{myellow}{RGB}{255,255,0}
\definecolor{mred}{RGB}{204,51,139}
\definecolor{mpurple}{RGB}{238,130,238}
\definecolor{msilver}{RGB}{192,192,192}
\newenvironment{changed}[1]{}{}
\newenvironment{changednospace}[1]{}{}
\renewcommand\hl[1]{}
\newcommand\hlone[1]{}
\newcolumntype{q}{>{\setbox0=\hbox\bgroup}c<{\egroup}@{}}
\newcommand{\appendixlocationnote}{All appendices can be found in the supplementary materials section of the ACM Digital Library.}
\newcommand*{\addFileDependency}[1]{%
  \typeout{(#1)}
  \@addtofilelist{#1}
  \IfFileExists{#1}{}{\typeout{No file #1.}}
}
\newcommand*{\myexternaldocument}[1]{%
    \externaldocument{#1}%
    \addFileDependency{#1.tex}%
    \addFileDependency{#1.aux}%
}\makeatother
\begin{document}

\makeatletter\@input{yy.tex}\makeatother

\title{Simplifying Dependent Reductions in the Polyhedral Model}         
\author{Cambridge Yang}
\affiliation{
  \institution{MIT CSAIL}            %
  \country{USA}                    %
}
\email{camyang@csail.mit.edu}          %

\author{Eric Atkinson}
\affiliation{
  \institution{MIT CSAIL}            %
  \country{USA}                    %
}
\email{eatkinson@csail.mit.edu}          %

\author{Michael Carbin}
\affiliation{
  \institution{MIT CSAIL}            %
  \country{USA}                    %
}
\email{mcarbin@csail.mit.edu}          %

\begin{abstract}
A Reduction -- an accumulation over a set of values, using an associative and
commutative operator --  is a common computation in many numerical
computations, including scientific computations, machine learning,
computer vision, and financial analytics.
	
Contemporary polyhedral-based compilation techniques make it possible to optimize 
reductions, such as prefix sums, in which each component of the reduction's output 
potentially shares computation with another component in the reduction.
Therefore an optimizing compiler can identify the computation shared between multiple components and generate code that computes the shared computation only once. 

These techniques, however, do not support reductions that -- when phrased in the language of the polyhedral model -- span multiple 
dependent statements.
In such cases, existing approaches can generate incorrect code that violates the data dependences of the original, unoptimized program.

In this work, we identify and formalize the optimization of dependent reductions as an integer bilinear program.
We present a heuristic optimization algorithm that uses an affine sequential schedule of the program to determine how 
to simplfy reductions yet still preserve the program's dependences.

We demonstrate that the algorithm provides optimal complexity for a set of 
benchmark programs from the literature on probabilistic inference algorithms, whose 
performance critically relies on simplifying these reductions.
The complexities for 10 of the 11 programs improve
siginifcantly by factors
at least of the sizes of the input data, which are
in the range of $10^4$ to $10^6$ for typical real application inputs. 
We also confirm the significance of the improvement by showing 
speedups in wall-clock time that range from $1.1\text{x}$ to over 
$10^6\text{x}$.

\end{abstract}

\begin{CCSXML}
<ccs2012>
<concept>
<concept_id>10011007.10011006.10011041</concept_id>
<concept_desc>Software and its engineering~Compilers</concept_desc>
<concept_significance>500</concept_significance>
</concept>
<concept>
<concept_id>10003752.10010124.10010131</concept_id>
<concept_desc>Theory of computation~Program semantics</concept_desc>
<concept_significance>300</concept_significance>
</concept>
<concept>
<concept_id>10003752.10003753.10003757</concept_id>
<concept_desc>Theory of computation~Probabilistic computation</concept_desc>
<concept_significance>300</concept_significance>
</concept>
<concept>
<concept_id>10003752.10010124.10010138.10010143</concept_id>
<concept_desc>Theory of computation~Program analysis</concept_desc>
<concept_significance>500</concept_significance>
</concept>
</ccs2012>
\end{CCSXML}

\ccsdesc[500]{Software and its engineering~Compilers}
\ccsdesc[500]{Theory of computation~Program analysis}
\ccsdesc[300]{Theory of computation~Program semantics}
\ccsdesc[300]{Theory of computation~Probabilistic computation}

\keywords{reductions, polyhedral model, program dependence}

\maketitle

\section{Introduction}
\label{sec:introduction}

A {\em reduction} -- an accumulation over a set of values, using an associative and
commutative operator --  is a common computation in many areas, including 
scientific computations, machine learning, computer
vision, and financial analytics. 
For example, consider the prefix sum defined mathematically by \Cref{eq:scan} and 
presented by \Cref{lst:naiveprefixsum} in an imperative language with loops.

The value at each index $i$ of the array $B$ is the summation of values at indices $j$ 
before and up to $i$ of array $A$. 
The complexity of this na\"{i}ve prefix sum is $\BigO(N^2)$: $\BigO(N)$ for iterating 
over "$\forall i$", and $\BigO(N)$ for the summation over $j$.

\begin{minipage}[t]{0.42\linewidth}
\begin{center}
\begin{equation}
\label{eq:scan}
B[i] = \sum_{j=0}^{j \le i} A[j] \quad \forall i, 0 \le i < N
\end{equation}
\end{center}
\end{minipage}\hfill
\begin{minipage}[t]{0.45\linewidth}
\begin{clisting}[caption={Na\"{i}ve prefix sum}, 
label=lst:naiveprefixsum,columns=fullflexible]
// B is array of ints initialized to all 0
for(i = 0; i < N; i++)
	for(j = 0; j<=i; j++)
		B[i] += A[j]`\label{cline:reduction}`
\end{clisting}
\end{minipage}

\begin{changed}{morange}
\paragraph{Shared Computation and Reuse}
It is possible to implement prefix 
sum with $\BigO(N)$ complexity, which is a linear speedup over this 
na\"{i}ve implementation.
The optimization relies on the fact that 
consecutive iterations of the loop 
that computes $B$ (indexed by $i$) {\em share} equivalent 
computations: for any pair of consecutive iterations $[i]$ and $[i + 1]$, 
the values of the entire set of computations \{$A[j] \mid j < i\}$ are the 
same. 
Therefore, the latter iteration shares the former iteration's entire computation: $\sum_{j=0}^{j < i} A[j]$.
In principle, that shared computation can be {\em reused}: computed once in the former iteration, stored, and then reused in the latter iteration.
\end{changed}

\begin{minipage}[t]{0.45\linewidth}
\begin{subequations}
\allowdisplaybreaks
\begin{align}
B[0] &= A[0] \label{eq:scan-opt-init}\\
B[i] &= B[i-1] + A[i] \quad \forall i, 1 \le i < N \label{eq:scan-opt-update}
\end{align}
\label{eq:scan-opt}
\end{subequations}
\end{minipage}\hfill
\begin{minipage}[t]{0.45\linewidth}
\begin{clisting}[caption={Optimized prefix sum}, 
label=lst:optprefixsum,columns=fullflexible]
// B is array of ints initialized to all 0
B[0] = A[0] `\label{cline:reductioninit}`
for(i = 1; i < N; i++)
	B[i] = B[i-1] + A[i] `\label{cline:reductionincrement}`
\end{clisting}
\end{minipage}

\paragraph{Optimized Reductions}
\Cref{eq:scan-opt}  -- and, correspondingly, \Cref{lst:optprefixsum} --  presents 
an $\BigO(N)$ implementation of prefix sum that reuses the shared computation between iterations. 
Instead, an iteration calculates its result, stores its result into $B$ (as normal), and the subsequent iteration reuses the result of the previous iteration (via $B[i - 1]$) to compute its own result.

\subsection{Simplifying Reductions}

\citet{sr} developed a suite of polyhedral compilation techniques for {\em 
Simplifying Reductions} (SR) that can be applied to automatically 
transform \Cref{eq:scan} to \Cref{eq:scan-opt} for an array equational 
language that supports reductions as a first class operation 
\citep{alphaz}.  Several of the challenges that these techniques solve are 
1) identifying shared computation, 2) identifying if shared computation 
is {\em reusable}: if it's possible to transform the program to exploit 
the reuse), and 3) identifying if reusing shared computation is {\em 
profitable}: the transformation reduces the complexity of a program.
SR provides a suite of specifications and techniques to identify shared 
computation as well as identify when shared computation is profitably 
reusable.\footnote{We refer readers to \Cref{sec:sr} for additional 
explanation of how SR solves these problems. \appendixlocationnote}

\paragraph{Reuse Vector} The core of SR, the {\em Simplification Transformation} (ST), codifies the set of profitably reusable computations as a set of {\em reuse vectors}. 
For \Cref{eq:scan}, the reuse vector $[1, 0]^\transpose$ denotes the shared computation: changing $i$ to $i+1$ and $j$ to $j+0$ (i.e. not changing $j$) does not change the value ofthe reduction body $A[j]$.
Given an equational statement and a reuse vector, ST automatically transforms the statement into a set of statements that together is semantically equivalent to the original statement, but reuses shared computation. 
For example, given \Cref{eq:scan} and the reuse vector $[1, 0]^\transpose$, ST transforms \Cref{eq:scan} to \Cref{eq:scan-opt}.

\paragraph{Choosing a Reuse Vector}
The space of valid reuse vectors is infinite in general, e.g.,  any vector $[c, 
0]^\transpose$ with constant $c$ is a valid choice for the reuse  vector for 
\Cref{eq:scan}, since they all satisfy that changing from $i$ to $i+c$ and not changing 
$j$ does not change the evaluation of $A[j]$. 
Moreover, different reuse vectors result in different output equations by ST and, 
ultimately, different programs. 
As a concrete example, applying ST to \Cref{eq:scan} with the reuse vector $[-1, 0]^\transpose$ produces \Cref{eq:scan-opt-rev}.
\begin{subequations}
\label{eq:scan-opt-rev}
\begin{align}
B[N-1] &= \sum_{j=0}^{j<N} A[j] \label{eq:scan-opt-rev-init} \\
B[i] &= B[i+1] - A[i] \quad \forall i, 0 \le i < N-1 \label{eq:scan-opt-rev-update}
\end{align}
\end{subequations}
Instead of initializing $B[0]$ and computing $B[i]$ from lower indices to higher indices (i.e. left to right) as in \Cref{eq:scan-opt}, \Cref{eq:scan-opt-rev} initializes $B[N-1]$ and computes $B[i]$ from higher to lower indices (i.e. right to left). However, this still has complexity $\BigO(N)$, just as does \Cref{eq:scan-opt}.

\todo{Explain how they propose to choose a reuse vectors?}

\newcommand{\refeqscanopt}{%
	\Cref{eq:scan-opt-init,eq:scan-opt-update,eq:ms-scan-iter-update}%
}
\newcommand{\refeqscanoptrev}{%
\Cref{eq:scan-opt-rev-init,eq:scan-opt-rev-update,eq:ms-scan-iter-update}%
}

\subsection{Dependent Reductions}
The SR framework, including ST, only optimizes reductions that are independent.
As a point of contrast, consider the following {\em dependent} reduction:
\noindent\begin{subequations}
\label{eq:ms-scan}
\allowdisplaybreaks
\begin{align}
B[i] &= \sum_{j=0}^{j\le i} A[j] \quad \forall i, 0 \le i < N  
\label{eq:ms-scan-reduction}  \\
A[i+1] &= f(B[i]) \quad \forall i, 0 \le i < N-1 \label{eq:ms-scan-iter-update}
\end{align}
\end{subequations}
\Cref{eq:ms-scan} extends \Cref{eq:scan} (equivalent to \Cref{eq:ms-scan-reduction})  with an additional statement (\Cref{eq:ms-scan-iter-update}).
The reduction in \Cref{eq:ms-scan-reduction} is {\em dependent}: 
the value of the reduction $B[i]$ depends on the set of values $\{ A[j] | j 
\le i \}$, while $A[i]$ depends on the previous value of the reduction 
$B[i-1]$.

Dependent reductions pose a challenge to ST because 1) applying ST introduces new dependences, and 2) the newly introduced dependences together with the program's existing dependences may incorrectly form a dependence cycle in the resultant program.

For example, applying ST to \Cref{eq:ms-scan-reduction} with the reuse vector $[-1, 
0]^\transpose$ produces a program consisting of three statements: 
\refeqscanoptrev, which, together form a dependence cycle.
Specifically, let $E_1 \overset{S}{\longrightarrow} E_2 $ denote a dependence induced by statement $S$ between array entries $E_1$ and $E_2$. The path 
$
B[N-1] \overset{\cref{eq:scan-opt-rev-init}}{\longrightarrow} 
A[N-1]  \overset{\cref{eq:ms-scan-iter-update}}{\longrightarrow} 
B[N-2] \overset{\cref{eq:scan-opt-rev-update}}{\longrightarrow}  
B[N-1]$ 
forms a~cycle.

\begin{wrapfigure}{r}{.4\linewidth}
\begin{clisting}[caption={Optimized prefix sum with dependent 
reductions},label=lst:ms-scan-opt]
B[0] = A[0]`\label{line:ms-scan-opt-init}`
for(i=1; i < N; i++)
	B[i] = B[i-1] + A[i]`\label{line:ms-scan-opt-update}`
	A[i+1] = f(B[i])
\end{clisting}
\end{wrapfigure}

On the other hand, applying ST to \Cref{eq:ms-scan-reduction} with reuse 
vector 
$[1, 0]^\transpose$ produces a valid program consisting of three statements: 
\refeqscanopt, and without any dependence cycle. 
\Cref{lst:ms-scan-opt} presents a translation of this program to an imperative 
language with loops, which correctly computes array $A$ and $B$ with complexity 
$\BigO(N)$.

In summary, one key challenge for optimizing the dependent reductions is to augment 
ST to 
choose reuse vectors such that the augmented transformation produces programs 
that have no dependence cycles.

\paragraph{Approach}
\label{sec:contributions}
In this work, present a new technique to automatically optimize dependent reductions 
while soundly handling dependences
that can automatically generate the code in 
\Cref{lst:ms-scan-opt}. 
We present a heuristic algorithm whose key idea is to use\rline{finalintro} an
\hlone{myellow}{affine sequential schedule} of the program as a guide to choose 
among the multiple choices that can be made 
during the optimization process.
Our results show that even 
though the algorithm does not consider other
viable choices during optimization, 
given an \hlone{myellow}{affine sequential schedule} of the program and all 
left-hand-side arrays of reductions,
the algorithm is still optimal for reductions with 
operators that have inverses.

\hl{morange}{We note\rline{notenotfindreuse} that our work relies on the techniques of
\mbox{\citet{sr}} to find 
resuable shared computation -- i.e., reuse vectors -- for a single reduction.}\footnote{We include a description of the techniques in 
\Cref{sec:simplifying-reduction}}
\hl{morange}{Our work addresses dependent reductions by further constraining reuse vectors 
to satisfy (intra/inter)-statement(s) dependences.
}

 \paragraph{Applications} Simplifying Reductions is a classic problem in 
 the compiler optimization literature\todo{add citations}~ that has reemerged as a primary 
 concern for modern applications. 
 In this work, we study a suite of 11 probabilistic inference algorithms that have been established as widely studied and used algorithms across data science, artificial intelligence, machine 
 learning, computer vision, physics, and medicine. 
 We demonstrate that dependent reductions exist in these 
 algorithms' natural, mathematical specifications. 
 Moreover, delivering efficient implementations of these algorithms by 
 hand  -- as is current practice -- requires solving the 
 simplifying dependent reduction problem by hand, 
 which is a tedious and error-prone endeavor.  
 Our approach shows that it is possible to automatically generate 
 optimized and efficient algorithms from their mathematical specifications 
 alone.
 
\paragraph{Contributions}
In this work, we present the following contributions:
\begin{itemize}[leftmargin=*]
	\item  We identify the problem of simplifying dependent reductions, a problem that was not addressed by the Simplifying Reductions framework \citep{sr}, which did not consider dependences. 
	We illustrate the importance of this problem with examples from real~applications.
	\item  We formalize the task of optimizing a dependent reduction by combining the 
	insights of the Simplifying Reductions framework with insights from ILP 
	scheduling~\cite{ilpschedulemultidim}. 
	We formulate a specification of the problem as a integer 
	bilinear program.
	\item  We propose a heuristic algorithm to solve the above optimization 
	problem.
	\item  We evaluate our proposed method on a benchmark suite 
	consisting of
	standard 
	probabilistic inference algorithms and probabilistic models.
	Our results show that our approach reduces the complexity of the reductions in 
	our programs to their optimal complexity for all of the 11 programs that we evaluate.
	In 10 out of the 11 programs, the complexity improves by a 
	(multiplicative) factor of at least $N$, where $N$ is the size of the input data.\footnote{For programs we consider, for example, this is usually the number of 
	data points or the number of words of a text corpus. We include a 
	more detailed review of input sizes for each benchmark in 
	\Cref{sec:runtimevalidation}}
	This is significant because for typical real application inputs of the 
	programs 
	in consideration, $N$ is in the range of $10^4$ to $10^6$ -- a factor that 
	subsumes 
	other potential constant-factor improvements. 
	We also confirm this significance by showing that the speedups in wall-clock 
	time ranges from $1.1\text{x}$ to over $10^6\text{x}$, with a median 
	of 
	$37\text{x}$.
	We also outline the limits of the optimality of our approach, noting that our 
	technique is not optimal if a reduction operator lacks an inverse operation. 

\end{itemize}

In summary, dependent reductions are a key ingredient of probabilistic 
inference algorithms, which are driving an emerging class of new programming 
languages and 
systems~\cite{stan,augurv2,webppl,pyro,gen,hakaru,venture,edward} designed 
to streamline data science and enable new applications.
Optimizing these algorithms has historically either been done by hand or has 
been 
baked in as a domain/algorithmic-specific optimization for a single problem 
model~\cite{dmm,gibbscollapsed}.
To the best of our knowledge, our results are the first to identify and formulate 
the dependent reduction as a general program pattern, detail its challenges, 
and propose a technique to optimize its performance.

\paragraph{Road Map}
\label{sec:roadmap}
In \Cref{sec:example}, 
we illustrate a heuristic algorithm to address the simplifying dependent 
reduction problem described in \Cref{sec:introduction}.
In addition, to further motivate the problem in the context of existing well-known 
algorithms, we present another motivating example which will be used for evaluation 
later in the paper.
In \Cref{sec:background,sec:background-sr}, we review background on the 
polyhedral model and SR, respectively.
In \Cref{sec:mssrps} we formalize our problem as a integer bilinear program.
In \Cref{sec:mssrsol} we introduce our heuristic algorithm.
In \Cref{sec:implementation,sec:eval} we discuss the implementation of our algorithm and its evaluation.
In \Cref{sec:relatedwork,sec:conclusion} we summarize related work and end with 
concluding remarks, respectively.

\section{Example}
\label{sec:example}

In this section we give two examples. 
In \Cref{sec:mssr_example}, we walk through our approach with the 
prefix sum example.\todo{point to the equation for this example}~
In \Cref{sec:motivatingexample}, we use a practical application example 
to further motivate the importance of the dependent reduction problem.
\subsection{Walk Through}
\label{sec:mssr_example}

\ifthenelse{\isundefined{\ispopl}}%
{
\begin{figure}[ht]
	\centering
	{\includegraphics[width=0.8\linewidth]{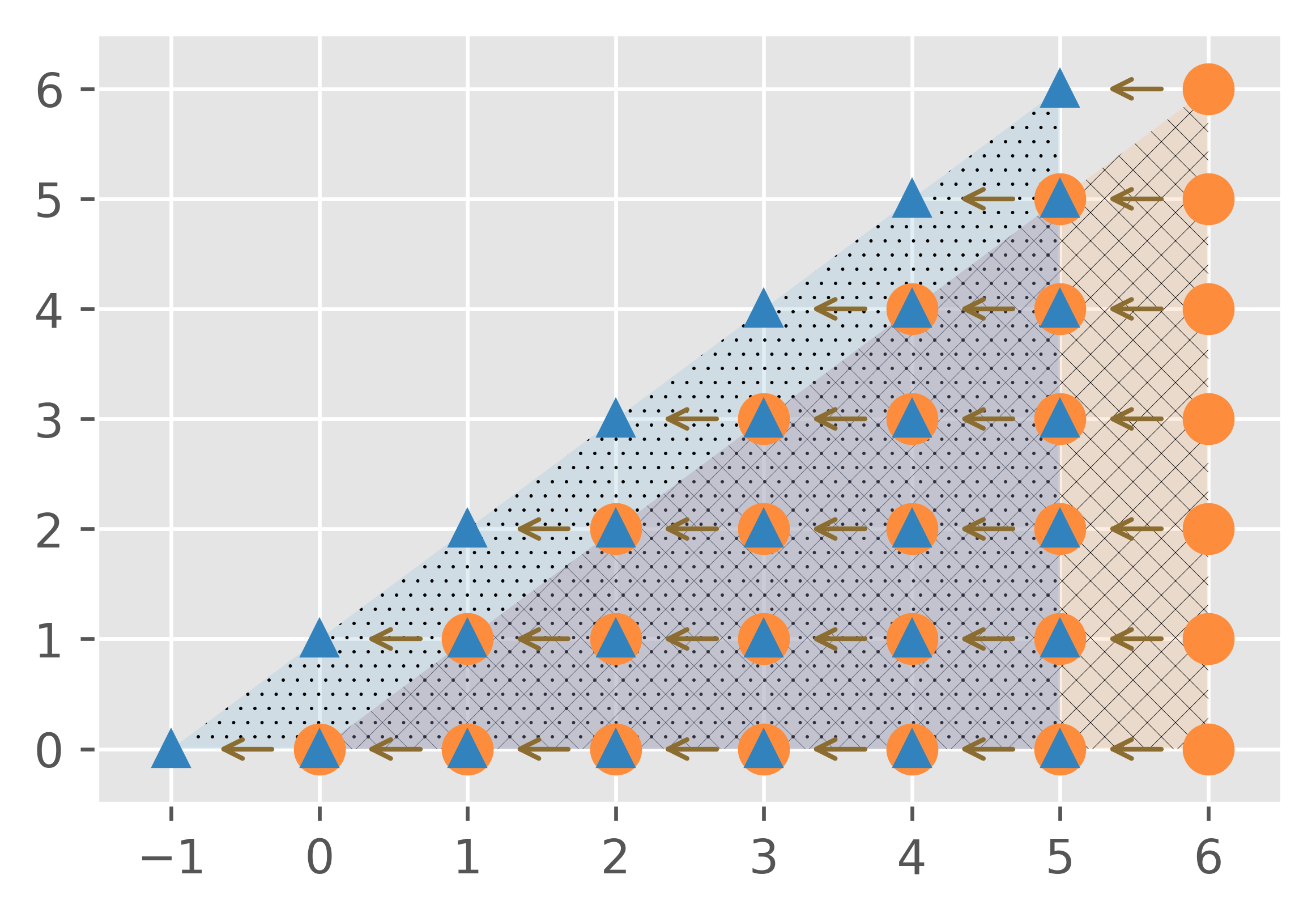}}%
	\caption{Naive prefix sum (\Cref{eq:ms-scan})}
	\label{fig:ms-orig}
\end{figure}%
}%
{
\begin{wrapfigure}[13]{r}{0.4\linewidth}
	\vspace{-2em}
	\centering
	\raisebox{0pt}[\dimexpr\height-\baselineskip\relax]%
	{\includegraphics[width=\linewidth]{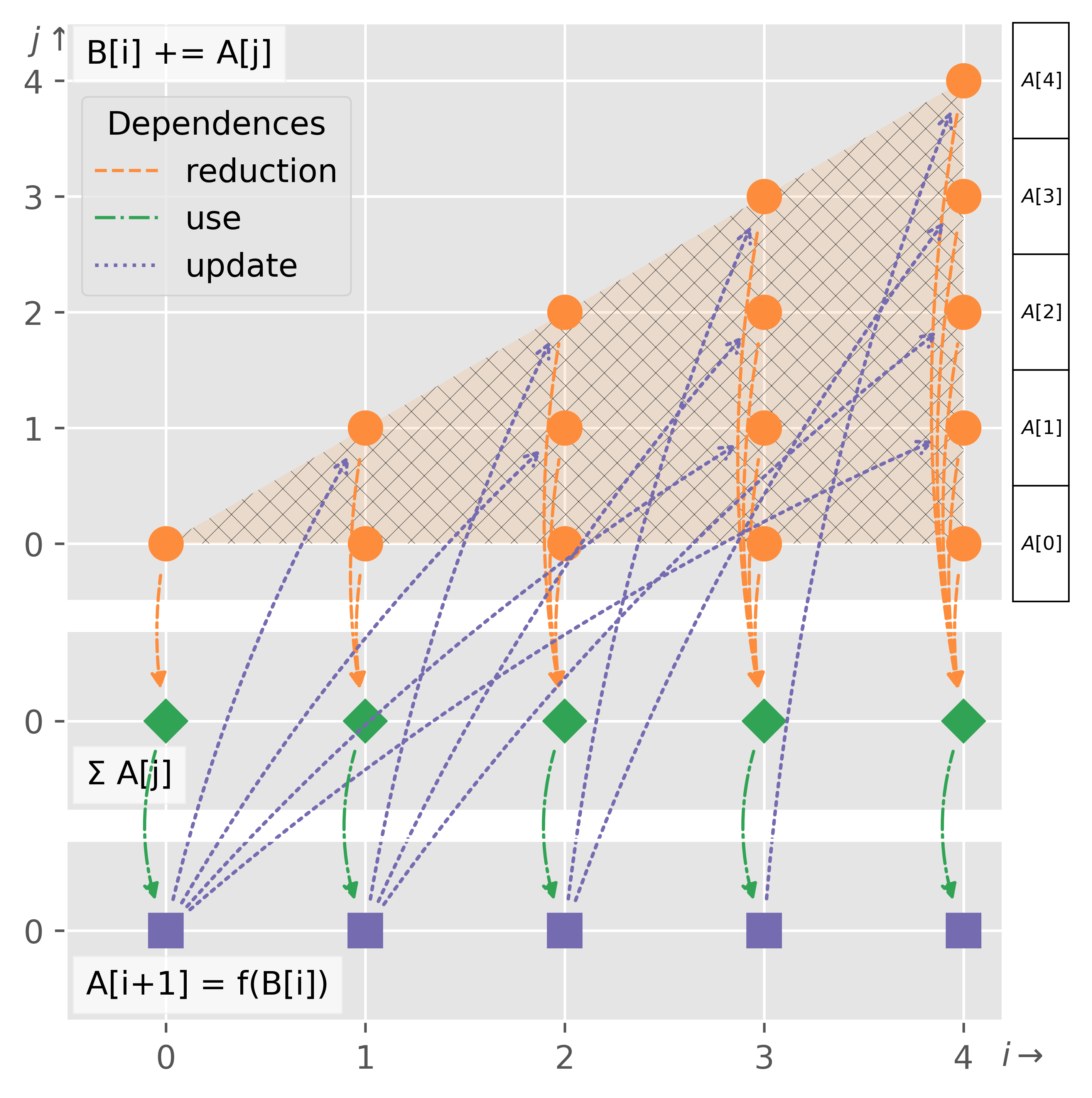}}
	\caption{Naive prefix sum (\Cref{eq:ms-scan})}
	\label{fig:ms-orig}
\end{wrapfigure}%
}

In this section, we use the example of \Cref{eq:ms-scan} to 
1) illustrate the steps of applying ST given a reuse direction,
2) illustrate an invalid reuse direction that leads ST (using 
the algorithm of \citet{sr}) to induce a    
dependence cycle, and compare it to a valid reuse direction, and 
3) describe the mechanism of our proposed heuristic algorithm.\todo{elaborate on what the heuristic does. e.g., " and it chooses a valid reuse direction."}

\paragraph{Naive Prefix Sum}
For ease of comparison and better visualization, we present the input in 
\Cref{eq:ms-scan} with \Cref{fig:ms-orig}, a visual, polyhedral interpretation of the 
naive prefix sum 
program in~\Cref{eq:ms-scan}.
In \Cref{fig:ms-orig}, the top polyhedron with round dots represents the 
iteration 
domain of the reduction statement, $B[i] \mathrel{+}= A[j]$, 
with each round dot denoting an iteration instance of the 
statement.
To the right of the top polyhedron, we have labeled each round dot in 
the top polyhedron at coordinate $(i, j)$ by the array element $A[j]$ 
that should be accumulated into $B[i]$.
The bottom polyhedron with squares represents the iteration domain for 
the 
statement $A[i + 1] = f(B[i])$. 
The middle polyhedron with diamonds is an additional polyhedron 
that our technique inserts 
into the program's polyhedral representation to denote the completion 
of each reduction~$B[i]$, which\rline{bcompletionlabel} we label as 
$\sum\texttt{A[j]}$ in the 
diagram.

\paragraph{Data Dependences}
Each arrow in \Cref{fig:ms-orig} represents a data dependence between iteration 
	instances.
An arrow from iteration instance $a$ to instance $b$ represents a data 
dependence from $a$ to $b$. 
The implication is that $a$ needs to execute before $b$.

There are three sources of data dependences:
\begin{itemize}[leftmargin=*]
	\item {\bf Reduction.} Each point in the middle polyhedron depends 
	on all the points
	in the respective column of the top polyhedron. These dependences 
	are those of 
	the reduction.
	\item {\bf Use.} Each point in the bottom polyhedron depends on the 
	point in the 
	corresponding column 
	of the middle polyhedron. These dependences are those from the 
	use of the 
	reduction results.
	\item {\bf Update.} 
	Points in each row of the top polyhedron depend on the point in the 
	bottom 
	polyhedron that is one to the left of the leftmost point of the row. 
	These dependences are those induced by the update to $A[i+1]$ in 
	\Cref{eq:ms-scan-iter-update} and use by \Cref{eq:ms-scan-reduction}.
\end{itemize}

\paragraph{Correct Optimization}

{
\captionsetup[sub]{skip=0pt}
\captionsetup[figure]{skip=0pt}
\begin{figure*}
	\centering
	\begin{subfigure}{.5\linewidth}
		\centering
		\includegraphics[width=0.8\linewidth]{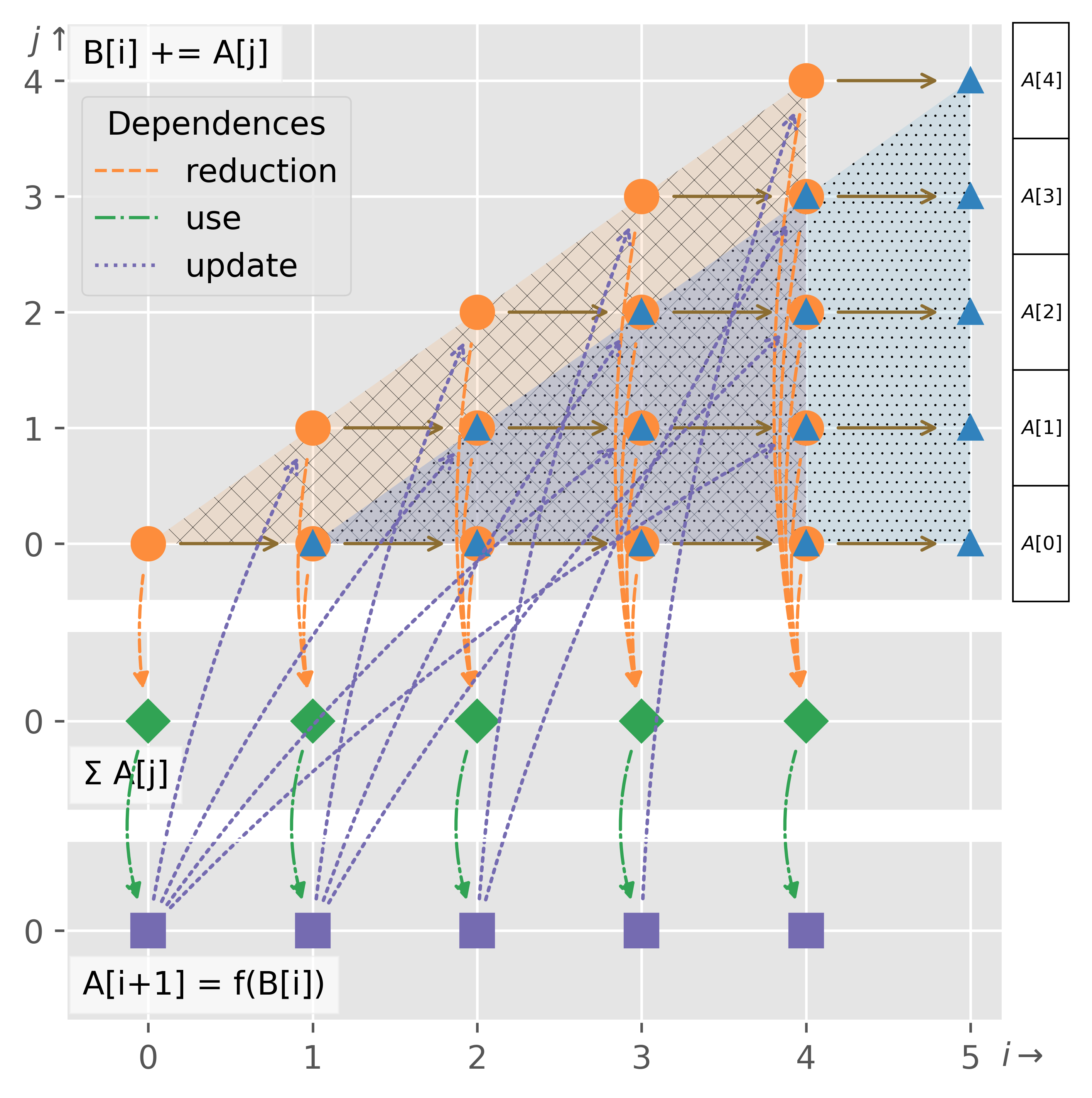}
		\caption{After shift}
		\label{fig:s1}
	\end{subfigure}%
	\begin{subfigure}{.5\linewidth}
		\centering
		\includegraphics[width=0.8\linewidth]{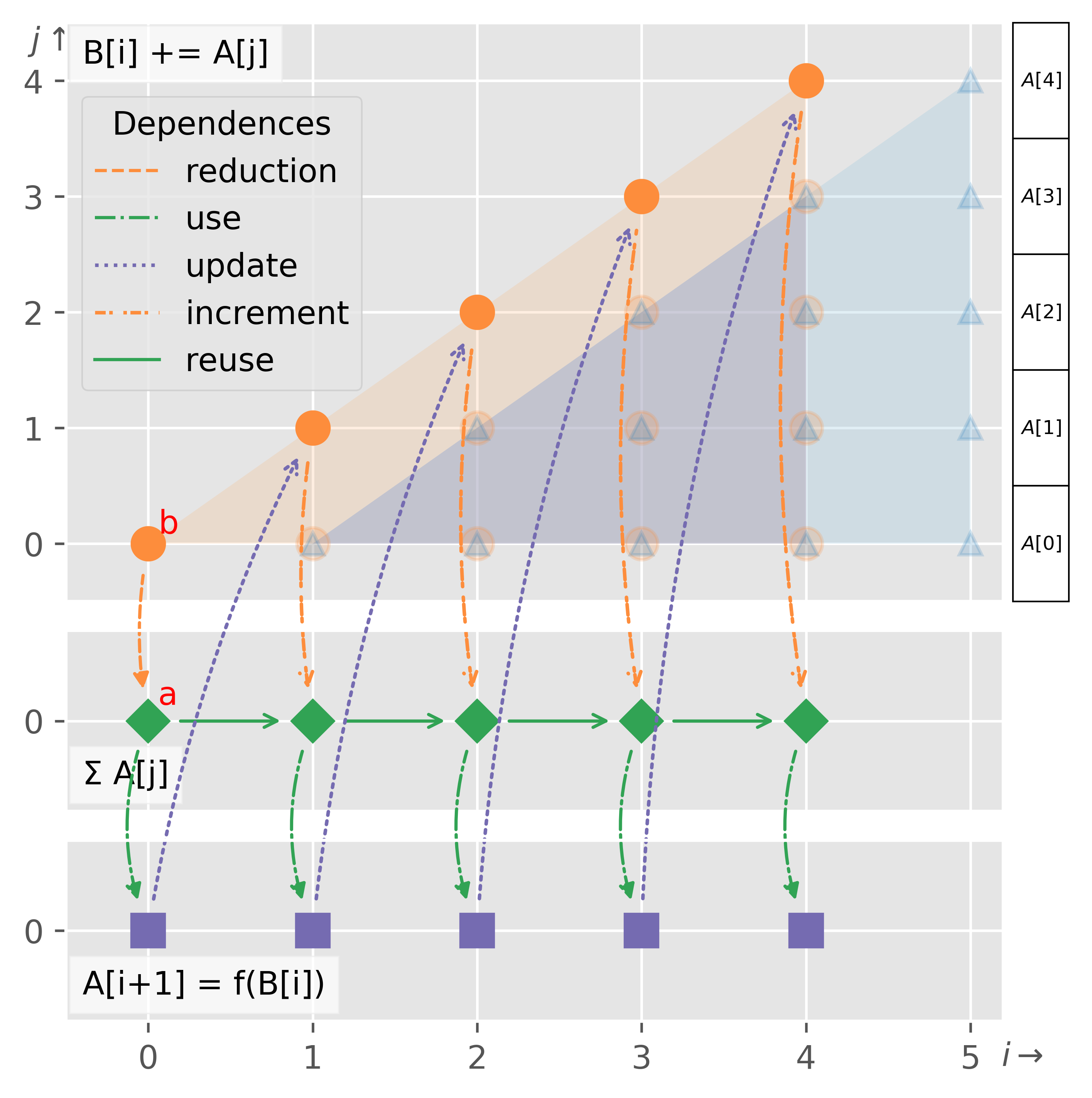}
		\caption{After transformation}
		\label{fig:s1dep}
	\end{subfigure}
	\caption{Correct optimization of prefix sum with depedent reduction}
	\label{fig:correctoptdiagram}
\end{figure*}
}

\Cref{fig:correctoptdiagram} presents two diagrams, corresponding to an intermediate 
step and the result of a correct application of ST, respectively.

\Cref{fig:s1} presents the intermediate step of ST.
In this step, the algorithm chooses a reuse vector and shifts the reduction statement's 
iteration domain along the vector.
\Cref{fig:s1} illustrates the shift along a correct reuse vector, $[1, 
0]^\transpose$, which maps iteration instances $[i, j]$  (round dots in the 
crosshatched 
polyhedron) to instances $[i+1, j]$ (triangles in the dot-shaded polyhedron).
Each solid arrow represents the mapping from an instance in the
polyhedron with round dots to its counterpart in the polyhedron with 
triangles.
The triangles outlined in round dots (i.e. overlapped triangles and round dots) are 
the points corresponding to redundant computations.
Specifically, because the reuse vector has the property that the evaluations of 
the reduction's 
body, $A[j]$, are the same for any two points in the same row, the 
evaluation of 
a reduction over any column $\textit{col}$ in this intersection  
must have the same value as the evaluation of a reduction over the column to 
the left of $\textit{col}$. 
ST therefore eliminates this intersection part of the domain by 
reusing previously computed reductions (i.e. computes $B[i]$ from 
$B[i-1]$ by incrementally using points not in the intersection).

\Cref{fig:s1dep} presents the result of ST, where ST has eliminated the redundant 
computations. 
\Cref{fig:s1dep} corresponds to the resulting polyhedron and 
dependences after ST eliminates redundant computations. 
Specifically, each instance in the intersection of the two polyhedrons has been 
eliminated, along with its induced dependences.

To map \Cref{fig:s1dep} to \Cref{eq:scan-opt}:
the point $a$ in the middle polyhedron and the point $b$ in the top polyhedron in 
\Cref{fig:s1dep} correspond to 
the reduction that initializes $B[0]$ in \Cref{eq:scan-opt-init}.
All points in the middle polyhedron except $a$ then correspond to 
\Cref{eq:scan-opt-update}, i.e., each $B[i]$
is computed by adding the predecessor point $B[i-1]$ with $A[i]$.

Dependences in \Cref{fig:s1dep} are preserved from 
\Cref{fig:s1} for all the non-eliminated instances. 
\Cref{fig:s1dep} introduces the following new dependences: %
\begin{itemize}[leftmargin=*]
\item {\bf Reuse.} Each instance in the middle 
polyhedron (except the leftmost instance) now depends on the instance to its 
left, along the reuse vector. 
These dependences are those from reusing.
\item {\bf Increment.} Each instance in the middle polyhedron now  
depends on the corresponding instance in the top polyhedron in the same 
column. These dependences are those from incrementalizing.
\end{itemize}

\paragraph{Incorrect Optimization}
{
\captionsetup[sub]{skip=0pt}
\captionsetup[figure]{skip=0pt}
\begin{figure*}
	\centering
	\begin{subfigure}[b]{0.5\linewidth}
		\centering
		\includegraphics[width=0.8\linewidth]{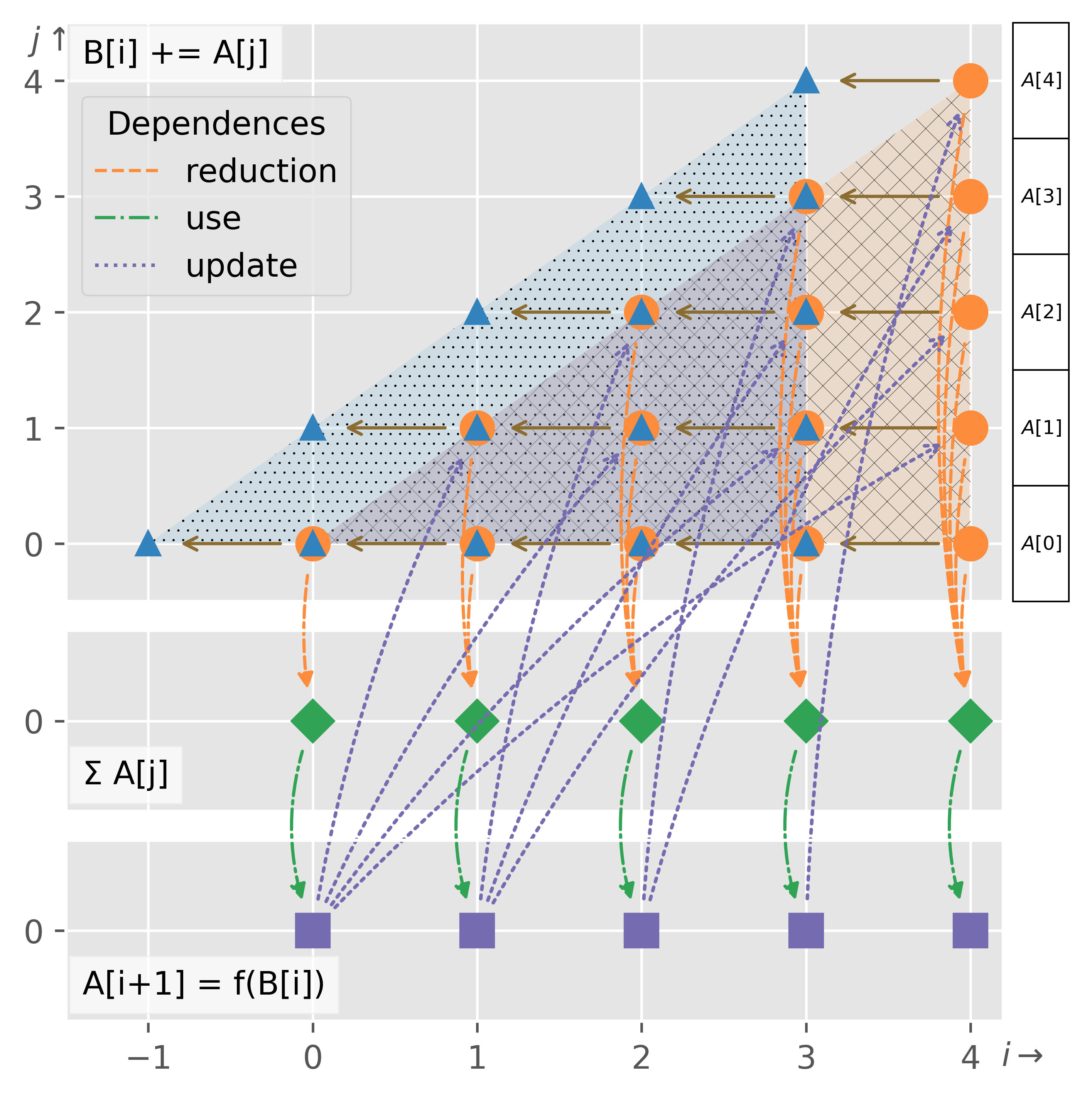}
		\caption{After shift}
		\label{fig:sminus1}
	\end{subfigure}%
	\begin{subfigure}[b]{0.5\linewidth}
		\centering
		\includegraphics[width=0.8\linewidth]{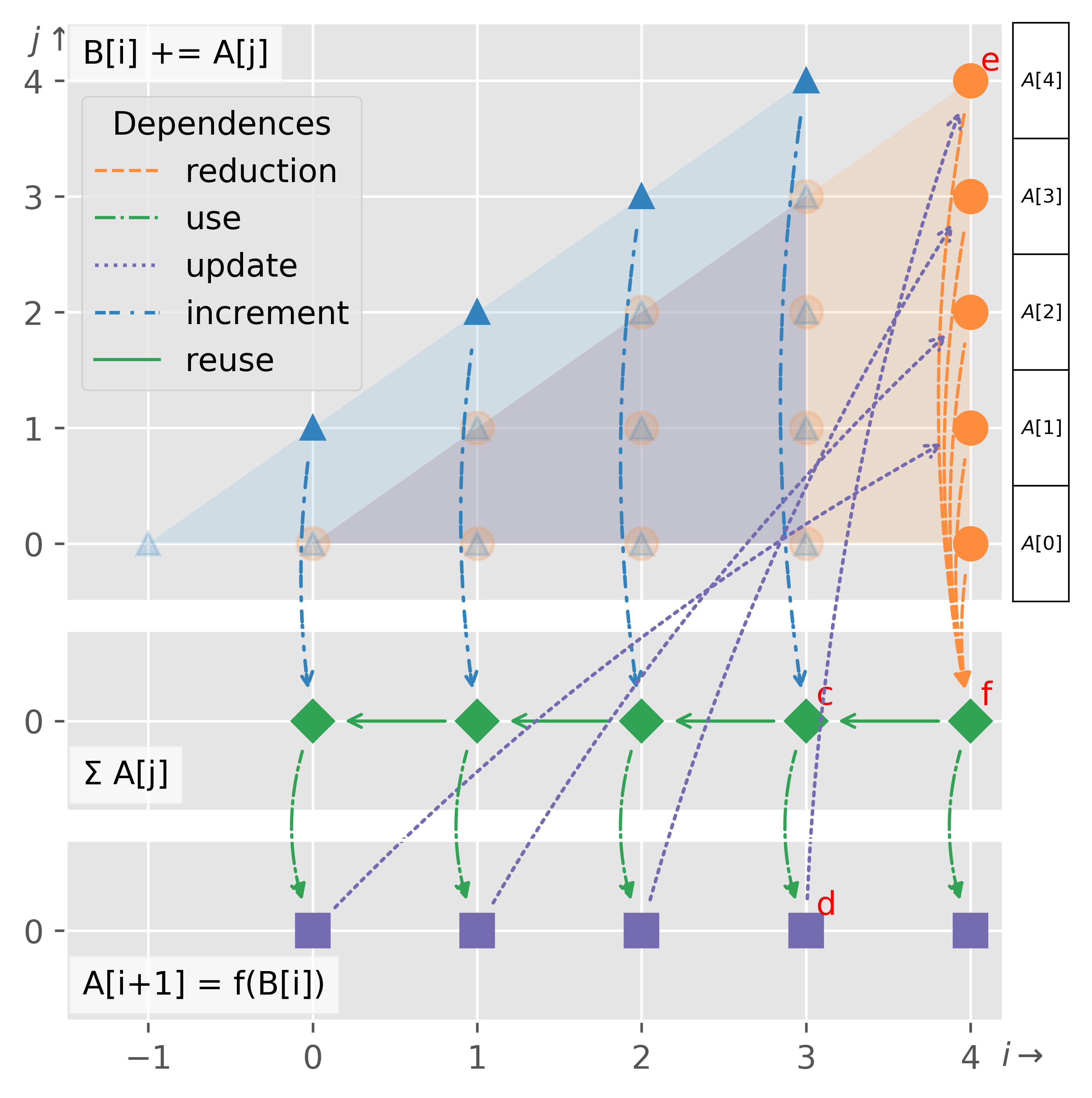}
		\caption{After transformation}
		\label{fig:sminus1dep}
	\end{subfigure}
	\caption{Incorrect optimization of prefix sum with depedent reduction}
	\label{fig:incorrectoptdiagram}
\end{figure*}
}
The two diagrams in
\Cref{fig:incorrectoptdiagram} illustrate an incorrect 
application of ST using \citet{sr}, which ignores the dependences due to 
dependent reduction.
In this case, instead of using the correct reuse vector $[1, 0]^\transpose$, this 
application of ST uses the vector $[-1, 0]^\transpose$.
This vector maps iteration instances $[i, j]$ to instances $[i-1, j]$.

\Cref{fig:sminus1} presents the intermediate step of ST.
Same as in the correct optimization's case, in this step, the algorithm chooses the a 
reuse vector and shifts the reduction's domain along the vector.
\Cref{fig:sminus1} illustrates the shift along the incorrect reuse vector, $[-1, 
0]^\transpose$, which maps iteration instances $[i,j]$ (round dots in the crosshatched 
polyhedron) to instances $[i-1, j]$ (triangles in the dot-shaded polyhedron).
Each solid arrow again represents the maping between the corresponding instances 
before and after the shift.
The triangles outlined in round dots again are the instances corresponding to 
redundant computations to be eliminated by ST.  

\Cref{fig:sminus1dep} presents the result of ST, where ST has, again, eliminated the 
redundant computations.
The top polyhedron now has the round dots at the rightmost column 
and the triangles along the hypotenuse of the shifted domain.
Note that the hypotenuse is restricted to the domain of projected 
domain of the reduction and does not include the point $[i, j] = [-1, 
0]$. 

To map the \Cref{fig:sminus1dep} to \Cref{eq:scan-opt-rev}:
the point $f$ and the round dot column in the top polyhedron in 
\Cref{fig:sminus1dep} correspond to 
the reduction that initializes $B[N-1]$ in \Cref{eq:scan-opt-rev-init}.
All points in the middle polyhedron except $d$ then correspond to 
\Cref{eq:scan-opt-rev-update}, i.e., each $B[i]$
is computed by subtracting the successor point $B[i+1]$ by $A[i]$. 

However, as mentioned in \Cref{sec:introduction}, \Cref{fig:sminus1dep}'s 
dependences 
form cycles (e.g., points $c, d, e, f$ form a cycle). 
Therefore, the transformed program does not have a 
valid 
schedule, and consequently the ST application along the reuse vector with mapping 
$[i, j] \rightarrow [i-1, j]$ is an incorrect optimization. 

\paragraph{Heuristic for Choosing a Valid Direction}
As we have seen from the previous illustration, it is important to choose a valid reuse 
vector for dependent reductions. 
In this work, we propose a heuristic algorithm for choosing a valid reuse vector.
Notably, one key difference between  
\Cref{fig:incorrectoptdiagram,fig:correctoptdiagram}
is the dependences drawn on the middle polyhedron. 
Specifically, in the middle polyhedron of \Cref{fig:correctoptdiagram}, the drawn 
dependences on $\texttt{B}[i]$ respect a \hlone{myellow}{sequential}, scheduled computation order of 
$\texttt{B}[i]$ of the 
original program in \Cref{fig:ms-orig}, whereas that of \Cref{fig:incorrectoptdiagram} 
disobeys that scheduled order.
This observation has inspired the heuristic algorithm, which 
always chooses the reuse vector that is consistent with a \hlone{myellow}{sequential} scheduled computation 
order of the left hand side of the reduction. 
We show that the reuse vector chosen with this algorithm is 1) always sound, and 2) 
guarantees optimality if each reduction operator in the target program has an inverse. 

\subsection{Simplying Dependent Reductions in Practice}
\label{sec:motivatingexample}
As we later show in \Cref{sec:eval} by studying a variety of benchmarks, 
dependent reductions appear in the specifications of many 
problems and algorithms across statistics, artificial intelligence (AI), and machine 
learning (ML) with applications to computer vision, physics, and medicine.  
However, the common practice is to develop these algorithms by hand. 
Therefore, our technique offers the opportunity to automatically translate a 
specification to an efficient implementation. 
In this section, we illustrate our technique on a clustering application 
used across statistics, AI, and ML.

\paragraph{Specification and Implementation} 
Consider the following specification of {\em Gibbs 
Sampling} \citep{gibbssampling} on a {\em two-cluster Gaussian Mixture Model}
\citep[see for example,][]{gmm} (GS-2GMM). 
This computation is designed to cluster data points such that similar data points, 
also called {\em observations}, 
are assigned to the same cluster.
The input to GS-2GMM is a float array $\texttt{Obs}$ that represents the
observations. 
The two-cluster Gaussian Mixture Model (GMM) assumes that each 
single observation belongs to one of the two clusters, and that each cluster follows a  
Gaussian distribution.
The Gibbs sampling procedure samples the array $\texttt{Z}$ that represents the 
cluster membership of all given observations. 
It does this by iteratively taking in an old cluster assignment for all 
observations, and resampling 
a new assignment by updating the individual assignment of a single observation. 
This process will produce
a stream of samples of $\texttt{Z}$s that approach the true
distribution of 
$\texttt{Z}$.
The mathematical specification of GS-2GMM is given in \Cref{eq:gs-2gmm}.
\begingroup
\begin{subequations}
\label{eq:gs-2gmm}
\small
\begin{align}
C_{zi} &= \sum_{\forall j \; \text{ s.t. } j \neq i \land Z_j = z } 1 
\label{eq:gs-2gmm-count}  
\quad,\forall z, i \\
S_{zi} &= \sum_{\forall j \; \text{ s.t. } j \neq i \land Z_j = z } 
\textit{obs}_i 
\label{eq:gs-2gmm-sum}  \quad,\forall z, i \\
    P_o(z, i) &= \mathcal{N}\big(\frac{S_{zi}}{C_{zi}}, (1 + C_{zi})^{-1} + 1 \big) \overset{\textit{\tiny{def.}}}{=} 
    P(\textit{obs}_i | \textit{obs}_{\setminus i}, Z_{\setminus i}, Z_i=z)
\label{eq:gs-2gmm-po-dist}  \\
    P_z(i) &=  \frac{ P_o(0, i)}{ P_o(0, i) + P_o(1, i)  } \overset{\textit{\tiny{def.}}}{=} P(Z_i = 0 | Z_{\setminus i}, \textit{obs}) 
\label{eq:gs-2gmm-pz-dist}  \\
    Z_i &\sim P_z(i) \quad, \forall i \in \{1 \ldots N\}
\label{eq:gs-2gmm-sample} 
\end{align}
\end{subequations}
\endgroup

In \Cref{eq:gs-2gmm-count,eq:gs-2gmm-sum}, 
$C_{0i}$ and $S_{0i}$ represent the counts and sums, respectively, of
all the observations except the one with index $i$, for which the current old 
assignment of  cluster membership is 0 
(and similarly for $C_{1i}, S_{1i}$, with membership of 1). 
\Cref{eq:gs-2gmm-po-dist} defines the function $P_o$ as an abbreviation of a normal distribution that represents the distribution of $\textit{obs}_i$ given all values of $\textit{obs}$ except $\textit{obs}_i$ and all current assignments of $Z$s except fixing $Z_i$ to be $z$.
We use the notation $\setminus i$ to denote the set $\{j \;|\; 1 \le j \le N \land j \neq i \}$.
\Cref{eq:gs-2gmm-pz-dist} defines the function $P_z$ as an abbreviation of a probability representing the chance tha $Z_i$ is equal to $0$ given all values of $\textit{obs}$ and all current assignments $Z$ except $Z_i$.
Lastly, \Cref{eq:gs-2gmm-sample} samples each $Z_i$ in order from its distribution.
Note that the exact computations required to perform Gibbs sampling are not important 
for understanding the optimization problem.

\Cref{lst:gs-2gmm-correct} gives an efficient implementation of the above 
mathematical specification -- notice that \Cref{lst:gs-2gmm-correct} computes 
the counts and sums incrementally, instead of forming the full reductions of 
\Cref{eq:gs-2gmm-count,eq:gs-2gmm-sum}. 
Deriving \Cref{lst:gs-2gmm-correct} from \Cref{eq:gs-2gmm} requires manually 
solving the simplifying dependent reductions problem which is tedious and 
error-prone. 

\paragraph{Our Approach}
\begingroup
\newcommand{\ieqtv}[3]{\ensuremath{(#1 = #2 ? #3 : 0)}}
\newcommand{\sampleref}{%
\Cref{eq:gs-2gmm-po-dist,eq:gs-2gmm-pz-dist,eq:gs-2gmm-sample}%
\label{line:gs-2gmm-correct-abstract-sample}}
\begin{clisting}[caption={Correct optimized GS-2GMM with dependent 
reduction; \texttt{(a ? b : c)} denotes\rline{ifthenelseC} if-then-else 
expression as in the C 
language},label=lst:gs-2gmm-correct,float={h},xleftmargin=4ex,aboveskip=1.5ex]
int[N] C0L, C1L, C0R, C1R = {0...} // Zero initialize
float[N] S0L, S1L, S0R, S1R = {0...} // Zero initialize
for(i = 1; i < N; i++)
	C0R[0] += (Z[i] == 0 ? 1 : 0)
	C1R[0] += (Z[i] == 1 ? 1 : 0)
	S0R[0] += (Z[i] == 0 ? Obs[i] : 0)
	S1R[0] += (Z[i] == 1 ? Obs[i] : 0)
for(i = 0; i < N; i++)
	// Sample according to `\sampleref`
	Z'[i] = sample(C0L[i] + C0R[i], C1L[i] + C1R[i], `
	\ifthenelse{\isundefined{\ispopl}}%
	{\newline\text{\phantom{xxxxxxxxxxxxxxx}S0L[i] + S0R[i], S1L[i] + S1R[i])}}%
	{\text{S0L[i] + S0R[i], S1L[i] + S1R[i])}}`
	// Incremental updates
	C0L[i] = C0L[i-1] + (Z'[i] == 0 ? 1 : 0)`\label{line:increment}`
	C1L[i] = C1L[i-1] + (Z'[i] == 1 ? 1 : 0)
	S0L[i] = S0L[i-1] + (Z'[i] == 0 ? 1 : 0)
	S1L[i] = S1L[i-1] + (Z'[i] == 1 ? 1 : 0)
	C0R[i] = C0R[i-1] - (Z[i] == 0 ? 1 : 0)
	C1R[i] = C1R[i-1] - (Z[i] == 1 ? 1 : 0)
	S0R[i] = S0R[i-1] - (Z[i] == 0 ? Obs[i] : 0)
	S1R[i] = S1R[i-1] - (Z[i] == 1 ? Obs[i] : 0)
\end{clisting}
\endgroup
Given an array-based representation of \Cref{eq:gs-2gmm}, our approach 
automatically produces \Cref{lst:gs-2gmm-correct}.
For conciseness of presentation, we consider the variable $S_{zi}$ with 
fixed $z=0$ as an example.
In this case \Cref{eq:gs-2gmm-sum} can be rewritten as sum of two variables $S_{0i} 
= \texttt{S0L}[i] + \texttt{S0R}[i]$, where $\texttt{S0L},\texttt{S0R}$ are given by 
\Cref{eq:gs-2gmm-s0l-reduction,eq:gs-2gmm-s0r-reduction}, respectively.
\rline{ifthenelseequation}\begin{subequations}
\small
\allowdisplaybreaks
\begin{align}
\texttt{S0L}[i] &= \sum_{j=0}^{j<i} (\text{if } \texttt{Z}[j] == 0 \text{  
then  } \texttt{Obs}[j] \text{  else  } 0) 
\label{eq:gs-2gmm-s0l-reduction}\\
\texttt{S0R}[i] &= \sum_{j=i+1}^{j<N} (\text{if } \texttt{Z}[j] == 0 \text{  
then  } \texttt{Obs}[j] \text{  else  } 0) 
\label{eq:gs-2gmm-s0r-reduction} \\
&\vdots\quad\text{Other equations...} \nonumber \\
\texttt{Z'}[i] &= \texttt{sample}(\texttt{S0L}[i] + \texttt{S0R}[i], ...) 
\label{eq:gs-2gmm-iter-update} 
\end{align}
\end{subequations}

The step of rewriting in terms of $\texttt{S0L}$ and $\texttt{S0R}$ is standard in 
polyhedral model compilation: the original domain with constraint $j \neq i$ is 
non-convex and it is standard to break it into two convex polyhedrons with 
constraints $j<i$ and $j>i$.
Further, we make the non-affine constraint $Z_j = z$ into a simple 
if-then-else 
expression guarding the reduction's body -- this is standard approach and same 
as the one proposed by \citet{polyhedralmodelismoreapplicable} to model 
non-affine constraints as control predicates. 

\Cref{eq:gs-2gmm-s0l-reduction,eq:gs-2gmm-iter-update} exactly 
correspond to \Cref{eq:ms-scan-reduction,eq:ms-scan-iter-update}, 
respectively, since they have the same data flow dependences 
\footnote{Although \Cref{eq:gs-2gmm-iter-update} contains 
\texttt{sample} that 
is stochastic and \Cref{eq:ms-scan-iter-update} contains \texttt{f} that 
is 
deterministic, they still have the same data flow dependences. }. 
Thus the technique walked through in \Cref{sec:mssr_example} also applies to 
\Cref{eq:gs-2gmm-s0l-reduction,eq:gs-2gmm-iter-update} to produce
a specification with efficient complexity.
Further, our technique is general in that it handles any dependent reduction, 
including \Cref{eq:gs-2gmm-s0r-reduction} with constraints $i+1\le j < 
N$, which are the reverse of the constraints in 
\Cref{eq:gs-2gmm-s0l-reduction} .
Lastly, the same analysis can be applied to all cases of $C_{zi}$ and $S_{zi}$ with 
$z=0$ or $z=1$.
The analyses in total produces eight intermedieate variables, namely 
$\texttt{C0L}$, 
$\texttt{C1L}$, 
$\texttt{C0R}$, $\texttt{C1R}$, $\texttt{S0L}$, $\texttt{S1L}$, $\texttt{S0R}$, 
$\texttt{S1R}$, which produce \Cref{lst:gs-2gmm-correct} by applying our technique and compiling to exectuable code.

\paragraph{Results} Our evaluation in \Cref{sec:eval} shows that our technique produces an optimal 
complexity algorithm for Gibbs Sampling on the Gaussian Mixture Model, 
matching that of a manually developed implementation, and yielding a $7.1$x 
performance improvement over a naive, unoptimized implementation.
These results demonstrate the opportunity to automatically compile high-level 
specifications that include dependent reductions to efficient 
implementations.

\section{Background: Polyhedral Model}
\label{sec:background}

In this section, we review the terminology from the {\em polyhedral model} 
\todo{citationo} that we use in this work. \todo{MC: explain the point of the 
polyhedral model}
The polyhedral model represents a program by a set of statements, and for each statement, an associated {\em polyhedral set} known as the statement's {\em domain}. Each point in a polyhedral set corresponds to one concrete execution instance of the statement. 

\begingroup
\newcommand{\btmp}{\texttt{BTmp}}
\newcommand{\reusevec}{$[1,0]^\transpose$}
\newcommand{\SOneFinalizeAndSTwo}{
	\texttt{S1Fin}: \texttt{B}[i] = \texttt{\btmp}[i] : 
	\polyhedral{[i] : 0  \le i < N } \\
	\texttt{S2}: \texttt{A}[i+1] = f(\texttt{B}[i])  : 
	\polyhedral{[i] : 0  \le i < N - 1 }}
\newcommand{\lstleft}{
	\arraycolsep=0.3pt
	\begin{array}{l}
	\texttt{S1}: \texttt{\btmp}[i] \mathrel{+}=\texttt{A}[j]  : 
	\polyhedral{[i, j] : 0 \le i < N \wedge 0  \le j \le i } \\
	\SOneFinalizeAndSTwo
	\end{array}
}
\begin{clisting}[caption={\Cref{eq:ms-scan} in the polyhedral IR}, 
numbers=none,xleftmargin=0pt,label={lst:ms-scan-ir},float={t},aboveskip=1.5ex]
`
$\lstleft$
`
\end{clisting}
\endgroup

\Cref{lst:ms-scan-ir} gives an example of \Cref{eq:ms-scan} our polyhedral intermediate representation.
Each line is a polyhedral statement to which we have affixed a label to the left: 
\texttt{S1}, \texttt{S1Fin}, and \texttt{S2}, respectively. The polyhedral statement
\texttt{S1} corresponds to \Cref{eq:ms-scan-reduction}, and  \texttt{S2} corresponds to \Cref{eq:ms-scan-iter-update}.
To aid understanding, we have added a statement, \texttt{S1Fin}, that does not map directly to an equation
in \Cref{eq:ms-scan}. The statement denotes the completion of the computation (the reduction) for each element of  
 \texttt{B} and corresponds to the middle polyhedra in \Cref{fig:ms-orig,fig:incorrectoptdiagram,fig:correctoptdiagram}.

\subsection{Polyhedral Representations}

\paragraph{Polyhedral Set} 
The notation that follows the second colon of each line of \Cref{lst:ms-scan-ir} 
denotes the 
{\em polyhedral set} that defines the statement's domain. The statement
executes once for each point in the set. We introduce the following definition and 
notation for a polyhedral set; the notation is consistent with the
Integer Set Library (ISL) \citep{isl}'s notation. 

\begin{definition}[{\em Polyhedral set}]\label{def:polyhedralset}
A polyhedral set $\mathcal{P}$ is defined as $\polyhedral[\vec{p}]{[\vec{x}] : M \cdot 
[\vec{x}, 
\vec{p}, 
1]^\transpose \ge 
	\vec{0}}$, which consists of  a vector\rline{tupletovector1} of parameters  
	$[\vec{p}]$ , 
	 a 
	vector template $[\vec{x}]$, and  a {\em system of affine inequalities} $M \cdot 
			[\vec{x}, \vec{p}, 1]^\transpose 
			\ge \vec{0}$, where $M$ is an $m \times 
		(|\vec{x}|+ |\vec{p}| + 1)$ matrix of 
	constant 
	integers.
In addition, $\polyhedral[\vec{p}]{[\vec{x}]}$ is the called the space of  $\mathcal{P}$. 
\end{definition}

A polyhedral set provides an intensional description of a set of tuples, templated by $[\vec{x}]$, so that all tuples in the set satisfy the system of affine inequalities.
The set is optionally parametric in $[\vec{p}]$, if $[\vec{p}]$ is not empty.
For example, the polyhedral set for statement \texttt{S1Fin} in \Cref{lst:ms-scan-ir} is $\polyhedral[N]{[i]: \begin{bmatrix} 1 & 0 & 0 \\ -1 & 1 & -1 \end{bmatrix} 
\cdot \begin{bmatrix} i \\ N \\ 1 \end{bmatrix} \ge \vec{0}
  }$ that denotes the set of integer values of $i$ from 0 to $N - 1$. %
Specifically, each row of $M$ denotes an inequality. 
Therefore, the inequalities in this example are $(i \ge 0) \land (i \le N - 1)$ --- or simply the shorthand $0 \le i < N$.
An equality $i = 0$ is shorthand\rline{shorthand} for the conjunction of two inequalities $(i \ge 0) \land (-i \ge 0)$. 

\paragraph{Polyhedral Relation} 
\begin{wrapfigure}[8]{r}{0.35\linewidth}
	\vspace{-4ex}
	\centering
	\raisebox{0pt}[\dimexpr\height-1.\baselineskip\relax]%
{\includegraphics[width=\linewidth]{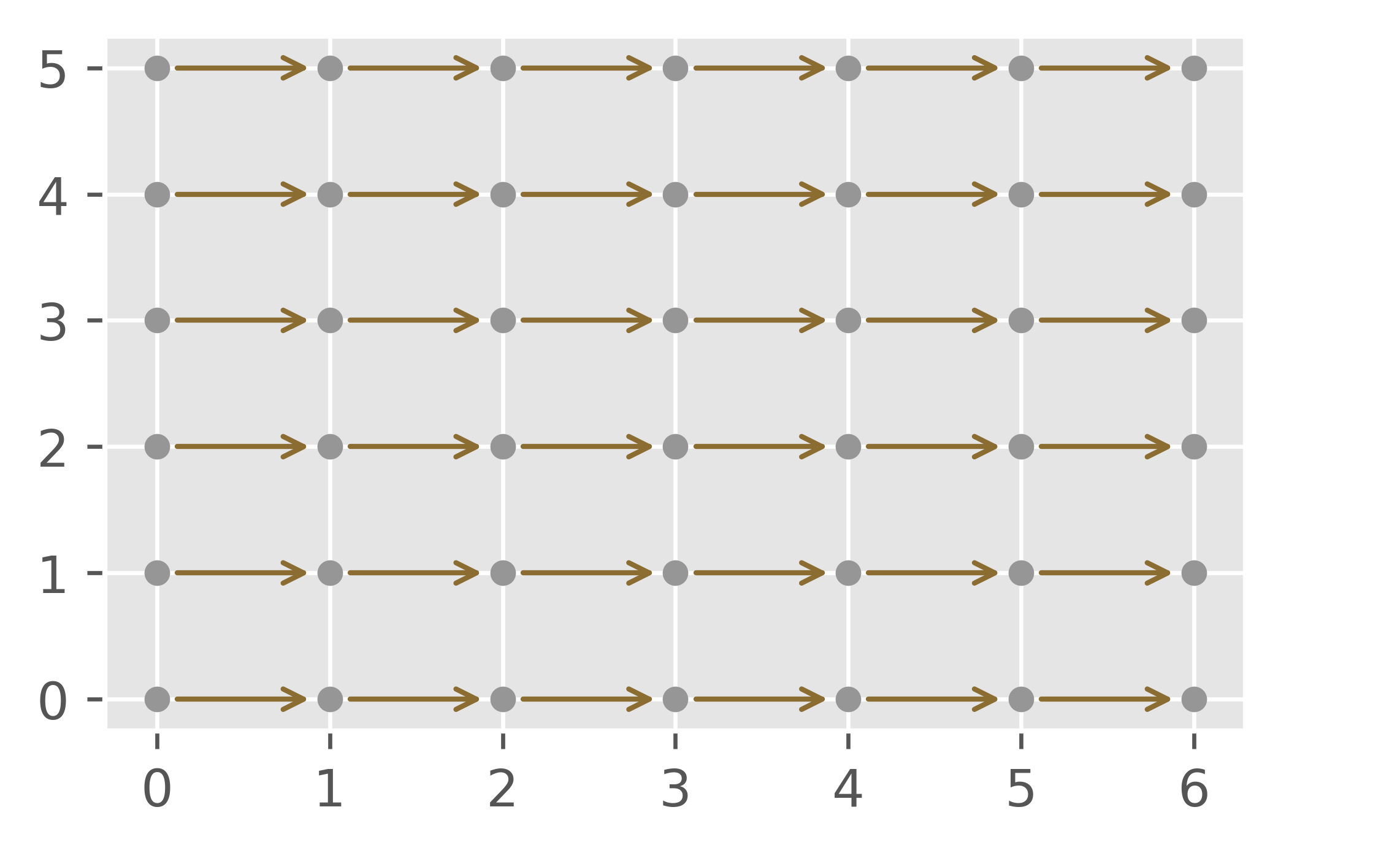}}
	\caption{Plot of example the polyhedral 
	relation\rline{recallpolyhedralrelation} $\polyhedral[N]{[i, j] 
	\rightarrow [i + 1, j]: 0 \le i < N, 0 \le j < N}$}
	\label{fig:polyhedral_relation}
\end{wrapfigure}
To give semantics to our polyhedral intermediate representation (e.g. 
\Cref{lst:ms-scan-ir}), we also introduce the 
following definition for a {\em polyhedral relation}, also in accordance 
with ISL's notation.

\begin{definition}[{\em Polyhedral relation}]\label{def:polyhedralrelation}
A polyhedral relation $\polyhedral[\vec{p}]{ [\vec{x_1}] \rightarrow [\vec{x_2}] : M 
\cdot 
[\vec{x_1}, 
\vec{x_2}, \vec{p}, 1]^\transpose 
\ge \vec{0}}$  contains a vector\rline{tupletovector2} of parameters 
$[\vec{p}]$, vector 
templates $[\vec{x_1}], [\vec{x_2}]$ 
, and a system of affine inequalities $M \cdot [\vec{x_1}, \vec{x_2}, \vec{p}, 
1]^\transpose \ge \vec{0}$.
\end{definition}

A polyhedral relation describes a set of binary relations mapping from $[\vec{x_1}]$ to 
$[\vec{x_2}]$, for every $[\vec{x_1}]$-$[\vec{x_2}]$ pair that satisfies the system of 
affine 
inequalities; a polyhedral relation can also be parametric in $[\vec{p}]$.
For example, 
$\polyhedral[N]{[i, j] \rightarrow [i + 1, j]: 0 \le i < N \land 0 \le j < N}$ 
denotes the 
relation that maps every integer tuple $[i, j]$ to $[i+1, j]$ within an $N$-by-$N$ grid. 
\Cref{fig:polyhedral_relation} visualizes this relation for $N=5$: the 
arrows map points corresponding to integer tuples to their right successors.

For aesthetic reasons, we omit the parameter $[\vec{p}]$ when it is clear which 
identifiers are parameters.

\subsection{Polyhedral Representation of a Program}
\label{sec:ir}

\paragraph{Syntax}
\newcommand{\reduceeq}{\ensuremath{\mathrel{\oplus}=}}
\newcommand{\rdeq}{$\reduceeq$}
\begin{wrapfigure}[4]{r}{0.3\linewidth}
\vspace{-1ex}
\centering
\newcommand{\galt}{\ensuremath{\;\mid\;}}%
\begin{minipage}{0.3\linewidth}
\vspace{-30pt}
\begin{align*}
P &:= S + \\
S &:= X \; ( = \galt \reduceeq ) \; E 
\;\texttt{:}\; 
\mathcal{P} \\
E &:= E ? \oplus E \galt X \galt c \\
X &:= a \; \texttt{[} \; (A \texttt{,}){*} \; A \; \texttt{]}
\end{align*}
\end{minipage}
\caption{IR Grammar}
\label{lst:irstmt}
\end{wrapfigure}
Following the formalization by the original SR work \citep{sr, alphaz}, we 
use 
an equation-based representation of program in this work, presented in grammar by 
\Cref{lst:irstmt}.
We explain each component in turn:

\begin{description}[font=$\bullet$~\normalfont, leftmargin=0.3cm]

\item [$P$] is a program that consists of multiple statements.

\item [$S$] is a statement that consists of a left hand side array access $X$, a middle 
assignment 
operator (i.e. either $=$ or $\reduceeq$ ) , a right hand side expression (i.e. $E$), and 
a domain (i.e. $\mathcal{P}$). 
A statement is called a normal statement when the middle assignment operator is 
plain $=$; 
it is called a {\em reduction} when the middle assignment operator is $\reduceeq$.
The reduction operator $\oplus$ is associative and commutative, and 
has an identity (e.g. $0$ is the identity addition, and $1$ is the identify 
for multiplication).

\item [$E$] is an expression that is either an unary or binary operator applied on 
expression(s), an array access (i.e. $X$), or a constant.

\item[$A$] is an affine expression, a kind of expression that applies an affine 
transformation to 
variables and produces a scalar. It references only variables in $\vec{x}$ or $\vec{p}$, 
where 
$\polyhedral[\vec{p}]{[\vec{x}]}$ is the space of $\mathcal{P}$. 

\item [$X$] is an affine array access that consists of an array identifier (i.e. $a$), 
and a (comma seperated) list of affine expressions (i.e. $A$s).
A list of affine expressions of length $n$ can be expressed mathematically as an 
affine 
transformation $M \cdot [\vec{x}, \vec{p}, 1]^\transpose$, where $M$ is a constant $n 
\times 
(|\vec{x}| + |\vec{p}| + 1)$ integer matrix and $\vec{x}, \vec{p}$ are defined same as 
those 
for an affine expression $A$. 

\item [\(\mathcal{P}\)] is a polyhedral set representing the statement's domain. 
Each point in the domain corresponds to one concrete execution instance of the 
statement. 
If $\mathcal{P}$ is $\polyhedral[p]{[t] : e}$, then $p$ corresponds to the 
set of parameters of the program and $t$ corresponds to the loop variables of 
the statement. 
\end{description}

\paragraph{Semantics}
We use usual semantics from array languages \citep{alphaz} for our IR. Specifically, a 
statement is evaluated under each point of its domain \(\mathcal{P}\). An expression 
is 
evaluated under a point by substituting the free variables of the expression with the 
instantiated 
values of those variables under that point. 
For example,\rline{fixfuntionalnotation} an expression a[N - i + j + 1] 
with domain 
$\polyhedral[N]{[i, 
j] : 0 \le i < N \land 0 \le j < N}$, evaluates to the value of a[9] at the 
point $[i, j]^\transpose = [1, 0]^\transpose$ and given that $N=10$.

If the statement is a normal assignment, for each point in $\mathcal{P}$, 
the right hand side 
expression is evaluated and assigned to the left hand side array. 

If the statement is a reduction, 
we first define the {\em projection} of a reduction to be a polyhedral relation that 
maps points in $\mathcal{P}$ to their accessed indices of the left hand side array.
For example, for the reduction \texttt{S1} in \Cref{lst:ms-scan-ir}, the projection is 
the polyhedral relation $\polyhedral[N]{[i, j] \rightarrow [i] : 0 \le i < N \land 0 \le j 
\le i}$. 
Then, for each point $p \in \mathcal{P}$ the right hand side 
expression is evaluated and accumulated into the left hand side 
array at point $p' = proj(p)$ 
using the operator $\oplus$, where $\textit{proj}$ is the projection of the 
reduction.

\paragraph{Computability}
In general,\rline{ircomputability} programs in the equation-based 
representation 
(\Cref{lst:irstmt}) may not be computable due to cyclic dependences. 
\citet{sarecomputability} 
showed that detecting cycles for these programs is undecidable.

In this work, we focus on programs in the equational-based 
representation (\Cref{lst:irstmt}) that do not have 
cyclic dependences, admit legal schedules, and are always computable.

\newcommand{\conjQ}{\ensuremath{\bigwedge_{\forall i} \mathbf{Q}_i}}

\subsection{Polyhedral Model Scheduling}
\label{sec:ilp_schedule}

Scheduling is a step in the polyhedral model where a scheduling function assigns each point in a statement's domain a timestamp, denoting the order of all execution instances.
The task of scheduling a program in the polyhedral model is to find a schedule $\Theta$ for the program such that the schedule timestamps for all statements satisfy the dependence relations of the program.

\subsubsection{Dependence Analysis}

In scheduling, dependence analysis compute the
{\em dependence relations} of the program.
Here we recall concepts in dependence analysis and breifly describe how to 
compute the dependence relation bewteen two statements in the program.

\paragraph{Access Relation}
An {\em access relation} is a polyhedral relation mapping from the space of the 
domain of a statement to the space of an accessed array. An access relation can either 
be a {\em write access 
relation} (when the access is a write to an array on the left hand side of a statement), 
or a read access relation 
(when the access is a read to an array on the right hand side of a statement).
Let $\textit{array}[A_1, \ldots, A_k]$ be an array reference in a statement with space 
$\polyhedral[\vec{p}]{[\vec{x}]}$, and the list of affine expressions $A_1, \ldots, A_k$ 
is expressed as $M 
\cdot [\vec{x}, \vec{p}, 1]^\transpose$. The 
access relation for this array access is $\polyhedral[p]{[\vec{x}] \rightarrow [\vec{y}] : 
M \cdot 
[\vec{x}, \vec{p}, 1]^\transpose = \vec{y}  }$. 

For example, in \Cref{lst:ms-scan-ir}, since each iteration instance $[i, j]$ of 
\texttt{S1} 
writes 
to  $\texttt{BTmp}[i]$, the write access relation of \texttt{S1} is
$\polyhedral[N]{[i,j] \rightarrow [i] : 0 \le i < N \land 0 \le j < i }$.
Also, since each iteration instance $[i, j]$ of \texttt{S1} reads $\texttt{A}[j]$, the 
read access relation of \texttt{S1} is 
 $\polyhedral[N]{[i,j] \rightarrow [j] : 0 \le i < N \land 0 \le j < i }$.

\paragraph{Array SSA}
Following \citet{sr}, our IR requires the program to be in {\em array 
static-single-assignment} 
(Array SSA) form \citep{feautrierarrayexpansion}; that is, each array element is never 
written 
twice during program execution. 
This means for each unique left hand side array, and the statements 
$\text{S}_0 ... \text{S}_k$ 
that write to it, $\bigcap_i \mathcal{W}_{\text{S}_i} = \emptyset$, where 
$\mathcal{W}_{\text{S}_i}$ is the 
write access relation for $\text{S}_i$. 

\paragraph{Dependence Relation}
\label{sec:dependencerelation}
Any two statements $S_1, S_2$ must satisfy a dependence relation 
represented by a 
polyhedral 
relation 
$\mathcal{D}_{S_1, S_2} = \polyhedral[\vec{p}]{[\vec{x}^{S_1}] 
\rightarrow 
[\vec{x}^{S_2}]: M^{\mathcal{D}_{S_1, S_2}} \cdot \begin{bmatrix} 
\vec{x}^{S_1}, 
\vec{x}^{S_2}, 
\vec{p}, 1 
\end{bmatrix}^\transpose \ge 
\vec{0}}$, where $M^{\mathcal{D}_{S_1, S_2}} $ is the {\em dependence
matrix}. 
The dependence relation $\mathcal{D}_{S_1, S_2}$ describes the 
happens before relation between iterations of $S_1$ and $S_2$. 
For a pair of statements $S_1, S_2$, let $S_1$ write to an array $a$ and
let $S_2$ read from $a$. The dependence relation $\mathcal{D}_{S_1, S_2}$ is 
equal to 
$\mathcal{R}^{-1} \circ \mathcal{W}$, where $\mathcal{R}$ and $\mathcal{W}$ 
are the two statements' read and write access relations w.r.t. $a$, respectively. 
$\mathcal{R}^{-1}$ denotes 
the inverse of the 
polyhedral relation $\mathcal{R}$ and $\circ$ denotes composition. 
Previous work \citep{fuzzyarray, ondemandarraydatafolow} and a textbook 
\citep{presburgerformulaandpolyhedralcompilation} contain detailed 
introductions to dependence analysis techniques, which we refer the 
reader to for deeper exposure.

\subsubsection{Scheduling Function}
\label{sec:schedulefn}
Here we recall the definition of a program's schedule.

\begin{definition}[Schedule Timestamp]\label{def:scheduletimestamp}
A schedule timestamp is an $m$-dimensional vector, where $m$ is the upper bound 
on the 
dimension of 
the schedule. 
For two timestamps $\tau_1$ and $\tau_2$, $\tau_1 < \tau_2$ 
($\tau_1$ happens before $\tau_2$) iff 
$\tau_1[i] < \tau_2[i]$ where $i$ is the first non-equal index between 
$\tau_1, \tau_2$. 
\end{definition}

A schedule $\Theta$ for a program is a collection of {\em scheduling 
functions}, one for each statement.
A scheduling function for a statement $S$ is an affine transformation represented by 
the 
matrix $\Theta^S$, which maps statement $S$'s domain to its 
scheduling 
timestamp.  
For a statement $S$ with domain in space 
$\polyhedral[\vec{p}]{[\vec{x}^S]}$,
the schedule function is an $m \times (|\vec{x}^S|+ |\vec{p}| + 1)$ matrix. 
The statement's $m$ dimensional timestamp $\tau^S$ is given by:
\begin{equation}
\tau^S 
= \Theta^S \cdot \begin{bmatrix} \vec{x}^S \\ \vec{p} \\ 1 
\end{bmatrix} 
= 
\begin{bmatrix}
\Theta_{1, 1} & ... & \Theta_{1, |\vec{x}^S| + |\vec{p}| + 1 } \\
\vdots & ... & \vdots \\
\Theta_{m, 1} & ... & \Theta_{m, |\vec{x}^S| + |\vec{p}| + 1 } \\
\end{bmatrix}\cdot
\begin{bmatrix} \vec{x}^S \\ \vec{p} \\ 1 \end{bmatrix}
\end{equation}

\subsubsection{ILP Formulation Of Scheduling}
\label{sec:ilp_scheduling}
Early works 
\citep{someeffsol2affschedpt1,someeffsol2affschedpt2} 
gave a greedy algorithm to the scheduling problem and provided the 
foundation of the scheduling problem.
\citet{ilpschedulemultidim} formalized the scheduling problem for 
obtaining an $m$-dimensional schedule as the following single convex 
problem:
\begin{subequations}\label{eq:ilpscheduleconvex}
\begin{align}
\forall \mathcal{D}_{S_1, S_2}, & \forall k \in \{1...m\},  
\delta_k^{\mathcal{D}_{S_1, 
S_2}} \in \{0, 1\} \label{eq:ilpscheduleconvexa} \\
\forall \mathcal{D}_{S_1, S_2}, & \sum_{k = 1}^m  
\delta_k^{\mathcal{D}_{S_1, S_2}} = 1 \label{eq:ilpscheduleconvexb} \\
\begin{split}
\label{eq:ilp_scheduling_before_farkas}
\forall \mathcal{D}_{S_1, S_2}, & \forall k \in \{1 ... m\}, \forall 
[\vec{x}^{S_1}, \vec{x}^{S_2}, \vec{p}] 
\in \mathcal{D}_{S_1, S_2} \\
& \Theta^{S_2}_k \cdot \begin{bmatrix} \vec{x}^{S_2} \\ \vec{p} \\ 1 
\end{bmatrix}
- \Theta^{S_1}_k \cdot \begin{bmatrix} \vec{x}^{S_1} \\ \vec{p} \\ 1 
\end{bmatrix} 
\ge \delta_k^{\mathcal{D}_{S_1, S_2}} - \sum_{i=1}^{k-1} 
\delta_i^{\mathcal{D}_{S_1, S_2}}  (K \vec{p} 
+ K)
\end{split}
\end{align}
\end{subequations}
where $\Theta^{S}_k$ denotes the $k$--th row of the matrix $\Theta^{S}$.

In words, \Cref{eq:ilpscheduleconvexa} creates a binary variable 
$\delta_k^{\mathcal{D}_{S_1, S_2}}$ for each dependence relation in the program
and for each of the $k \in \{1 ... m\}$ dimensions of the schedule.\footnote{If 
the dependence\rline{unionofdeprelation} relation between $S_1, S_2$ is 
a union of polyhedral 
relations, then we consider each piece of the union as distinct and set up the 
constraints in \Cref{eq:ilpscheduleconvex} for each piece of the union.}
Each variable models the comparison between 
the two $m$-dimensional timestamps of statements $S_1$ and $S_2$ at the $k$-th dimension.
\Cref{eq:ilpscheduleconvexb} specifies that $\delta^{\mathcal{D}_{S_1, 
S_2}}_k$ is only strongly satisfied once for all $k$.
\Cref{eq:ilp_scheduling_before_farkas} encodes the constraint that the 
schedule functions $\Theta^{S_1}$ and $\Theta^{S_2}$ must satisfy that $\vec{x}^{S_1}$ is scheduled before $\vec{x}^{S_2}$, if the dependence $\vec{x}^{S_1} \rightarrow \vec{x}^{S_2}$ exists. 
Specifically, the dependence $\vec{x}^{S_1} \rightarrow \vec{x}^{S_2}$ exists if $[\vec{x}^{S_1}, \vec{x}^{S_2}, \vec{p}]^\transpose \in \mathcal{D}_{S_1, S_2}$.  
The variable $K$ is a known constant obtainable from the original program, and is an 
upper bound modeling technique to make the problem convex.

\citet{ilpschedulemultidim} show that this problem is equivalent to an ILP thanks to Farkas' Lemma \citep{theoryofilp}.
Specifically, let $M^{\mathcal{D}_{S_1, S_2}}$ denote the constraint 
matrix of 
$\mathcal{D}_{S_1, S_2}$ given in \Cref{def:polyhedralrelation} (i.e. 
$\mathcal{D}_{S_1, S_2}$ is 
$\polyhedral[\vec{p}]{ 
[\vec{x}^{S_1}] \rightarrow [\vec{x}^{S_2}] : M^{\mathcal{D}_{S_1, S_2}} 
\cdot 
[\vec{x}^{S_1}, 
\vec{x}^{S_2}, \vec{p}, 1]^\transpose 
\ge \vec{0}}$).
Farkas' Lemma makes it possible to introduce a vector of integer variables, 
$\Lambda^{\mathcal{D}_{S_1, S_2}}_k$, of length $n$, where $n$ is one plus the 
number of rows of $M^{\mathcal{D}_{S_1, S_2}}$,  for all $\mathcal{D}_{S_1, S_2}$ and 
$k \in \{1...m\}$.
It is then possible to expand \Cref{eq:ilp_scheduling_before_farkas} to the following 
constraints. %
\begin{subequations}
\label{eq:ilp_scheduling_after_farkas}
\medmuskip=2mu
\begin{align}
\nonumber & \forall \mathcal{D}_{S_1, S_2}, \forall k \in \{1 ... m\}, \\
& \Lambda^{\mathcal{D}_{S_1, S_2}}_k \ge 0 \\
& \Theta^{S_2}_k \cdot \begin{bmatrix} \vec{x}^{S_2} \\ \vec{p} \\ 1 
\end{bmatrix}
- \Theta^{S_1}_k \cdot \begin{bmatrix} \vec{x}^{S_1} \\ \vec{p} \\ 1 
\end{bmatrix} 
- \delta_k^{\mathcal{D}_{S_1, S_2}} 
+ \sum_{i=1}^{k-1} \delta_i^{\mathcal{D}_{S_1, S_2}}  (K \vec{p} + K) 
= \left(\Lambda^{\mathcal{D}_{S_1, S_2}}_k\right)^\transpose \cdot 
\begin{bmatrix}
M^{\mathcal{D}_{S_1, S_2}}  \\ \vec{0} \quad 1
\end{bmatrix}\cdot 
\begin{bmatrix} \vec{x}^{S_1} \\ \vec{x}^{S_2} \\ \vec{p} \\ 1 
\end{bmatrix}
\label{eq:ilp_scheduling_after_farkas_before_equate}
\end{align} 
\end{subequations}
By equating the left and right side of
\Cref{eq:ilp_scheduling_after_farkas_before_equate} for all coefficients 
of 
$\vec{x}^{S_1}, \vec{x}^{S_2}, \vec{p}$ and the constant terms produces 
the 
desired affine 
constraints in the ILP formulation of scheduling.
Solving the above formulation produces the desired schedule 
coefficients $\Theta$ in 
\Cref{sec:schedulefn}. \todo{This needs attention.}

\subsubsection{Example}
\label{sec:ilp_schedule_example}
\vspace*{-4pt}
\begin{changed}{mblue}
We\rline{ilpschedulexample} give an example of the above ILP 
formulation of scheduling, using 
the example from \Cref{sec:introduction} (\Cref{eq:ms-scan} and its IR 
form in \Cref{lst:ms-scan-ir}). 
Let $S_1$ refer to statement \texttt{S1} and 
$S_2$ refer to statement 
\texttt{S2} in \Cref{lst:ms-scan-ir}.
The dependence relations between $S_1$ and $S_2$ are:
\begin{subequations}
\begin{align}
\mathcal{D}_{S_1, S_2} &= \polyhedral[N]{ 
[i^{S_1}, j^{S_1}] \rightarrow [i^{S_2}] : i^{S_1} = i^{S_2} \land 0 \le i^{S_1} 
< N 
- 1 \land 0 \le j^{S_1} \le i^{S_1} } \\
\mathcal{D}_{S_2, S_1} &= \polyhedral[N]{ 
[i^{S_2}] \rightarrow [i^{S_1}, j^{S_1}] : i^{S_1} = i^{S_2}+1 \land 0 \le 
i^{S_2} < N - 1 \land 0 \le j^{S_1} \le i^{S_2}  }.
\end{align}
\end{subequations}
The corresponding constraint matrices for $\mathcal{D}_{S_1, S_2} $ and 
$\mathcal{D}_{S_2,S_1} $ are given in \Cref{eq:ilp_example_adst}.

\begin{subequations}
\label{eq:ilp_example_adst}
\small
\begin{minipage}{0.45\linewidth}
\begin{equation}
M^{\mathcal{D}_{S_1, S_2}} = 
\begin{bmatrix}
  -1 & 0 & 0 & 1 & -2\\
  0 & 1 & 0 & 0 & 0\\
  1 & -1 & 0 & 0 & 0\\
  -1 & 0 & 1 & 0 & 0\\
  1 & 0 & -1 & 0 & 0\\
\end{bmatrix}
\end{equation}
\end{minipage}\hfil
\begin{minipage}{0.5\linewidth}
\begin{equation}
M^{\mathcal{D}_{S_2,S_1}} = 
\begin{bmatrix}
  -1 & 0 & 0 & 1 & -2\\
  0 & 0 & 1 & 0 & 0\\
  1 & 0 & -1 & 0 & 0\\
  -1 & 1 & 0 & 0 & -1\\
  1 & -1 & 0 & 0 & 1\\
\end{bmatrix}
\end{equation}
\end{minipage}
\end{subequations}
\vspace{.25em}

Assuming the schedule timestamp's dimension is $m = 3$ and $K$ is a 
large enough constant, we formulate the dependence constraints as 
in \Cref{eq:ilp_scheduling_before_farkas,eq:ilp_scheduling_after_farkas} by \Cref{eq:ilp_scheduling_example}.
\begin{subequations}
\small
\allowbreak
\newcommand{\DST}{\mathcal{D}_{S_1,S_2}}
\newcommand{\DTS}{\mathcal{D}_{S_2,S_1}}
\label{eq:ilp_scheduling_example}
\medmuskip=2mu
\begin{align}
0 \le \delta^{\DST}_k \le 1 \quad\quad 
0 \le \delta^{\DTS}_k \le 1\hphantom{xxxxx} 
& \forall k \in  \{1...3\} \label{eq:ilp_scheduling_example_delta_setup1}\\
\sum_{k=1}^3 \delta^{\DST}_k = 1 \quad\quad
\sum_{k=1}^3 \delta^{\DTS}_k = 1 
\hphantom{xxxxx} 
& \label{eq:ilp_scheduling_example_delta_setup2}\\
\Lambda^{\DST}_k \ge 0 \quad\quad
\Lambda^{\DTS}_k \ge 0 
\hphantom{xxxxxxxx} & \forall k 
\in  \{1...3\} \label{eq:ilp_scheduling_example_lambda_setup}\\
\left\lbrace
\begin{array}{@{}l@{}r@{}l}%
i^{S_1}: & - \Theta_{k,1}^{S_1} &= \Lambda^{\DST}_{k} \cdot 
[-1, 0, 1, -1, 1, 0] \\
j^{S_1}: & - \Theta_{k,2}^{S_1} &= \Lambda^{\DST}_{k} \cdot 
[0, 1, -1, 0, 0, 0] \\
i^{S_2}: & \Theta_{k,1}^{S_2} &= \Lambda^{\DST}_{k} \cdot 
[0, 0, 0, 1, -1, 0]\\
N : & \Theta^{S_2}_{k,2} - \Theta^{S_1}_{k, 3} + K \cdot 
\sum_{i=1}^{k-1} 
\delta_i^{\DST} &= 
\Lambda^{\DST}_{k} \cdot [1, 0, 0, 0, 0, 0] \\
1: & \Theta^{S_2}_{k, 3}  - \Theta^{S_1}_{k, 4} 
- \delta_k^{\DST} 
+ K \cdot \sum_{i=1}^{k-1} \delta_i^{\DST} &= 
\Lambda^{\DST}_{k} \cdot [-2, 0, 0, 0, 0, 1]
\end{array}
\right. & \forall k \in \{1...3\} \label{eq:ilp_scheduling_example_dst}\\
\left\lbrace
\begin{array}{@{}l@{}r@{}l}%
i^{S_2}: & -\Theta_{k,1}^{S_2} &= \Lambda^{\DTS}_{k} \cdot 
[-1, 0, 1, -1, 1 ,0]\\
i^{S_1}: & \Theta_{k,1}^{S_1} &= \Lambda^{\DTS}_{k} \cdot 
[0, 0, 0, 1, -1, 0]\\
j^{S_1}: & \Theta_{k,2}^{S_1} &= \Lambda^{\DTS}_{k} \cdot 
[0, 1, -1, 0, 0, 0]\\
N : & \Theta^{S_1}_{k,3} - \Theta^{S_2}_{k, 2} + K \cdot 
\sum_{i=1}^{k-1} 
\delta_i^{\DTS} &= 
\Lambda^{\DTS}_{k} \cdot [1, 0, 0, 0, 0, 0]\\
1: & \Theta^{S_1}_{k, 3}  - \Theta^{S_2}_{k, 4}
- \delta_k^{\DTS}
+ K \cdot \sum_{i=1}^{k-1} \delta_i^{\DTS}  &= 
\Lambda^{\DTS}_{k} \cdot [-2, 0, 0, -1, 1, 1]
\end{array}
\right. & \forall k \in \{1...3\} \label{eq:ilp_scheduling_example_dts}
\end{align}
\end{subequations}

To briefly summarize, \Cref{eq:ilp_scheduling_example_delta_setup1,%
eq:ilp_scheduling_example_delta_setup2,%
eq:ilp_scheduling_example_lambda_setup} set up the variables related to 
the dependence constraints.
\Cref{eq:ilp_scheduling_example_dst,eq:ilp_scheduling_example_dts} 
handle the case of $\mathcal{D}_{S_1,S_2}$ and 
$\mathcal{D}_{S_2,S_1}$ respectively for 
\Cref{eq:ilp_scheduling_after_farkas}.
Specifically, 
\Cref{eq:ilp_scheduling_example_dst,eq:ilp_scheduling_example_dts} 
apply Farkas' Lemma for the coefficients of $i^{S_1}, j^{S_1}, i^{S_2}, 
N$ and for the constant terms.
We have labeled each equational constraint in 
\Cref{eq:ilp_scheduling_example_dst,eq:ilp_scheduling_example_dts} 
with 
the corresponding symbolic term (i.e. one of $i^{S_1}, j^{S_1}, i^{S_2}, 
N$) or 
constant term (i.e. denoted by $1$) on the left side of the equation.
\end{changed}

\subsubsection{Non-Sequential Scheduling}
We first define a {\em sequential schedule}.
\begin{definition}[Sequential Schedule]
A schedule $\Theta$ is sequential if for each statement $S$ and its scheduling 
function $\Theta^S$ we have:
\begin{equation}\label{eq:sequential_schedule_requirement}
\Theta^S \cdot 
[\vec{x}^S_1, \vec{p}, 1]^\transpose \neq 
\Theta^S \cdot 
[\vec{x}^S_2, \vec{p}, 1]^\transpose
\quad \forall \vec{x}^S_1, \vec{x}^S_2. \; \vec{x}^S_1 \neq \vec{x}^S_2.
\end{equation}
\end{definition}
Intuitively, a schedule is sequential\rline{finalseqdef} if it maps distinct 
iteration vectors to distinct timestamps.

A valid,\rline{nonsequentialscheduleexplain} dependence-satisfying 
schedule that satisfies 
\Cref{eq:ilpscheduleconvex} may be {\em non-sequential} in that it 
permits multiple statement instances to execute in the same timestep.
Therefore, it is possible for a schedule to demand an unbounded 
number of statement instances to execute at the same timestamp, which does not directly map to physical 
machines with finite resources 
\citep{schedulereductions94,schedulereductionsrealistic}.

In the context of reductions, a reduction may be scheduled to accumulate an unbounded number of values at the same timestamp~\citep{schedulereductions94}.
For example, the (one-dimensional) schedule that assigns $\texttt{S1}[i, 
j]$ to $[i]$ is not realistic, since it assigns the same timestamp to 
$\texttt{S1}[i, j]$ for all $j$, and therefore requires accumulation of a 
potentially unbounded number of values.
\citet{schedulereductionsrealistic} demonstrate a scheduling approach that bounds 
the total accumulations per timestep to target physical machines.

In this paper, we do not extend \citet{ilpschedulemultidim}'s formalization with physical scheduling constraints. 
While this consideration is important for practical scheduling of 
reductions, we only use their scheduling formalization to support our 
formalization of the dependent reduction scheduling problem in 
\Cref{sec:mssrps}.
Our heuristic algorithm in \Cref{sec:heuristicalgo} does not require a 
schedule to have been computed using their scheduling formalization.
Our algorithm instead relies on a sequential schedule that can be computed via any means, including a scheduling algorithm that adopts realistic scheduling constraints.
In \Cref{sec:heuristicdirection} we discuss sequential scheduling in more 
detail.

\section{Background: The Simplifying Reductions Framework}
\label{sec:background-sr}

\todo{explain when SR works for single statement (answer: linear dependence)}
Previous work introduced a core transformation called the {\em simplification 
	transformation} (ST) that can transform a single reduction specified in 
in the polyhedral representation (\Cref{lst:irstmt}) to lower its complexity \cite{sr}. 
The work also introduced a set of enabling transformations that make available 
opportunities to simplify reductions.

For the core transformation, we use an example from 
\Cref{sec:introduction} to 
illustrate 
the transformation. 
For the enabling transformations, we include a brief description of each 
transformation in \Cref{sec:sr_enabling_transformations}.
Finally, \citet{sr} combine all the transformations to provide a dynamic 
programming algorithm to efficiently choose, from an infinite set of configurations and orders for 
the transformations, a sequence of transformations that lead to optimal complexity 
reduction. 

\subsection{Simplification Transformation}
\label{sec:st}

\newcommand{\btmp}{BTmp}
\newcommand{\btmpadd}{{\btmp}Add}

Here we use the example from \Cref{sec:mssr_example} to illustrate how the simplification transformation reduces the complexity of a reduction.  We provide a more complete specification of ST in~\Cref{sec:simplifying-reduction}.

{
\newcommand{\reusevec}{$[1,0]^\transpose$}
\newcommand{\SOneFinalizeAndSTwo}{
	\texttt{S1Fin}: \texttt{B}[i] = \texttt{\btmp}[i] : 
	\polyhedral{[i] : 0  \le i < N } \\
	\texttt{S2}: \texttt{A}[i+1] = f(\texttt{B}[i])  : 
	\polyhedral{[i] : 0  \le i < N - 1 }}
\newcommand{\lstleft}{
	\arraycolsep=0.3pt
	\begin{array}{l}
	\texttt{S1}: \texttt{\btmp}[i] \mathrel{+}=\texttt{A}[j]  : 
	\polyhedral{[i, j] : 0 \le i < N \wedge 0  \le j \le i } \\
	\SOneFinalizeAndSTwo
	\end{array}
}
\newcommand{\lstright}{\hspace{0cm}
\arraycolsep=0.3pt\begin{array}{l}
\texttt{S1Add}: \texttt{\btmpadd}[i] \mathrel{+}= \texttt{A}[j] 
: 
\polyhedral{[i, j] : 0 \le  i < N 
	\wedge i = j} \\
\texttt{S1AddOnly}: \texttt{\btmp}[i] = \texttt{\btmpadd}[i] : \polyhedral{[i] : i  = 
0 } \\
\texttt{S1AddReuse}: \texttt{\btmp}[i] = \texttt{\btmp}[i-1] + 
\texttt{\btmpadd}[i]:  
\polyhedral{[i] : 1 \le i < N } \\
\SOneFinalizeAndSTwo
\end{array}
}
\begin{clisting}[caption={ST in the polyhedral IR for the example in 
\Cref{sec:introduction} (\Cref{eq:ms-scan}), given the reuse vector \reusevec}, 
numbers=none,xleftmargin=0pt,label={lst:srprefixsumir},float={t},aboveskip=1.5ex]
`
\ifthenelse{\isundefined{\isthesis} \AND \isundefined{\ispopl}}{
${\lstleft}\hspace{-0cm}\Rightarrow{\lstright}$
}%
{
$\lstleft$ \\
$\quad\quad\Downarrow$ \\
$\lstright$
}
`
\end{clisting}
}

\Cref{lst:srprefixsumir} illustrates the example of applying ST to \Cref{lst:ms-scan-ir} (i.e., the IR form of \Cref{eq:ms-scan}) 
to produce the optimized version in \refeqscanopt. 
As we mentioned before, core ST operates on a single statement only and only produces a correct result for a dependent reduction if provided with a correct reuse vector.

\paragraph{Original Reduction}
Above the arrow, \Cref{lst:srprefixsumir} presents the reduction in 
\Cref{eq:ms-scan-reduction}
in the polyhedral IR as the statement \texttt{S1}
with domain $\mathcal{P} = \polyhedral[N]{[i, j] : 0 \le i < N \wedge 0 \le j
	\le i}$.  The right hand
side expression is $\texttt{A}[j]$, and $i$ is not a bound variable -- this
means given a fixed $j$, the right hand side's values are the same for
different values of $i$. 

\paragraph{Optimized Reduction} Below the arrow, 
the optimized prefix sum consists of three distinct computations: seeding the initial values of the reduction, computing the root of the reduction, 
and computing the core of the reduction via reuse.

The statement \rline{taddshouldbes1add}$\texttt{S1Add}$ computes $\texttt{\btmpadd}$, which serves to hold the unique seeds of the reduction: the uniquely computed value introduced into the reduction at each iteration $i$. The statement computes over the full space of $i$ and sets $\texttt{\btmpadd[i]}$ to equal $\texttt{A[i]}$.
The statement $\texttt{S1AddOnly}$ computes the root of the reduction, the first (by dependence order) value computed into $\texttt{\btmp[0]}$
and for which the reduction use not reuse any previous computation. Hence,  $\texttt{S1AddOnly}$ sets $\texttt{\btmp[0]}$ to equal $\texttt{\btmpadd[i]}$, the unique element for this iteration.
The statement $\texttt{S1AddReuse}$ computes the core of the reduction incrementally, summing the previous computation $\texttt{\btmp[i - 1]}$ to the unique element for this iteration.

To identify this optimization opportunity and generate the optimized code, the Simplification Transformation identifies a {\em reuse vector} by which shifting the original, unoptimized polyhedron ($\mathcal{P}$) makes plain that consecutive iterations of the polyhedron share the same computation and can be incrementalized.
The reuse vector then drives the transformation of the code.
To elaborate ST's mechanics, we first present how ST leverages a reuse vector to transform the program before then explaining how ST identifies appropriate reuse vectors.

\subsection{From Reuse Vector to Transformed Program}
\label{sec:st-reuse}
\ifthenelse{\isundefined{\ispopl}}%
{
\begin{figure}[ht]
	\centering
	{\includegraphics[width=0.8\linewidth]{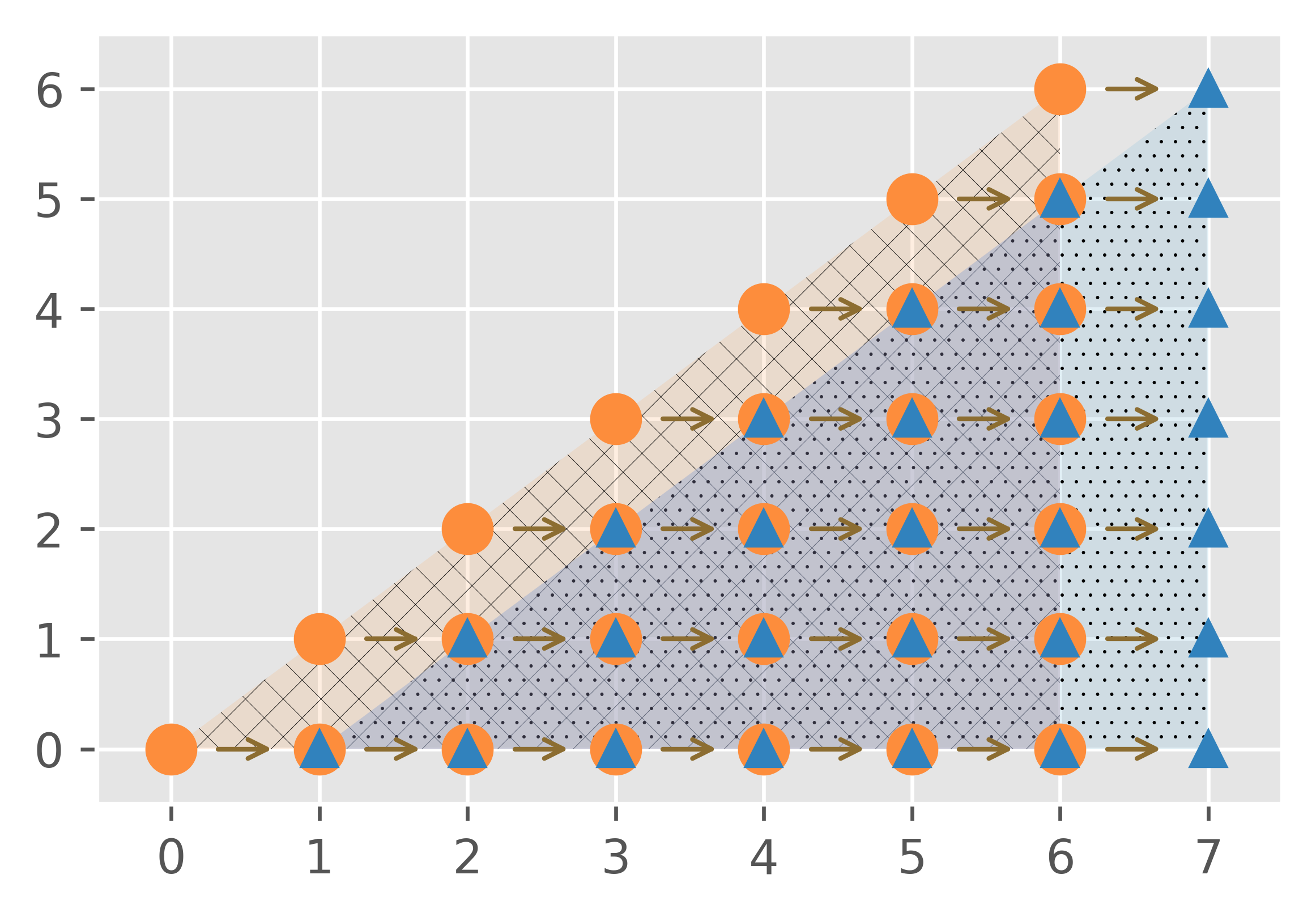}}%
	\caption{Visualization of algorithm on prefix sum example}
	\label{fig:prefixsum}
\end{figure}
}%
{
\begin{wrapfigure}[7]{r}{0.33\linewidth}
\vspace{-8ex}
	\centering
	\raisebox{0pt}[\dimexpr\height-1.2\baselineskip\relax]%
	{\includegraphics[width=\linewidth]{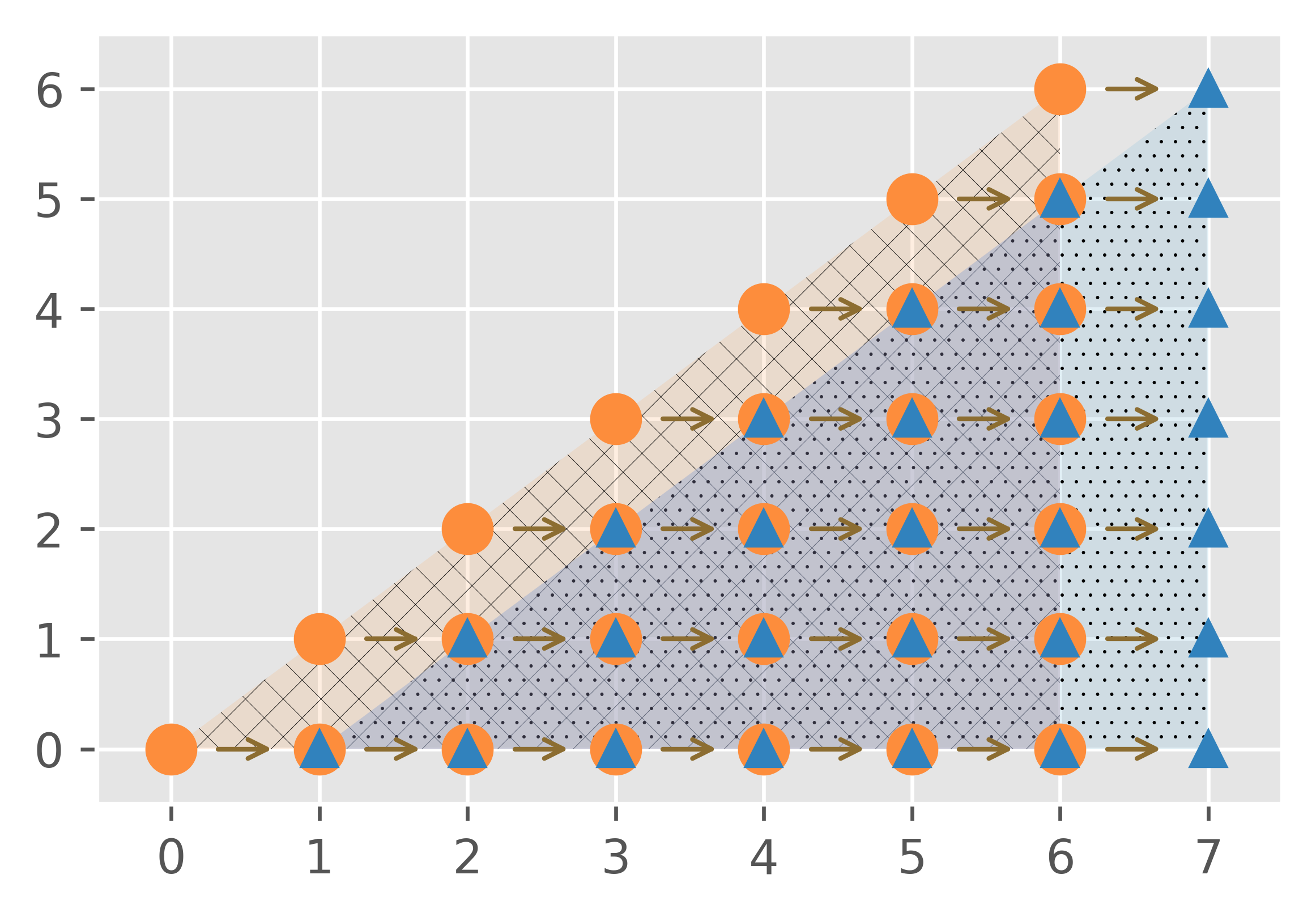}}
	\caption{Visualization of the algorithm on prefix sum example.}
	\label{fig:prefixsum}
\end{wrapfigure}
}

ST takes a reuse vector and manipulates the domain of the reduction to compute the subset of iteration
instances that can be transformed to incrementally compute their results using previously computed results.

Consider the {\em reuse vector} $\vec{r} =[1, 0]^\transpose$,
which can also be represented by the polyhedral relation $\polyhedral{[i, 
j] \rightarrow [i+1, j] : \forall i, j}$.
Conceptually, the reuse vector therefore denotes a shift of all points $[i, j]$ to $[i+1, j]$.
Given the reuse vector $\vec{r} =[1, 0]^\transpose$, ST performs the following steps:

\begin{itemize}[leftmargin=*]
	
	\item {\bf Shift.} The transformation first shifts \texttt{S1}'s polyhedron along the direction of the reuse vector, transforming \polyhedral{[i,j] : 0 \le i < N \wedge 0  \le j \le i } into $\polyhedral{[i, j] : 1 \le i < N + 1 \wedge 0 \le j \le i - 1}$.
\Cref{fig:prefixsum} illustrates the two polyhedrons, with the pre-transformation polyhedron presented  as the crosshatched polyhedron
	with round dots	and the post-shift polyhedron presented as the dot-shaded polyhedron with triangles.

	\item {\bf Intersect.} The transformation next computes the intersection of the
	shifted polyhedron\rline{polyhedraltopolyhedron} with its original 
	polyhedron, yielding $\polyhedral{[i, j] : 1 \le i < N \wedge 0 \le j \le i - 1 }$ (triangles 
	outlined in round dots in \Cref{fig:prefixsum}).
	This polyhedron denotes the subset of points of the original domain $\mathcal{P}$, whose value can be reused from the predecessor points as indicated by the reuse vector. We note that the intersection computation itself does not detect the equivalence of expressions across indices. 
	Instead, ST's reuse vector selection process (\Cref{sec:srconfig}) ensures that instances within the intersection denote equivalent expressions.
i	
	\item {\bf Project.} Finally, the transformation projects the result onto the space of the polyhedron that represents the indices of the left hand side of the array \texttt{BTmp}. Concretely, the transformation applies the projection represented by 
	the polyhedral relation 
	$\polyhedral{[i, j] \rightarrow [i] : \forall i, j}$), yielding the polyhedron 
	$\polyhedral{[i] : 1 \le i < N}$.
{\noindent}This final polyhedron is exactly the domain of elements of \texttt{BTmp} that exhibits reuse along the reuse vector $\vec{r}$. 
This polyhedron corresponds to the domain of the statement $\texttt{S1AddReuse}$, computes $\texttt{BTmp}[i]$ with $\texttt{BTmp}[i] =\texttt{BTmp}[i - 1] + \texttt{BTempAdd}[i]$.

\end{itemize}

The polyhedron $\polyhedral{[i] : 1 \le i < N}$ does not cover the full 
domain of the original reduction. Specifically, it is missing $\texttt{BTmp}[i]$ on the
domain $\polyhedral{[i] : i = 0}$ -- that is, exactly when $i = 0$. The value
of $\texttt{BTmp}[0]$ should be equal to $\texttt{A}[0]$ and the transformation
generates the statement \texttt{S1AddOnly} to perform this computation.

\subsection{Selecting a Reuse Vector}
\label{sec:srconfig}

The simplifying transformation work offers a fully automated technique to identify a reuse vector for a reduction. 
The techniques must specifically (1) identify if computation is {\em shared} among iterations of the reduction, (2) identify if the shared computation is {\em reusable}: if it is possible to transform the program reuse the shared computation), and 3) identify if reusing shared computation is {\em profitable}: the transformation reduces the complexity of a program.

\paragraph{Shared}
ST includes an algorithm to determine the {\em share space}: the set of all vectors such that, for each vector, the evaluations of right hand side expression
of the reduction are the same if shifted along the reuse vector. 

For the prefix sum example in \Cref{sec:st-reuse}, the right hand side expression ($A[j])$) is the same along the direction ~$[1, 0]^\transpose$. 
However, the right hand side expression is also the same along the direction $[-1, 0]^\transpose$ as well as $[2, 0]^\transpose$. 
In fact, any reuse vector $[n, 0]$ for a given integer $n$ is in the share space of this reduction. 
Therefore, generally, the set of reuse vectors is potentially infinite in size.

For a statement $S$, let $\mathcal{S}(S)$ denote the statement's share space. \citet[Section~5.2]{sr} demonstrated how to compute $\mathcal{S}(S)$ 
from $S$.

\paragraph{Inverse}
\begin{wrapfigure}[9]{r}{0.25\linewidth}
	\centering
	\vspace{-1ex}
	\raisebox{0pt}[\dimexpr\height-1.2\baselineskip\relax]%
	{\includegraphics[width=\linewidth]{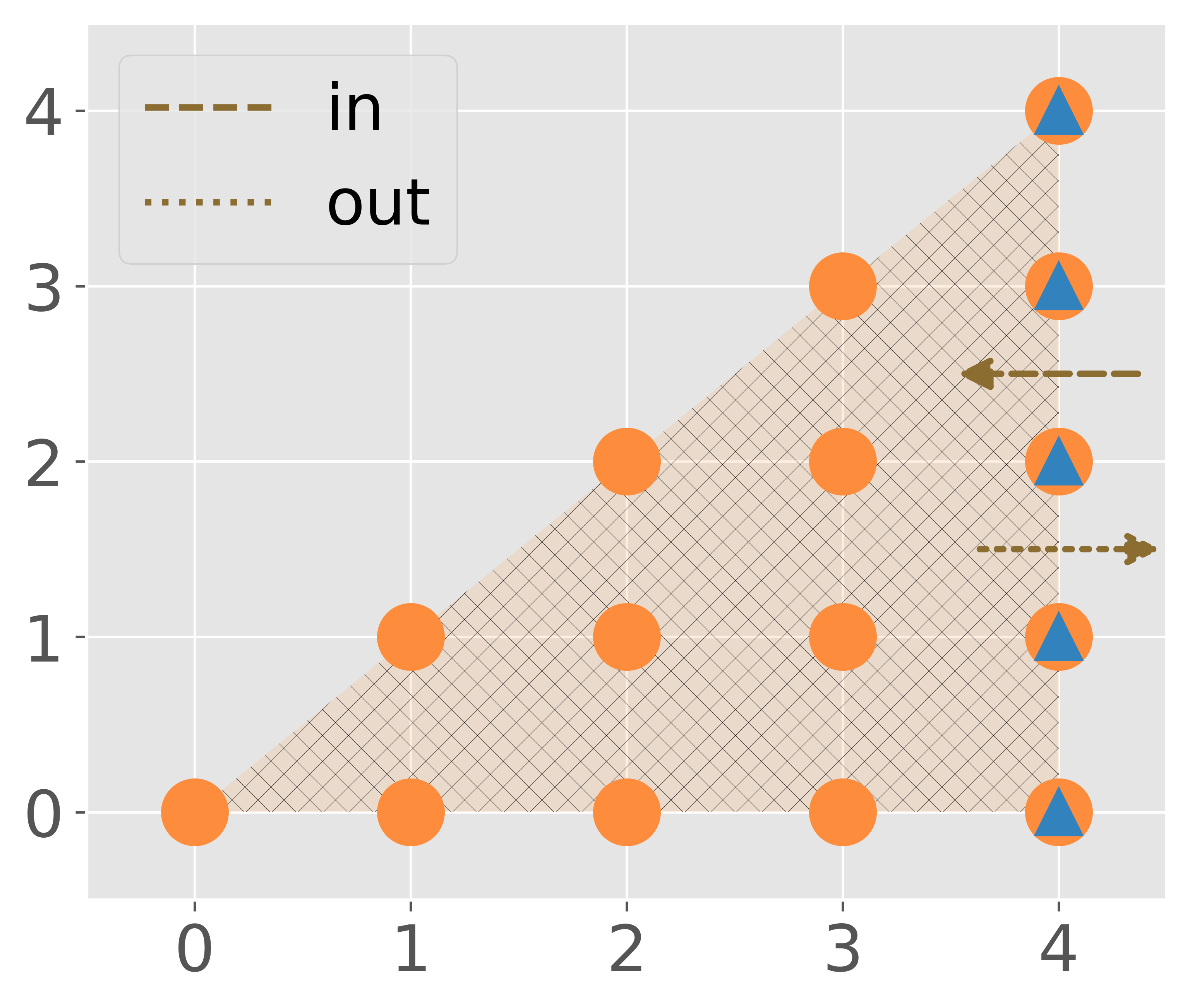}}
	\caption{Visualization of the Inverse constraint}
	\label{fig:in_out}
\end{wrapfigure}
If $\oplus$ does not have 
an inverse, we require that applying ST along a vector $\vec{r}$ does not 
introduce 
statements that require the inverse operator of $\oplus$. 
For example, if $\oplus$ is 
$\min()$ or $\max()$, it does not have an inverse; 
in such cases, \citet{sr} introduce the concept of {\em Boundary Constraints} -- which 
in short is the set of constraints of the domain $\mathcal{P}$ that are 
orthogonal to the projection 
$\textit{proj}$ -- and requires that $\vec{r}$ must be pointing out of 
(instead of pointing into) the boundaries of $\mathcal{P}$ corresponding 
to the Boundary Constraints. 

For example, consider a modification to the prefix sum example 
(\Cref{lst:naiveprefixsum}), where the summation is replaced with $\max$.
The ST application on this modified example along the reuse vector 
$[-1,0]^\transpose$ is invalid, since $\max$ does not have an inverse but the 
transformed program requires an inverse operation in order to compute $B[i]$ from 
$B[i+1]$ (i.e. \Cref{eq:scan-opt-rev-update}).
\Cref{fig:in_out}\rline{pointedoutfigure} illustrates this constraint: for the 
boundary of the 
domain orthogonal to the projection (the right side of the triangular 
domain), the inverse constraint requires the reuse vectors to point out 
of the domain (i.e. to the dotted arrow's direction).

For a statement $S$, let $\mathcal{I}(S)$ denote the set of vectors 
$\vec{r}$ that satisfy the inverse constraint.
\citet[Section~5.4]{sr} demonstrated how to compute $\mathcal{I}(S)$ for a statement $S$.

\paragraph{Profitable} 
Applying ST with a given reuse vector is profitable only if the transformed program 
has lower complexity than the original program.
Of note, the complexity of a program will not increase after applying ST for any $\vec{r}$. 
However, the complexity can stay the same if $\vec{r}$ is chosen along a direction 
where the original polyhedral domain $\mathcal{P}$ has constant thickness -- that is, 
the extent of $\mathcal{P}$ is bounded by some constant not parameterized by the 
input parameters of the program. 

For example, consider an extreme case of the prefix sum example 
(\Cref{lst:naiveprefixsum}, \Cref{lst:optprefixsum}) but with the input parameter $N$ 
fixed to 
some constant -- say $N = 4$. The complexities before and after ST will be the same 
-- 
$\BigO(1)$ -- since both programs will perform a fixed number of computations.

For a reduction statement $\text{S}$ with domain $\mathcal{P}$, let $\mathcal{L}(\mathcal{P})$ denote the set of reuse vectors that reduce the complexity of the reduction.
\citet[Section~4.2]{sr} demonstrated how to compute $\mathcal{L}(\mathcal{P})$ given a domain $\mathcal{P}$.

\paragraph{Valid} In general, the set of {\em valid} reuse vectors are those that satisfy the above properties (i.e., shared, reusable, and profitable).
Formally, if we let the notation $\text{S}.\textit{domain}$ denote the domain of a statement $\text{S}$,
then the set of valid reuse vectors is  $\mathcal{R}(S) = \mathcal{S}(S) \cap \mathcal{I}(S) \cap 
\mathcal{L}(S.\textit{domain})$. Therefore, any $\vec{r} \in \mathcal{R}(S)$ is a valid reuse vector for applying ST.

\subsection{Residual Reduction}
\label{sec:strecursive}

Notice in \Cref{lst:srprefixsumir} that the statement \textsf{S1Add} still contains a 
reduction. 
Although for this example \textsf{S1Add} does not have further ST opportunities, in 
general, the residual reduction might still have available ST opportunities so that ST 
can be applied recursively to all introduced reductions.\footnote{In general, ST can 
also introduce more than one reduction; we include a fuller description of ST in 
\Cref{sec:simplifying-reduction}.}

Even though\rline{final43} the space of valid reuse vectors 
(\mbox{\Cref{sec:srconfig}}) is 
potentially infinite and cannot be simply enumerated,
\mbox{\citet{sr}}  presented an dynamic programming algorithm that partitions this 
space into equivalence classes.
The algorithm then enumerates those equivalence classes to choose reuse vectors. 
The algorithm always terminates and produces a resulting program with optimal 
complexity (i.e. minimal polynomial degree).

\section{Simplifying Dependent Reductions Problem}
\label{sec:mssrps}

In this section, we state the Simplifying Dependent Reductions (SDR) problem. 
In particular, we focus on the core of the Simplifying Reductions approach -- the Simplification Transformation 
in \Cref{sec:st} --
and do not consider the Simplifying Reductions framework's additional {\em enabling transformations}.
These transformations increase available simplification opportunities; we briefly touch on enabling transformations in Section~\ref{sec:mssr:discuss}.

\subsection{Problem Statement}
\label{sec:mssrproblemstatement}
\newcommand{\prog}[1]{\ensuremath{\textit{prog}^{#1}}}

Ideally we would formulate the Simplifying Dependent Reduction (SDR) as \Cref{eq:mssrproblemstatementincorrect}.
\begingroup
\begin{subequations}\label{eq:mssrproblemstatementincorrect}
\setlength{\jot}{0em}
\begin{align}
& \text{\bf minimize} \;\; \text{complexity}(\textit{prog}') \\
& \text{\bf subject to} \;\; \notag \\
& \quad \prog{0} = \textit{prog}, \; \textit{prog}' = \prog{n} \\
& \quad \forall i \in \{1 ... n\} \colon \prog{i} = \text{ST}_{\text{S}_i, 
\vec{r}_i}(\prog{i-1}) 
\\
& \quad \quad \quad \vec{r}_i \in 
\mathcal{S}\left(\text{S}_i\right) \cap 
\mathcal{I}\left(\text{S}_i \right) \cap
\mathcal{L}\left(\text{S}_i.\textit{domain} \right)
\label{line:mssrproblemstatementincorrectrconstraints} \\
\begin{split}
& \quad \exists \text{ schedule } \Theta \text{ of } \prog{n},\;\; \\
& \quad\quad\quad \text{s.t.} \;\; \Theta \text{ satisfies } 
\text{dependence}(\prog{n})
\end{split} \\
& \text{\bf given} \;\; \textit{prog}, \; \text{dependence}(prog) \\
& \text{\bf variables} \;\;   \text{S}_1, \ldots , \text{S}_n, \; \vec{r_1},  \ldots, \vec{r_n}
\end{align}
\end{subequations}
\endgroup
\Cref{eq:mssrproblemstatementincorrect} states that given a program $prog$, 
and all pairwise 
dependences between those 
statements, $\text{dependence}\left(\textit{prog}\right)$, apply a sequence of $n$ ST 
transformations, 
$\text{ST}_{\text{S}_1, \vec{r}_1}, \ldots \allowbreak, \text{ST}_{\text{S}_n, 
\vec{r}_n}$ that minimize 
the complexity of the resulting program, $\textit{prog}'$. 
Here we use $\text{ST}_{\text{S}, \vec{r}}(\textit{prog})$ to denote an ST that is applied on a 
statement S in $\textit{prog}$ along the reuse vector $\vec{r}$.
Each $\text{S}_i$ is a variable that refers to a reduction in $\textit{prog}^{i-1}$.
Note that $\text{S}_i$ can either be a reduction that is in $\textit{prog}^0$, the initial 
input program, or it can be reduction that is introduced by any of the previous
$i-1$ ST applications (i.e. recursive ST in \Cref{sec:strecursive}).
Further, \Cref{line:mssrproblemstatementincorrectrconstraints} requires each $r_i$ to satisfy 
the constraints (i.e. complexity, inverse and sharing, denoted by $ \mathcal{S}(.), 
\mathcal{I}(.), \mathcal{L}(.)$ respectively) as stated 
in \Cref{sec:srconfig}. 

Unfortunately, there are two issues with \Cref{eq:mssrproblemstatementincorrect}:
1) it has infinite space for $\vec{r}_i$
2) it has impractically large space for $\text{S}_i$. 

Firstly, if we assume that $n$ is given and bounded,
\footnote{
The total number of ST applications $n$ must be a finite given the fact that the 
complexity of the input 
program is finite, and that each ST application reduces the complexity.
}
the formulation does not 
readily translate to an executable algorithm.
Specifically, enumeratively searching all possible $\vec{r}_i$ combinations is not 
feasible: each 
$\vec{r}_i$ alone is chosen from an infinite set of vectors, and the entire search 
space is also infinite; therefore the search space of reuse vectors cannot be simply enumerated 
(however it can be partitioned in to equivalence classes as proposed by \citet{sr}).

Secondly, also assuming $n$ is given and bounded, the program relies on a sequence of statements, ($\text{S}_1, \ldots,\text{S}_n$), to specify on which statement in $\prog{i}$ to perform ST. 
Although, unlike the case of $\vec{r}_i$, the number of choices for each $\text{S}_i$ is finitely 
bounded (i.e. by the number of ST-applicable reductions in the program), the combinations of 
all possible $(\text{S}_1, \dots, \text{S}_n)$ has at least $|\text{S}_1|!$ possibilities: assuming the 
best case scenario where 
each ST applications removes one reduction and introduces zero reductions that are potentially 
applicable for further ST applications, 
which imples the $i$-th ST application has $n-i+1$ remaining alternative choices 
of $\text{S}_i$ 
(i.e. $|\text{S}_i| = n - i + 1$). 
Therefore the search space of $\text{S}_i$ is also not practical to navigate with enumerative 
search.

\subsection{Per-face ST Application}
\label{sec:perfacest}

We\rline{rewritelastparagraph51} will resolve these issues with a correct 
formalization in the rest of \Cref{sec:mssrps}. 
Specifically, we show, for a program, a one-to-one correspondence 
between all its 
potential 
ST applications and all {\em faces} of its reductions' domains. 
This correspondance allows a construction of an Integer Bilinear Programming 
(IBP) formulation to SDR, which avoids the explicit enumerative searchs in the 
above issues of \Cref{eq:mssrproblemstatementincorrect}.
\begin{definition}[{\em Face of polyhedral set}]\label{def:polyface}
Let the polyhedral set $\mathcal{P} = \polyhedral[\vec{p}]{[\vec{x}] : M \cdot [\vec{x}, 
\vec{p}, 1] 
\ge 
\vec{0}}$. Let $M_i$ be the $i$-th row of matrix $M$. A face of $\mathcal{P}$ is 
defined as 
$\mathcal{F} = \mathcal{P} \cap \mathcal{B}$ where $\mathcal{B} = 
\polyhedral[\vec{p}]{[\vec{x}] : 
B \cdot [\vec{x}, \vec{p}, 1] = \vec{0}}$ and $\forall i \exists j, B_i = M_j$. 
\end{definition}
In words, a face of $\mathcal{P}$ is $\mathcal{P}$ with a subset of 
(potentially empty or all) inequality constraints of $\mathcal{P}$ changed to 
equality constraints. 

We first make the following observation of ST on a single statement $\text{S}$ with domain 
$\mathcal{P}$: 
if we apply ST on $\text{S}$, we can then recursively apply ST on the newly introduced 
reductions, as in 
\Cref{sec:strecursive}, and this is exactly the root problem of the incorrect formulation 
\Cref{eq:mssrproblemstatementincorrect}: this recursion appears non-terminating. 
We will solve this issue by stating and proving \Cref{thm:stfacecorrespondance} --- to this end, 
we 
first recall  \Cref{thm:subdomaindecompose} from \citet{sr} that we will use in our 
proof. We 
then 
state \Cref{thm:stfacecorrespondance} and give a proof. 

\begin{lemma}[Local Face Correspondance {\cite[Theorem 
3]{sr}}]\label{thm:subdomaindecompose}
	Let $\mathcal{P'}$ be the translation of an $n$-dimensional $\mathcal{P}$ along $\vec{r}$, 
	then 
    $\mathcal{P} - \mathcal{P'} = \uplus_i \mathcal{P}_i$. 
    That\rline{disjunctunionisfinite} is, $\mathcal{P} - \mathcal{P'} $ is a finite 
    union of $\mathcal{P}_i$s. 
    Further, there exists a one-to-one map from $i$ to faces of $\mathcal{P}$ 
    such that each 
	$\mathcal{P}_i$ corresponds uniquely to a $(n-1)$-dimensional face of $\mathcal{P}$. 
\end{lemma}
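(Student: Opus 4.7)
The plan is to decompose $\mathcal{P} \setminus \mathcal{P}'$ into disjoint pieces by assigning each excluded point to a unique $(n-1)$-dimensional facet of $\mathcal{P}$, and to identify those facets as the ones ``trailing'' under the translation by $\vec{r}$.

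First, I would fix a minimal $\mathcal{H}$-representation $\mathcal{P} = \{x : A_i x \le b_i,\ i = 1,\ldots,k\}$ in which each inequality defines a distinct $(n-1)$-dimensional facet $F_i = \mathcal{P} \cap \{x : A_i x = b_i\}$. Then $\mathcal{P}' = \mathcal{P} + \vec{r} = \{y : A_i y \le b_i + A_i \vec{r}\}$, so a point $x \in \mathcal{P}$ fails to lie in $\mathcal{P}'$ precisely when $A_i(x - \vec{r}) > b_i$ for some $i$. Combining this with $A_i x \le b_i$ forces $A_i \vec{r} < 0$; geometrically this says exactly that $F_i$ is a facet into which $\vec{r}$ points inward, so the layer swept off by the translation is naturally associated with such a facet.

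Next, to turn the covering $\mathcal{P} \setminus \mathcal{P}' = \bigcup_i Q_i$, where $Q_i = \{x \in \mathcal{P} : A_i(x - \vec{r}) > b_i\}$, into a disjoint decomposition, I would fix an arbitrary total order on the facet indices and set
\[
\mathcal{P}_i = \{x \in \mathcal{P} : A_i(x - \vec{r}) > b_i \text{ and } A_j(x - \vec{r}) \le b_j \text{ for all } j < i\}.
\]
By construction the $\mathcal{P}_i$ are pairwise disjoint and their union equals $\mathcal{P} \setminus \mathcal{P}'$. The required one-to-one map sends index $i$ to the facet $F_i$; this is injective on indices by construction, and the non-empty pieces map into the subset of trailing facets (those with $A_i \vec{r} < 0$).

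The main obstacle I anticipate is face-lattice bookkeeping rather than the decomposition itself. One must first ensure the representation is non-redundant, so that each inequality $A_i x \le b_i$ really cuts out a distinct $(n-1)$-face of $\mathcal{P}$, and one must verify that the algebraic tie-breaking above never silently conflates pieces attributable to different facets. Both points are standard for minimal $\mathcal{H}$-descriptions of polyhedra, but they have to be stated carefully to make the identification $i \mapsto F_i$ unambiguous and to justify calling the resulting correspondence ``one-to-one onto the set of $(n-1)$-dimensional faces encountered.''
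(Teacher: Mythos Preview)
The paper does not supply its own proof of this lemma: it is explicitly recalled as \cite[Theorem~3]{sr} and then invoked in the proof of the Global Face Correspondence without further argument. So there is no in-paper proof to compare against.

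That said, your sketch is sound and is essentially the standard argument one would give. Fixing a minimal $\mathcal{H}$-description, observing that $x \in \mathcal{P} \setminus \mathcal{P}'$ forces $A_i \vec r < 0$ for the witnessing facet, and disjointifying the covering $\bigcup_i Q_i$ by lexicographic tie-breaking is exactly the right plan. Two small remarks. First, in the integer-lattice setting of the paper your strict inequality $A_i(x-\vec r) > b_i$ is unproblematic since it is equivalent to a non-strict one, so each $\mathcal{P}_i$ is still a polyhedral set in the sense of Definition~\ref{def:polyhedralset}. Second, the lemma as stated only asserts an injective correspondence from the pieces into the set of $(n-1)$-faces; you have this automatically, and your observation that the image lands in the ``trailing'' facets (those with $A_i\vec r < 0$) is the right refinement, and is what the paper implicitly uses later when it talks about the residual domains forming ``half shells'' of $\mathcal{P}$.
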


\begin{lemma}[Global Face Correspondance]\label{thm:stfacecorrespondance}
Each recursive application of ST  is on a subset (a polyhedral set) of $\mathcal{P}$, 
and all subsets correspond exactly one-to-one to all faces of $\mathcal{P}$. 
\end{lemma}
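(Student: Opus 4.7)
The plan is to prove \Cref{thm:stfacecorrespondance} by strong induction on the dimension of the reduction's domain $\mathcal{P}$. The base case $\dim \mathcal{P} = 0$ (a single point) is immediate: there is no direction along which to shift, ST has no recursive applications, and $\mathcal{P}$ is its own unique face (the improper face, with no inequalities converted to equalities in \Cref{def:polyface}).

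For the inductive step, I would first pair the initial ST application (which acts on $\mathcal{P}$ itself) with the improper face of $\mathcal{P}$. Then I would invoke \Cref{thm:subdomaindecompose}: the application along any valid reuse vector $\vec{r}$ partitions the residual region $\mathcal{P} - \mathcal{P}'$ into disjoint pieces $\mathcal{P}_i$, each becoming the domain of one of the residual reductions newly introduced by ST, and each canonically paired with a distinct $(n-1)$-dimensional face $F_i$ of $\mathcal{P}$. Because each $\mathcal{P}_i$ sits in strictly lower effective dimension than $\mathcal{P}$, the inductive hypothesis applies to the recursive ST subtree rooted at $\mathcal{P}_i$, yielding a bijection between the recursive applications in that subtree and the faces of $\mathcal{P}_i$.

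The gluing observation is that faces of faces are faces: any face of $F_i$ is obtained by tightening additional inequalities of $F_i$ to equalities, which by \Cref{def:polyface} is the same as tightening a superset of the inequalities already tight on $F_i$ within $\mathcal{P}$, hence a face of $\mathcal{P}$. Collecting the improper face (paired with the root application) together with the disjoint contributions coming from each subtree rooted at $\mathcal{P}_i$ should assemble into the claimed global bijection between recursive ST applications and all faces of $\mathcal{P}$.

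The main obstacle I anticipate is the last step: a proper face of $\mathcal{P}$ of dimension strictly below $n-1$ may in principle be contained in several of the $F_i$'s, so I must verify that the disjointness of the $\mathcal{P}_i$'s provided by \Cref{thm:subdomaindecompose} lifts, through ST's mechanics, to a partition of the proper faces across the subtrees rather than duplicating lower-dimensional faces across several branches. This should reduce to unwinding the ST specification in \Cref{sec:st} and checking that the domains of the newly introduced reductions are exactly the $\mathcal{P}_i$'s of the local correspondence (no spurious pieces, no merging); termination of the recursion is then immediate from the strict drop in effective dimension at each step.
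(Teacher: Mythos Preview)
Your overall strategy---strong induction on dimension, pairing the root application with the improper face, invoking the local face correspondence for the residuals, then gluing via ``faces of faces are faces''---is the same skeleton the paper uses. But there is a concrete omission that breaks the bijection already at the first step: a single ST application introduces residual reductions over \emph{both} half-shells $\mathcal{P}-\mathcal{P}'$ and $\mathcal{P}'-\mathcal{P}$ (the \texttt{ladd} and \texttt{lsub} reductions in the ST specification of \Cref{sec:simplifying-reduction}), not just $\mathcal{P}-\mathcal{P}'$. \Cref{thm:subdomaindecompose}, as stated, decomposes only one half-shell, and its map into the $(n-1)$-faces of $\mathcal{P}$ is merely injective: it hits only the facets that $\vec r$ points out of. You need $\mathcal{P}'-\mathcal{P}$ to pick up the remaining facets. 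With only one half-shell, your inductive step yields an injection into the proper faces of $\mathcal{P}$, not the claimed bijection. The paper repairs this by noting explicitly that the two half-shells together form the full shell around $\mathcal{P}$, whose convex pieces then account for \emph{all} $(n-1)$-dimensional faces before recursing.

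Your flagged obstacle---that a face of dimension below $n-1$ may lie on several of the $F_i$ and hence appear in multiple subtrees---is a legitimate worry, and the paper's proof is no more explicit about it than you are; it simply asserts that the recursion enumerates faces down to the vertices. Once both half-shells are included, disposing of this overcounting is the actual remaining work in either argument.
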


\begin{proof} 
Given a statement S with domain $\mathcal{P}$, ST  
performs a shift of $\mathcal{P}$ along a given reuse vector to $\mathcal{P'}$; new reduction 
statements are introduced over domains  $\mathcal{P} - \mathcal{P'}$ and $\mathcal{P'} - \mathcal{P}$.
Note that these two domains are non-convex half shells around the original domain $\mathcal{P}$, and 
together form a full shell around $\mathcal{P}$. The two shells are both non-convex, however by 
\Cref{thm:subdomaindecompose}, they decompose into convex polyhedral domains, each corresponding 
to a unique $(n-1)$-dimensional face of the $n$-dimensional 
polyhedron $\mathcal{P}$. 

ST is applied recursively on these decomposed $(n-1)$-dimensional faces
and then on the sequence of $(n-i)$-dimensional faces until the 
recursion hits the vertices of $\mathcal{P}$. Therefore, the entire recursion is 
a procedure that enumerates through all faces of a statement $\text{S}$'s full domain 
$\mathcal{P}$, and assigns a reuse vector to 
each face. 
\end{proof}

With\rline{rewritelastparagraph511} \Cref{thm:stfacecorrespondance}, the 
recursive ST application always 
terminates since the number of faces of $\mathcal{P}$ is finite. 
Further, this introduces a {\em per-face application view of ST}.
Specifically, if we think about the algorithm that recursively 
applies ST to an input program to get the final optimized program as is done in 
\citet{sr}, the recursion forms a computation tree, where each node of the tree 
corresponds to a choice of reuse vector of ST. 
\Cref{thm:stfacecorrespondance} says that 
1) this recursion is bounded, and 2) each node in this computation tree 
corresponds one-to-one to a face in the input program's statements' domains.
Under the per-face application view, an algorithm may first 
choose a reuse vector for each face of each statement in the program up-front 
(i.e. assigns the choice of each node in the computation tree up-front), instead of 
recursively requesting the reuse vectors.

\subsection{Integer Bilinear Program Formulation}
\label{sec:mssribpformulation}
With the per-face application view of ST in \Cref{sec:perfacest}, we are now ready 
to give the correct 
formulation of SDR.
The basic idea behind this formulation is to combine previous work on ST for a single 
statement \citep[Section~5]{sr} previous work on the integer linear program 
formulation of polyhedral 
model scheduling \cite{ilpschedulemultidim, pouchet.07.cgo, pouchet.08.pldi}, and 
the 
per-face application view of ST (\Cref{sec:perfacest}). 
We first revisit \Cref{eq:mssrproblemstatementincorrect} and give the correct high level formulation
as \Cref{eq:mssrproblemstatementcorrect}.

This optimization problem minimizes the complexity of $\textit{prog}'$
(\Cref{line:mssrproblemstatementcorrectobjective}), which is a version of \textit{prog} 
transformed by a 
composition of STs applied to each face 
(\Cref{line:mssrproblemstatementsequencest}).
The reuse vectors $\{r_1, \ldots, r_n\}$ that drive each ST must lie in the prescribed 
set that presents sharing ($\mathcal{S}(.)$), satisfies the inverse condition 
($\mathcal{I}(.)$) and reduces complexity ($\mathcal{L}(.)$), as in \Cref{sec:srconfig}
(\Cref{line:mssrproblemstatementcorrectrconstraint}). 
Lastly, there must exist a schedule $\Theta$ that satisfies the 
dependences in $\textit{prog}'$ 
(\Cref{line:mssrproblemstatementcorrecttexistheta}).
\begin{subequations}\label{eq:mssrproblemstatementcorrect}
\setlength{\jot}{0em}
\begin{align}
& \text{\bf minimize} \;\; \text{complexity}( \textit{prog}') 
\label{line:mssrproblemstatementcorrectobjective} \\
& \text{\bf subject to} \;\; \notag \\
& \quad \textit{prog}' = (\text{ST}_{f_1, \vec{r}_{1}}  \circ  \ldots \circ 
\text{ST}_{f_n, \vec{ 
		r}_{n}}) 
(\textit{prog}) \label{line:mssrproblemstatementsequencest}\\
& \quad \forall i \in \{1 ... n\}: \; \vec{r}_i \in 
\mathcal{S}\left(f_i.\textit{stmt} \right)  \cap
\mathcal{I}\left(f_i.\textit{stmt}\right)  \cap
\mathcal{L}\left(f_i\right)
\label{line:mssrproblemstatementcorrectrconstraint} \\
\begin{split}
& \quad \exists \text{ schedule } \Theta \text{ of } \textit{prog'},\\
& \quad\quad\quad \text{s.t.} \;\; \Theta \text{ satisfy } \text{dependence}(prog')
\end{split} \label{line:mssrproblemstatementcorrecttexistheta}  \\
& \text{\bf given} \;\; \textit{prog}, \; \text{dependence}(prog) 
\label{line:mssrproblemstatementcorrectrdepconstraint} \\
& \text{\bf variables} \;\;  \vec{r_1}, \ldots, \vec{r_n}
\end{align}
\end{subequations}

This high level formulation is similar to \Cref{eq:mssrproblemstatementincorrect}, 
except that 
now
1) each reuse vector $\vec{r}_i$ is in one-to-one correspondence with a face $f_i$ 
--- we 
thus have a bounded number of unknown variables for reuse vectors, and 
2) the variables $\text{S}_i$ are eliminated, as the new formulation uses the per-face 
ST view, 
instead of the recursive ST application view.   
Lastly, each reuse vector is still constrained to satisfy the validity constraints 
(i.e. \Cref{line:mssrproblemstatementcorrectrconstraint}).

\subsubsection{Transformed Program}

The program $\textit{prog}'$ is a version of \textit{prog} transformed by the 
composition of STs applied to each face 
(\Cref{line:mssrproblemstatementsequencest}) following the insight from 
\Cref{sec:perfacest}. \rline{testlabel}

{
\newcommand{\reusevec}{$\vec{r}_1 = [r_i, r_j	]^\transpose$}
\renewcommand{\btmp}{\texttt{BTmp}}
\renewcommand{\btmpadd}{\texttt{BTmpAdd}}
\newcommand{\btmpsub}{\texttt{BTmpSub}}
\newcommand{\SOneFinalizeAndSTwo}{
	\texttt{S1Fin}: \texttt{B}[i] = \texttt{\btmp}[i] : 
	\polyhedral{[i] : 0  \le i < N } \\
	\texttt{S2}: \texttt{A}[i+1] = f(\texttt{B}[i])  : 
	\polyhedral{[i] : 0  \le i < N - 1 }}
\newcommand{\lstleft}{
	\arraycolsep=0.3pt
	\begin{array}{l}
	\texttt{S1}: \texttt{\btmp}[i] \mathrel{+}=\texttt{A}[j]  : 
	\polyhedral{[i, j] : 0 \le i < N \wedge 0  \le j \le i } \\
	\SOneFinalizeAndSTwo
	\end{array}
}
\newcommand{\lstright}{\hspace{0cm}
\arraycolsep=0.3pt\begin{array}{l}
\texttt{S1}: \btmp[i] \mathrel{+}= A[j]	: \polyhedral{ [i, j] : r_i = 0 \land r_j = 0 
\land 
i < N \land 
0 \le j \le i } \\
\texttt{S1AddR1Pos}: \btmpadd[i] \mathrel{+}= A[j]	: \polyhedral{ [i, j] : r_i > 0 
\land 
r_j = 0 \land i < N \land i - r_i < j \land 0 \le j \le i } \\
\texttt{S1AddR1Neg}: \btmpadd[i] \mathrel{+}= A[j]	: \polyhedral{ [i, j] : r_i < 0 
\land 
r_j = 0 \land N + r_i \le i < N \land 0 \le j \le i } \\
\texttt{S1SubRNeg}: \btmpsub[i + r_i] \mathrel{+}= A[j]	: \polyhedral{ [i, j] : r_i < 
0 
\land r_j = 0 
\land - r_i \le i < N \land i + r_i < j \le i } \\
\texttt{S1AddOnlyR1Pos}: \btmp[i] = \btmpadd[i]	: \polyhedral{ [i] : r_j = 0 \land 
0 \le 
i < r_i } \\
\texttt{S1AddOnlyR1Neg}: \btmp[i] = \btmpadd[i]	: \polyhedral{ [i] : r_j = 0 \land 
N + 
r_i \le i < N } \\
\texttt{S1AddReuse}: \btmp[i] = \btmp[i - r_i] + \btmpadd[i]	: \polyhedral{ [i] : 
r_i 
> 0 \land r_j = 0 \land r_i \le i < N } \\
\texttt{S1SubRNegReuse}: \btmp[i] = \btmp[i - r_i] - \btmpsub[i]	: \polyhedral{ 
[i] : 
r_i < 
0 \land r_j = 0 \land 0 \le i < N + r_i } \\
\SOneFinalizeAndSTwo
\end{array}
}
\begin{clisting}[caption={Applying ST to all the faces of the polyhedral IR program of 
\Cref{eq:ms-scan}, given the reuse vector \reusevec consisting of integer 
variables $r_i$ and $r_j$. The reuse vectors $\{\vec{r}_{2}, \ldots, \vec{r}_8\}$ 
and 
statements with empty domains are omitted. }, 
numbers=none,xleftmargin=0pt,label={lst:bilpstexample},float={t},aboveskip=1.5ex]
`
\ifthenelse{\isundefined{\isthesis} \AND \isundefined{\ispopl}}{
${\lstleft}\hspace{-0cm}\Rightarrow{\lstright}$
}%
{
$\lstleft$ \\
$\quad\quad\Downarrow$ \\
$\lstright$
}
`
\end{clisting}
}
\begin{changed}{mgreen}
\paragraph{Example}
\Cref{lst:bilpstexample} illustrates\rline{bilptransformprogexample} applications of 
ST to all the faces of the 
program for \Cref{eq:ms-scan}.
Specifically, the top of \Cref{lst:bilpstexample} shows the program \textit{prog} in 
the polyhedral IR of \Cref{eq:ms-scan}, and the bottom of 
\Cref{lst:bilpstexample} shows the program $\textit{prog}'$ after the sequence of ST 
transformations.
Each introduced reduction, by the insight from \Cref{sec:perfacest}, corresponds 
to a face of the reduction \texttt{S1}'s domain.
The reduction \texttt{S1} has domain $\polyhedral{ [i, j] : r_i = 0 \land 
r_j = 0 \land i < N \land 0 \le j \le i } $.
\Cref{eq:ms-scan-faces} gives all the faces $f_i$ of this domain.

\vspace{.5em}
\begin{subequations}
\label{eq:ms-scan-faces}
\begin{minipage}{0.45\linewidth}
\noindent
\begin{align}
f_1 &= \polyhedral{[i, j] : 0 \le i < N \land 0  \le j \le i } \\
f_2 &= \polyhedral{[i, j] : 0 \le i < N \land j = 0 } \\
f_3 &= \polyhedral{[i, j] : 0 \le i < N \land j = i } \\
f_4 &= \polyhedral{[i, j] : i = N - 1 \land 0 \le j \le i }
\end{align}
\end{minipage}\hfil%
\begin{minipage}{0.45\linewidth}
\noindent
\begin{align}
f_5 &= \polyhedral{[i, j] : i = 0 \land j = 0} \\
f_6 &= \polyhedral{[i, j] : i = N - 1 \land j = 0} \\
f_7 &= \polyhedral{[i, j] : i = N - 1 \land j = N - 1} \\
f_8 &=  \polyhedral{[i, j] : 1 = 0}
\end{align}
\end{minipage}
\end{subequations}
\vspace{.5em}

We briefly explain the statements in $\textit{prog}'$ and associate the 
introduced reductions to the faces.
\begin{enumerate}[leftmargin=*]
\item \texttt{S1} in $\textit{prog}'$ is the same as \texttt{S1} in the 
original \textit{prog}, but with 
the extra constraints on its domain that $r_i = 0 \land r_j = 0$. 
This corresponds to the case that $\vec{r}_1 = [0, 0]^\transpose$, which means 
that we do not apply ST at all. This reduction corresponds to the face $f_1$.
\item \texttt{S1AddR1Pos}  corresponds to $f_3$ (the top polyhedron 
in \Cref{fig:s1dep}) for the 
case when $r_1 >0.$ 
\texttt{S1AddR1Neg}  corresponds to $f_3$ (the top polyhedron in 
\Cref{fig:sminus1dep}) for the 
case when $r_1 <0$. 
\item \texttt{S1SubRNeg} is a residual reduction that is required for 
incrementally computing \texttt{B} via subtraction. It is only non-empty for the 
case when $r_1 < 0$.
This reduction corresponds to the face $f_2$.
\item \texttt{S1AddOnlyR1Pos}  initializes \texttt{BTmp} from the result 
of \texttt{S1AddR1Pos} for the case when $r_1 >0$.
\texttt{S1AddOnlyR1Neg}  initializes \texttt{BTmp} from the result of 
\texttt{S1AddR1Neg} for the case when $r_1 <0$.
\item \texttt{S1AddReuse} and \texttt{S1SubRNegReuse} compute
$\texttt{B}$ incrementally by respectively adding or subtracting from the previous 
value of $\texttt{B}[i - r_i]$. 
\item \texttt{SFin} and \texttt{S2} remain the same as in the original 
\textit{prog}.
\end{enumerate}
\end{changed}

\subsubsection{Reuse Constraints}
\label{sec:reuse_constraint_bilp}
The reuse constraints enforce that each $\vec{r}_i$ is chosen from 
$
\mathcal{S}\left(f_i.\textit{stmt} \right) \cap
\mathcal{I}\left(f_i.\textit{stmt}\right) \cap
\mathcal{L}\left(f_i\right)
$.
\citet{sr} demonstrate how to compute each of these sets from $f_i$ 
(\Cref{sec:srconfig}). 
The intersection is a union of polyhedral sets.
We use disjunction to constrain $\vec{r}_i$ to 
belong to one of the polyhedral sets.  
For each polyhedral set, encoding that $\vec{r}_i$ belongs to the polyhedral set is 
then just a simple affine inequality constraint. 

\begin{changednospace}{mgreen}
\paragraph{Example}
We illustrate\rline{bilpreuseconstraintexample} reuse constraints for the example 
of \Cref{lst:bilpstexample}.
First, $\mathcal{S}(\texttt{S1}) = \polyhedral[N]{[i, j] : j = 0}$ since reduction's 
right hand side $A[j]$'s values are the same for different $i$ for any fixed $j$,  for 
$\mathcal{S}(\texttt{S1})$.
Second, $\mathcal{I}(\texttt{S1})$ is the universe set (i.e. does not impose 
any constraint), since the operator of \texttt{S1} is the 
addition operator (i.e. $+$) and it has a well defined inverse operator, subtraction 
(i.e. $-$).
Third, we give $\mathcal{L}(f_i)$ in \Cref{eq:ms-scan-faces-linealty} in 
\Cref{sec:extra_listings}.
Therefore, the intersection $\mathcal{S}(\texttt{S1}) \cap 
\mathcal{I}(\texttt{S1}) \cap \mathcal{L}(f_i) $ is equal to $\polyhedral[N]{[i, j] : j = 0} $ 
for $f_1$ 
 and $f_2$, and is equal to $\polyhedral[N]{[i, j] : i = 0 \land j = 0}$ for $f_{3...8}$.
In summary, we impose the constraint that $\vec{r}_1, \vec{r}_2 \in 
\polyhedral[N]{[r_i, r_j] : r_j = 0 }$, and $\vec{r}_i \in \polyhedral[N]{[r_i, r_j] : r_i 
= 0 \land r_j = 0 }$ for all $i \ge 3$.

\end{changednospace}

\subsubsection{Dependence Constraints}
The dependence constraints enforce that $\Theta$ satisfies the dependences of 
$\textit{prog}'$. Specifically, it requires that for each pair of statements $S_1$ and 
$S_2$ that potentially occur in $\textit{prog}'$, their scheduling functions 
$\Theta^{S_1}, 
\Theta^{S_2}$ satisfy the dependence relation 
$\mathcal{D}_{S_1, S_2}$. %
At a high level, we set up the dependence constraints just the same as in 
\Cref{eq:ilpscheduleconvex};
however, with \Cref{eq:mssrproblemstatementcorrect} we allow the dependence 
matrix $D_{S_1, S_2}$ to contain entries with 
(linear) terms with unknowns from $\vec{r}_1 ... \vec{r}_n$. 
An informative argument for why $D_{S_1, S_2}$ contains these unknown entries 
is:
if we look from the recursive ST view, each application of ST introduces a reuse 
vector variable  $\vec{r}_i$, and the algorithm recurses down to the 
residual reductions -- for the next recursive application, we can think of it as 
taking in a program with both the original program's parameters and the reuse 
vectors introduced by the previous ST application. The residual reductions' 
domains then have space extended by $\vec{r}_1 ... \vec{r}_n$. 

\begin{changednospace}{mgreen}
\paragraph{Example}
We\rline{bilpdependencyexample} formulate the dependence constraints 
(\Cref{line:mssrproblemstatementcorrecttexistheta}) for $\textit{prog}'$ in 
\Cref{lst:bilpstexample}.
This step is the same as in \Cref{sec:ilp_scheduling} 
except that the dependence 
 relations' constraint matrices, instead of being constant matrices, now contain 
 $r_1$ and $r_2$ as integer variables.
For brevity of presentation, we will give the 
example of formulating the 
dependence constraint (i.e. 
\Cref{eq:ilpscheduleconvex,eq:ilp_scheduling_after_farkas}) for statement 
\texttt{S1AddReuse} to itself. 
The rest of the dependence constraints between other pairs of statements can be 
formulated in the same way.

\begin{wrapfigure}[7]{r}{0.45\linewidth}
\centering
\begin{minipage}[t]{\linewidth}%
\medmuskip=1mu%
\small
\vspace{-10pt}
\noindent\begin{equation}
\label{eq:bilp_dependency_relation_constraint_matrix}
\hspace{-9pt}
M^{\mathcal{D}_{S,S}} \cdot
 \begin{bmatrix} i \\ i' \\ N \\ 1  \end{bmatrix} = 
\left[\begin{matrix}0 & 0 & 0 & r_{i} - 1\\1 & 0 & 0 & - r_{i}\\-1 & 0 & 1 & - r_{i} - 
1\\-1 & 1 & 0 & - r_{i}\\1 & -1 & 0 & r_{i}\\0 & 0 & 0 & r_{j}\\0 & 0 & 0 & - 
r_{j}\end{matrix}\right] 
\cdot \begin{bmatrix} i \\ i' \\ N \\ 1  \end{bmatrix}
\end{equation}
\end{minipage}
\end{wrapfigure}

For this example, we let statement $S$ be \texttt{S1AddReuse}.
For this statement, the dependence relation $\mathcal{D}_{S,S}$ between 
instances of  $S$, is 
$\polyhedral[N]{[i] \rightarrow [i'] :  i' = i + r_i \land r_i > 0 
\land r_j = 0 \land r_i 
<= i < N - r_i}$. 
\Cref{eq:bilp_dependency_relation_constraint_matrix} presents the constraint 
matrix $M^{\mathcal{D}_{S,S}}$ of this polyhedral relation.
Notice that $M^{\mathcal{D}_{S,S}}$ contains $r_i$ and $r_j$ as variables in its last 
column.
We can then formulate the dependence constraint as in 
\Cref{eq:ilpscheduleconvex,eq:ilp_scheduling_after_farkas} and get 
\Cref{eq:bilp_example_dss_after_farkas}.

\begin{subequations}
\label{eq:bilp_example_dss_after_farkas}
\allowdisplaybreaks
\small
\newcommand{\DST}{\mathcal{D}_{S,S}}
\medmuskip=2mu
\begin{adjustbox}{max width=\textwidth}
\noindent%
\begin{minipage}{0.4\linewidth}
\begin{equation}
0 \le \delta^{\DST}_k \le 1 \;\; \forall k \in  \{1...3\} 
\end{equation}
\end{minipage}\hfil%
\begin{minipage}{0.25\linewidth}
\begin{equation}
\sum_{k=1}^3 \delta^{\DST}_k = 1
\end{equation}
\end{minipage}\hfil%
\begin{minipage}{0.35\linewidth}
\begin{equation}
\Lambda^{\DST}_k \ge 0 \;\; \forall k 
\in  \{1...3\} 
\end{equation}
\end{minipage}
\end{adjustbox}
\begin{equation}
\label{eq:ilp_scheduling_after_farkas_bilinear}
\left\lbrace
\begin{array}{@{}l@{}r@{}l}%
i: & - \Theta_{k,1}^S &= \Lambda^{\DST}_{k} \cdot [0, 1, -1, -1, 1, 0, 0, 0] \\
i': & \Theta_{k,1}^S &= \Lambda^{\DST}_{k} \cdot 
[0, 0, 0, 1, -1, 0, 0, 0]\\
N : &  K \cdot \sum_{i=1}^{k-1} 
\delta_i^{\DST} &= 
\Lambda^{\DST}_{k} \cdot [0, 0, 1, 0, 0, 0, 0, 0] \\
1: & K \cdot \sum_{i=1}^{k-1} 
\delta_i^{\DST} - \delta_k^{\DST} &= 
\Lambda^{\DST}_{k} \cdot [r_i - 1, -r_i, -r_i-1, -r_i, r_i, r_j, -r_j, 1]
\end{array}
\right. \; \forall k \in \{1...3\}
\end{equation}
\end{subequations}
Notice that \Cref{eq:ilp_scheduling_after_farkas_bilinear} refers to the integer 
variables $r_i$ and $r_j$, multiplying them with 
$\Lambda^{\mathcal{D}_{S,S}}_k$ makes the problem bilinear constrained.
\end{changednospace}

\subsubsection{Objective: Complexity}
\label{sec:mssrcomplexity}
Since we would like to minimize the overall complexity, we need to express our integer bilinear 
program's objective as the complexity of the transformed program. We can compute 
complexity of 
each face by counting the cardinality of each face's domain \cite{barvinok}.  
The cardinality of a face is an Ehrhardt polynomial \cite{ehrhart} in terms of the program parameters. 

\paragraph{Encoding}
If the program only has one parameter, then the degree of the polynomial is a natural choice of a scalar that 
represents the complexity of the program. 

If the program has multiple parameters, then one needs to be careful about comparing complexities: it is 
necessary to be able to compare between $\BigO(M^2 N)$ and $\BigO(M N^2)$ in 
order to minimize 
complexity. 
To this end, we assume that a total ordering is given for all possible polynomial terms of global parameters 
as a sequence of increasing scalars. For example, with two global parameters $M, N$, and maximum 
possible complexity $\BigO(M^2 N^2)$, a total ordering such as $\BigO(1) < \BigO(M) < \BigO(N)  < 
\BigO(MN) < \BigO(M^2 N) <\BigO(M N^2) \allowbreak < \BigO(M^2 N^2)$ is given, and integers $0 ... 6$ are assigned to each big-O term in the previous 
sequence.

\paragraph{Summing Scalar Encodings}
Either the program has a single global parameter or has multiple global parameters, we have a 
mapping from complexities, which are polynomials in terms of global parameters, to their scalar 
encodings. 
Since the final objective is the total complexity of the full transformed program, we need to 
sum the scalar encoding of complexities for all statements, without losing the ability to compare the 
resultants' degrees. 
To that end, we propose to use a simple base-$|S|$ encoding method where $|S|$ is the maximum 
number of statements in the program: 
for a complexity encoded as scalar $c$, we use $|S|^c$ as a term in the final objective.  As 
an example, to sum two complexities represented in scalar $c_1$ and $c_2$,  we compute $|S|^{c_1} 
+ |S|^{c_2}$. We define the  base-$|S|$ sum of $c_i$ as $\sum |S|^{c_i}$. 

\paragraph{Indicator Variable}
In the formulation, we require indicator variables to indicate if ST is disabled along  a certain face 
-- in which case no complexity reduction should be applied for the corresponding domain. We can use the 
big-M method, a well-known ILP modeling trick \citep{ilpbigm}, to 
encode an indicator variable $y \in \{0, 1\}$ for the 
constraint $x = 0$ so  that $y = 1$ iff $x = 0$. \todo{more in appendix?}

\begin{changednospace}{mgreen}
\paragraph{Example}
To\rline{bilpcomplexityexample} formulate the objective by encoding the complexity of the $\textit{prog}'$ in 
\Cref{lst:bilpstexample}, we first 
find the asymptotic complexity of each face by counting its cardinality.
For this example, we have $\textit{complexity}(f_1) = \BigO(N^2)$, 
$\textit{complexity}(f_2) = \BigO(N)$,
$\textit{complexity}(f_3) = \BigO(N)$ and
$\textit{complexity}(f_i) = \BigO(1), \forall i \ge 4$.
Next, we assign an indicator variable to indicate if ST is disabled along a certain 
face -- a face with disabled ST implies that the domain that corresponds to the 
face is non-empty and therefore incurs cost in the objective.
Take the face $f_1$ for example, ST is disabled along this face and incurs cost iff 
$\vec{r}_1 = [0,0]^\transpose$. 
We use $\bm{1}_{\vec{r}_i = \vec{0}} \in \{0, 1\}$ to denote this indicator variable.
Assuming the ordering $\BigO(1) < \BigO(N) < \BigO(N^2)$, we may assign the 
integers $0,1, 2$ to each complexity term.
The maximum number of statements in the program is $|S| = 9$, and we use a 
base-$|S|$ encoding for summing the final objective. 
The final objective then becomes
\begin{equation}
\sum_{i}  |S|^{\textit{complexity}(f_i)} \cdot \bm{1}_{\vec{r}_i = \vec{0}} 
= 9^2 \cdot \bm{1}_{\vec{r}_1 = \vec{0}} + 9^1 \cdot  \bm{1}_{\vec{r}_2 = 
\vec{0}} 
+ 9^1 \cdot  \bm{1}_{\vec{r}_3 = \vec{0}} + 9^0 \cdot \bm{1}_{\vec{r}_4 = 
\vec{0}} + ... 9^0 \cdot \bm{1}_{\vec{r}_8 = \vec{0}}
\end{equation}
\end{changednospace}

\subsection{Discussion}
\label{sec:mssr:discuss}

The above formulation is an integer objective bilinear constrained program. 
The objective is linear because it is an affine combination of the indicator variables. 
The problem is bilinear constrained because: in the original ILP formulation scheduling, the dependence 
matrix (defined in \Cref{sec:dependencerelation}) is multiplied by a vector of unknowns to form a linear 
constraint; however by introducing the unknown reuse vectors $\vec{r}_i$, the dependence matrix 
contains entries that depends on $\vec{r}_i$, thereby making the constraints 
bilinear.

\section{SDR Heuristic Algorithm}
\label{sec:mssrsol}

The problem formulation we present in \Cref{sec:mssribpformulation} is a full 
characterization of the SDR problem. In this work we consider this formulation only as a 
specification instead of a complete solution ---  solving an integer linear objective 
bilinear constrained program is NP-hard. 
The size of the formulation (i.e. total number of constraints and number 
of variables) 
in \Cref{sec:mssribpformulation} is proportional to the number of 
statements, number of faces per statement and the maximal complexity of the program 
-- either one of which could potentially lead to exponential blow up in the size of the 
formulation. Further, our formulation of dependence resolution is based on an ILP 
formulation of multidimensional scheduling, which by itself 
already introduces a tractability challenge as pointed out in 
\citet{ilpschedulemultidim}.

For these reasons, we propose here a sound heuristic solution to SDR.
The key idea behind our heuristic approach is that for a program that 
has an affine \hl{myellow}{sequential} schedule, 
\footnote{
\hlone{msilver}{We\rline{finalremoveappendix} rely on this for the
soundness of our heuristic algorithm.
We acknowledge that enforcing sequentiality may limit 
available parallelism.  
In practice, our algorithm works with an affine sequential schedule computed by 
any algorithm.}
\ifdefined\includesequetialscheduleappendix
\hl{mred}{
We, however, provide a variant of the formalization of 
\mbox{\citet{ilpschedulemultidim}} in 
\mbox{\Cref{sec:sequential_schedule}} 
that 
includes additional constraints that ensure sequentialilty,
if one were to use the ILP approach. We note that our extension is not 
guaranteed to produce a minimal dimension schedule.
}
\fi
}
we can leverage the schedule itself to choose a reuse vector for each ST 
that we apply to the program. 
Specifically, for any reuse vector that is valid for a given face (according
the constraints in  \Cref{sec:mssribpformulation}),
our algorithm chooses either the reuse vector itself or its negation as the reuse vector for the ST. 
This algorithm -- though simple -- is
still optimal for reductions that have inverses -- which spans a broad class of programs --
and always preserves the original dependences of the program.

\subsection{Insights}
\label{sec:heuristicdirection}

The key insights of our algorithm are that 1) choosing any valid 
reuse vector for a given ST results in the same final algorithmic 
complexity for the program (implied by \citet{sr})
and 2) for any valid reuse vector, the direction itself or its negation adheres to 
a \hlone{myellow}{sequential} schedule of the left hand side of the reduction.
We demonstrate these two insights with the following lemmas.

\begin{lemma}
\label{lemma:complexitydecrease}
For any application of ST, the complexity decrease is always the same regardless of the 
actual choice of reuse vector. 
\end{lemma}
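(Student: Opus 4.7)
The plan is to argue that the asymptotic complexity drop under a single application of ST is governed entirely by the drop in polyhedral dimension of the reduction domain, and that this dimension drop is invariant across all valid choices of reuse vector. Concretely, if the original reduction has an $n$-dimensional domain $\mathcal{P}$, its complexity is the cardinality of $\mathcal{P}$, which by Ehrhart theory is a polynomial of degree $n$ in the program parameters. I would show that after ST the residual reduction domains all have dimension $n-1$, and this is true regardless of which valid $\vec{r}$ is chosen.

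\paragraph{Key steps.}
First, I would fix a single application of ST to a statement $\text{S}$ with $n$-dimensional domain $\mathcal{P}$ along a reuse vector $\vec{r}$ satisfying $\vec{r} \in \mathcal{L}(\mathcal{P}) \cap \mathcal{I}(\text{S}) \cap \mathcal{S}(\text{S})$. Second, I would recall from Section~\ref{sec:st} that the new reductions ST introduces are over the pieces of the two shells $\mathcal{P} - \mathcal{P}'$ and $\mathcal{P}' - \mathcal{P}$, where $\mathcal{P}'$ is the shift of $\mathcal{P}$ along $\vec{r}$. Third, I would invoke Lemma~\ref{thm:subdomaindecompose} (Local Face Correspondence) to decompose each shell into convex pieces, each of which corresponds uniquely to an $(n-1)$-dimensional face of $\mathcal{P}$, and thus each piece has polyhedral dimension at most $n-1$. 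Fourth, I would observe that the post-ST reductions live precisely on these pieces, so their cardinalities are Ehrhart polynomials of degree at most $n-1$ in the parameters. Fifth, I would note the Complexity validity condition $\vec{r} \in \mathcal{L}(\mathcal{P})$ explicitly excludes reuse vectors along directions of constant thickness; this guarantees the piece domains are of full dimension $n-1$ (as opposed to a further dimension collapse), so the degree is exactly $n-1$.

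Putting these together, the complexity decrease on this ST application is from a degree-$n$ term to a sum of degree-$(n-1)$ terms in the parameters, independently of which valid $\vec{r}$ was chosen. Under the complexity encoding of Section~\ref{sec:mssrcomplexity} -- which orders terms by the total ordering on monomials in the parameters and compares leading big-$\mathcal{O}$ terms -- this translates to the same drop in the scalar encoding across all valid $\vec{r}$.

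\paragraph{Main obstacle.}
The hard part will be pinning down what ``complexity decrease is the same'' means precisely: the Ehrhart polynomials of distinct faces can differ in leading coefficient and in which parameter monomials dominate, so pointwise cardinalities are \emph{not} identical across choices of $\vec{r}$. The right interpretation, and the one I would use, is the asymptotic/leading-monomial notion built into the complexity encoding of Section~\ref{sec:mssrcomplexity}: under that encoding the drop is determined purely by polyhedral dimension, which Lemma~\ref{thm:subdomaindecompose} and the validity condition $\mathcal{L}$ fix to $n \to n-1$ regardless of the chosen reuse vector. A secondary subtlety is that ST also introduces a small completion statement on boundary points and an auxiliary right-hand-side array (as in $\texttt{S1Add}$ in \Cref{lst:srprefixsumir}); I would note these contribute only lower-order terms and so do not affect the leading complexity.
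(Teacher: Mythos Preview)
Your approach is essentially the same as the paper's: both invoke the Local Face Correspondence (Lemma~\ref{thm:subdomaindecompose}) to argue that the residual reduction domains after ST are pieces of the two half-shells $\mathcal{P}-\mathcal{P}'$ and $\mathcal{P}'-\mathcal{P}$, each in bijection with an $(n-1)$-dimensional face of $\mathcal{P}$.

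There is, however, a gap in how you finish the argument, and you correctly flag it yourself in your ``Main obstacle'' paragraph without actually resolving it. Saying only that the degree drops from $n$ to $n-1$ is not enough in the multi-parameter setting: two different degree-$(n-1)$ Ehrhart polynomials can have different leading monomials (e.g.\ $\BigO(M^2)$ versus $\BigO(MN)$), and under the encoding of \Cref{sec:mssrcomplexity} those are assigned different scalars, so the ``complexity decrease'' would not be the same. The missing observation is that the bijection given by Lemma~\ref{thm:subdomaindecompose}, applied to \emph{both} half-shells, hits \emph{every} $(n-1)$-face of $\mathcal{P}$, and this set of faces is a property of $\mathcal{P}$ alone, independent of $\vec{r}$. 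The paper makes exactly this point and sharpens it further: each residual piece has constant ``thickness'' orthogonal to its face (the constant depending only on $\vec{r}$ and the face's orientation), so its cardinality is that face's cardinality times a constant. Hence the post-ST complexity is, up to constants, the sum of the cardinalities of the $(n-1)$-faces of $\mathcal{P}$ --- the same big-$\mathcal{O}$ expression for every valid nonzero $\vec{r}$, not merely the same degree. Once you add that sentence, your argument matches the paper's.
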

\Cref{lemma:complexitydecrease} is implied by (though not enunciated 
in) \citet[Theorem 3 and 
Section 7]{sr}.
We state \Cref{lemma:complexitydecrease} explicitly for the ease of its 
reference and give a 
self-contained proof in \Cref{sec:prooflemma61}. 

Before introducing the next lemma, we first introduce an extended definition of 
scheduling functions.
Recall that the scheduling function of a reduction statement is an affine 
function from the reduction's domain to the timestamp.
We extend the context of a scheduling function from a reduction statement to the 
left hand side of a reduction in a given program as follows.
First the program is augmented by adding to the program a new redirect 
statement $\texttt{A}[\vec{x}] = \texttt{A}'[\vec{x}]$ with the same domain as 
the 
domain of $\texttt{A}$, where $\texttt{A}'$ is a fresh symbol which replaces the 
left hand side array $\texttt{A}$ of the program.
Then the scheduling function of the left hand side of the reduction is simply the 
scheduling function of the newly introduced redirect statement of the left hand side in 
the schedule of the augmented program.

\newcommand{\thetalhs}{\ensuremath{\Theta_{\texttt{A}}}}
\newcommand{\rA}{\vec{r_\texttt{A}} }
\begin{lemma}\label{lemma:proofgreedy}

Given a \hl{myellow}{sequential} 
schedule\rline{sequentialschedulelemma} for the augmented program, 
then for any ST 
application on a reduction whose operator has an inverse and for any
valid reuse vector $\vec{r}$, either $\vec{r}$ or $-\vec{r}$ agrees 
with the \hlone{myellow}{sequential} schedule of the original program and does not 
introduce a dependence 
cycle.
\end{lemma}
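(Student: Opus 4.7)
The plan is to exploit the affine structure of the schedule together with the symmetry induced by the existence of an inverse operator. Let $\vec{r}$ be a valid reuse vector and let $\rA$ denote its image under the reduction's projection onto the LHS index space. Since $\thetalhs$ is affine, the schedule difference $\thetalhs(\vec{x} + \rA) - \thetalhs(\vec{x})$ is a constant vector independent of $\vec{x}$, determined entirely by the linear part of $\thetalhs$ applied to $\rA$. Under the lexicographic order on timestamps this constant is either strictly positive, strictly negative, or the zero vector, and this trichotomy is exactly what will let us pick between $\vec{r}$ and $-\vec{r}$.

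First I would establish that when the reduction operator admits an inverse, the set of valid reuse vectors is closed under negation. The complexity condition $\mathcal{L}(f_i)$ depends only on the 1-dimensional linear span of $\vec{r}$ and is therefore orientation independent; the sharing condition $\mathcal{S}$ likewise only requires that the RHS evaluate identically along the line through $\vec{r}$, which is symmetric in sign; and the inverse condition $\mathcal{I}$ is vacuous whenever an inverse exists, removing the one-sided constraint on outward-pointing reuse vectors from the original SR framework. Consequently $-\vec{r}$ is valid whenever $\vec{r}$ is, and we are free to consult the sign of $\thetalhs(\rA)$ before committing.

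Next I would handle the non-degenerate case: if $\thetalhs(\rA)$ is lexicographically positive, declare $\vec{r}$ to be the chosen direction; otherwise choose $-\vec{r}$. In either case the chosen vector, call it $\vec{r}^{\,\star}$, satisfies $\thetalhs(\vec{x}) < \thetalhs(\vec{x} + \vec{r}^{\,\star}_\texttt{A})$ everywhere on the LHS domain, so it ``agrees with the schedule'' by \Cref{def:scheduletimestamp}. The only new dependencies introduced by ST are edges of the form $B[\vec{x} - \vec{r}^{\,\star}_\texttt{A}] \to B[\vec{x}]$ on the residual reduction polyhedron (see \Cref{sec:st}), and by construction the original $\Theta$ already orders their endpoints correctly. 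Since $\Theta$ also satisfies every pre-existing dependency of the augmented program by hypothesis, $\Theta$ remains a valid schedule of the transformed program; a program admitting a valid schedule cannot contain a dependency cycle.

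The main obstacle I anticipate is the degenerate case $\thetalhs(\rA) = \vec{0}$, where the schedule does not distinguish consecutive LHS points along $\rA$ and neither orientation is a priori preferred. Here I would argue that the tie can be broken in either direction: because no existing dependency orders the iterations $B[\vec{x}]$ and $B[\vec{x} + \rA]$ (otherwise the affine $\Theta$ would have assigned them different timestamps), appending the new ST-induced edges in either direction cannot close a cycle through the existing dependency graph, and a trivial refinement of $\Theta$ that lexicographically tie-breaks along $\rA$ witnesses schedulability. This exhausts the cases and completes the argument, with the inverse hypothesis used exactly once, to license the choice of $-\vec{r}$ when the forward direction disagrees with $\Theta$.
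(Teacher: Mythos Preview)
Your argument is correct and follows the same core idea as the paper's proof: exploit the affinity of $\thetalhs$ to show that the schedule difference $\thetalhs\cdot[\vec{x}+\rA,\vec{p},1]^T-\thetalhs\cdot[\vec{x},\vec{p},1]^T$ is a constant vector independent of $\vec{x}$, then choose the sign of $\vec{r}$ according to the lexicographic sign of that constant so that the new ST-induced edges are already respected by the original schedule $\Theta$. You are in fact more careful than the paper in two places: you explicitly justify why $-\vec{r}$ remains a valid reuse vector (the paper leaves implicit that $\mathcal{I}$ is vacuous under the inverse hypothesis while $\mathcal{L}$ and $\mathcal{S}$ are sign-symmetric), and you separately handle the degenerate case $\thetalhs\cdot[\rA,\vec{0},0]^T=\vec{0}$, which the paper's proof tacitly assumes away by speaking of ``the first non-zero entry.''
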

\begin{proof}
Consider a reduction S with projection $\textit{proj}$ and left hand side 
array 
\texttt{A}.
Suppose \texttt{A} has an affine \hl{myellow}{sequential} schedule $\thetalhs$, then 
we have that $\thetalhs 
\cdot [\vec{x}, \vec{p}, 1]^\transpose$ is the schedule time for $\texttt{A}[\vec{x}]$. 
Let the vector $\vec{r}$ be in the same space as the domain of S, and we shift the domain 
of S along $\vec{r}$;
let the projected vector of $\vec{r}$ onto the domain of $\texttt{A}$ be 
$\rA = \textit{proj}(\vec{r})$.
Consider $\vec{x}$ and $\vec{x} + \rA$.
Their scheduled timestamps are $\thetalhs \cdot [\vec{x}, \vec{p}, 
1]^\transpose$ 
and $\thetalhs \cdot [\vec{x} + \rA, \vec{p}, 1]^\transpose$, respectively. 
For all $\vec{x}$, 
$\thetalhs \cdot [\vec{x} + \rA, \vec{p}, 1]^\transpose - 
\thetalhs \cdot [\vec{x}, \vec{p}, 1]^\transpose = \thetalhs \cdot 
[\rA, \vec{0}, 0]^\transpose$ 
is a constant that does not depend on $\vec{x}$.
\hl{myellow}{Since $\thetalhs$ 
is\rline{sequentialschedulelemmacase} sequential, the 
difference $\thetalhs \cdot [\rA, \vec{0}, 0]^\transpose$ is non-zero for 
all non-zero $\rA$.}
Therefore, 
$\texttt{A}[\vec{x}]$ is scheduled either always before 
 $\texttt{A}[\vec{x} 
+ \rA]$
or always after $\texttt{A}[\vec{x} 
+ \rA]$. 
We describe these two cases in detail as follows. 

\paragraph{Case 1}
When the first non-zero entry (in accordance with the timestamp 
comparison in \Cref{def:scheduletimestamp}) of $\thetalhs \cdot [\rA, 
\vec{0}, 0]^\transpose$ 
is positive, then $\texttt{A}[\vec{x}]$ is always scheduled before 
$\texttt{A}[\vec{x} + 
\rA]$.
In this case, applying ST with the reuse vector $\vec{r}$ will not 
introduce any 
dependence cycle, since the newly introduced dependence is always 
consistent with 
the original schedule.
\paragraph{Case 2}
When the first non-zero entry of $\thetalhs \cdot [\rA, \vec{0}, 
0]^\transpose$ is 
negative, 
$\texttt{A}[\vec{x}]$ is always scheduled after $\texttt{A}[\vec{x} + \rA]$.
In this case, applying ST with the reuse vector $-\vec{r}$ will not 
introduce any 
dependence cycle, for the same reason as in Case 1.

Further, since $\vec{r}$ chosen this way is always consistent with the original 
schedule, a previous application of ST will not affect a later application of ST --- 
intuitively, a previously applied ST introduces a dependence that can be 
subsumed by an 
enforced dependence according to the original program's schedule; thus later 
a application of ST, as long as it is also consistent with original schedule, will not 
be affected.  
\end{proof}

\subsection{Algorithm}
\label{sec:heuristicalgo}

With justification in \Cref{sec:heuristicdirection}, we now introduce the 
heuristic algorithm in \Cref{fig:heuristicalgo}. 

\begin{figure}[h]
\begin{framed}
\begin{enumerate}
	\item \hl{myellow}{Schedule the augmented program to obtain an 
	initial sequential\rline{heuristicalgorequiresequential} schedule 
	$\Theta$ for all statements and left hand side of reductions}
	\item Apply ST to all faces of all reduction statement's domains; choose the direction 
	that is consistent with $\Theta$ by:
	\begin{enumerate}
		\item First pick any valid reuse vector $\vec{r}$ from the candidate set.
		\item Test if  $\vec{r}$ is consistent with $\Theta$, if not consistent, set $\vec{r} 
		\leftarrow - \vec{r}$, if $-\vec{r}$ is also a valid reuse vector; otherwise, do not 
		apply the current ST.
	\end{enumerate}
\end{enumerate}
\end{framed}
\caption{SDR heuristic algorithm}
\label{fig:heuristicalgo}
\end{figure}

To test if $\vec{r}$ is consistent with $\Theta$, one can compute 
$\thetalhs \cdot [\rA, \vec{0}, 0]^\transpose$, with $\thetalhs$ and 
$\rA$ 
defined as in
\Cref{lemma:proofgreedy},  and then test if the first 
non-zero entry is\rline{finalalgo} positive.  \hlone{myellow}{\hphantom{xxxx}}

\paragraph{Example}\rline{heuristicexample}
\Cref{sec:example} shows an example of the above heuristic algorithm.
Specifically, the middle polyhedron in \Cref{fig:s1} illustrates the 
augmented redirect statement with the same domain as the left hand side of the 
reduction (i.e. $\polyhedral[N]{[i] : 0 \le i < N}$). 
A valid two dimensional \hlone{myellow}{sequential} schedule for the left hand side of 
the reduction is 
$\tau_{\texttt{B}} = \Theta_\texttt{B} \cdot [i, N, 
1]^\transpose = \begin{bmatrix}
0&0&1 \\ 1&0&0
\end{bmatrix} \cdot [i, N, 
1]^\transpose  = [1, i]^\transpose$. 
Then, for the reuse vector $\vec{r} = [1, 0]^\transpose$, 
we have $\Theta_\texttt{B} \cdot \vec{r}_\texttt{B} = \Theta_\texttt{B} 
\cdot [1, 0, 0]^ \transpose= [0, 1]^\transpose$. 
Since the first non-zero entry of $[0, 1]^\transpose$ is positive, ST with 
the reuse vector $[1, 0]^\transpose$ will not introduce a 
dependence cycle (i.e. consistent with the correct optimization in 
\Cref{sec:example}). 

\subsection{Algorithm Analysis}

\paragraph{Heuristic Scheduling} One advantage of the heuristic algorithm in 
\Cref{fig:heuristicalgo} is that the schedule 
$\Theta$ does not need to be obtained from forming and solving the ILP formulation as 
in \Cref{sec:ilp_schedule}, 
and one is free to choose any scheduling algorithm in the polyhedral literature such 
as \citet{zpolysched, someeffsol2affschedpt1, someeffsol2affschedpt2, plutosched}. 
Most of these algorithms, such as the PLUTO scheduler \citep{plutosched}, provide 
scalable solutions to the polyhedral scheduling problem and thus the algorithm in 
\Cref{fig:heuristicalgo} does not present bottleneck due to scheduling. 

\paragraph{Optimality Guarantee} The algorithm is optimal for the SDR problem 
if all the reduction operators have inverses. 
This is because the algorithm considers a basis direction of reuse, and picks the 
direction along that basis that is consistent with the original schedule. 
As long as all reduction operators have inverses, the heuristic algorithm will assign a 
non-zero reuse vector to each face that has valid reuse 
opportunities. In other words, the heuristic 
algorithm maximizes the total number of ST applications among all faces, if all reduce 
operators have inverses. For any ST application along a face, the complexity decrease 
is always the same 
regardless of the choice of reuse vector.
Therefore, maximizing the number of 
ST applications among all faces minimizes the total complexity.

Lastly, if a reduction operator does not have an inverse, thereby restricting the candidate set of directions, then it is possible for our algorithm to produce a non-optimal solution.
Specifically, if an operator does not have an inverse, the valid reuse vector for 
that operator will be restricted to a one sided direction (since  ST 
requires the reuse vector to point out of certain boundaries of the 
domain for a reduction if there is no inverse), instead of both directions 
of the basis. 
It is possible that the original program does not have an unique valid schedule. Consider 
the following scenario: 
one schedule  is consistent with $\vec{r}$, while another schedule is consistent with $- 
\vec{r}$; since the operator does not have an inverse, only the positive  
$\vec{r}$ is valid. Therefore, the initial schedule will affect whether this ST is applied or not -- 
which in turn leads to the suboptimality of the algorithm.

\section{Implementation}
\label{sec:implementation}

We implemented our IR as in \Cref{sec:ir} and the heuristic algorithm as in \Cref{sec:mssrsol} 
using Python. 
We used Integer Set Library (ISL) \citep{isl} for manipulation of polyhedral sets and 
relations. 
We use ISL's scheduler and code generator and generate Cython code 
\cite{cython}, 
which is then\rline{implementcpp} compiled to C++.
\hlone{myellow}{To obtain an \hl{myellow}{affine sequential schedule} for the\rline{finalimpl} original 
augmented 
program, 
we use a PLUTO-like scheduler \mbox{\cite{plutosched}} built into ISL. }

\newcommand{\tool}{Shuffle}

\section{Evaluation}
\label{sec:eval}

The algorithm presented in this work is particularly effective on optimizing
 unoptimized implementations of {\em probabilistic
inference} procedures into efficient implementations, where the inference 
procedures 
have mathematical specifications that naturally translate to our IR. 
The inference procedures are also {\em iterative}, so they contain 
dependent reductions that are not addressed by previous work \citep{sr}. 

{\em Research Question.} 
The goal of this section is to evaluate how effective the heuristic 
algorithm is at 
improving the performance on benchmarks consisting of algorithms described 
above.

\subsection{Evaluation Metrics}\label{sec:evaluationmetrics}
We considered the following two aspects to evaluate the effectiveness of the 
heuristic algorithm.

\subsubsection{Complexity}
We first evaluate asymptotic complexity of the algorithms.
This is an important metric because it determines how well these algorithms scale 
to 
large data sets, and hence how widely they can be applied. 

Optimality is defined regarding programs realizable through transformations 
presented in this work.
We evaluated our implementation in \Cref{sec:implementation} using 
unoptimized implementations of probabilistic inference procedures. 
We present their algorithmic complexities before and after optimization, as well 
as the optimal complexities achievable with transformations in this work, by 
solving the problem formulation in \Cref{sec:mssribpformulation} exactly. 
In addition, we also report the complexities of manual implementations using
transformations that are not in this work. 

We collected the complexities before and after by counting the cardinality of the 
resultant polyhedral domains 
using library implementations in \citet{barvinok}. 
We collected the optimal complexities by inspecting the benchmarks and deriving 
the 
optimal complexities manually. 
We collected complexities of manual implementations by either finding an 
existing 
implementation of the 
algorithm if one exists in the literature or, otherwise, by manually deriving them.

\subsubsection{Runtime} 
Note that a real performance gain is not necessarily implied by a complexity 
improvement because the asymptotic complexity comparison ignores constant 
factors.
The constant factors can be caused by, for example, auxillary variables overhead 
or memory/cache effect induced by ST, and the constant factors could change.
Therefore, we further validate the runtime performance gains due to lowered 
compliexites, by measuring wall-clock time improvements between the optimized 
and unoptimized implementations.

\subsection{Benchmarks}
\begin{wraptable}{r}{0.5\linewidth}
\vspace{-1.2cm}
\caption{Benchmarks: Parameters and Meanings}
\label{tab:benchmarksparamsexplain}
\centering
\small
\renewcommand{\arraystretch}{0.828}
\begin{tabular}{@{}lll@{}}
\toprule
Benchmark(s) & Param & Meaning \\
\midrule
\multirow{2}{*}{GMM-*} & $N$ & \# of observations \\
& $K$ & \# of clusters \\
\cmidrule{1-3}
\multirow{2}{*}{LDA-*} & $W$ & \# of words in corpus \\
& $K$ & \# of topics \\
\cmidrule{1-3}
\multirow{5}{*}{DMM-*} & $W$ & \# of words in corpus \\ 
& $K$ & \# of topics \\
& $D$ & \# of documents in corpus \\
& $A$ & size of alphabet of corpus \\
& $L$ & max length of document \\
\cmidrule{1-3}
\multirow{3}{*}{LBP-Stereo} & $N$ & \# of total pixels \\
& $K$ & max stereo displacement \\ 
& $D$ & \# of neighbors of a pixel \\
\cmidrule{1-3}
\multirow{2}{*}{CoxPH} & $N$ & \# of observations \\
& $K$ & dim of a single observation \\
\bottomrule
\end{tabular}
\end{wraptable}
A subset of the benchmark algorithms considered are identified as 
``model-algorithm'' pairs,
where the model refers to a generative probabilistic model, and the algorithm
refers to a class of algorithm to perform inference on the model. 
We considered 3 models and 3 algorithms.
For models, we considered the Gaussian Mixture Model (GMM)~\cite{gmm}, Latent 
Dirichlet
Allocation (LDA)~\cite{lda} and Dirichlet Multinomial Mixture (DMM)~\cite{dmm}. For
algorithms, we considered Gibbs Sampling (GS)~\cite{gibbssampling}, Metropolis
Hasting (MH)~\cite{mha,mhb} and Likelihood Weighting (LW)~\cite{likelihoodweighting}. 
Thus our benchmarks contain a total of 9 ``model-algorithm'' pairs.

Models and algorithms above have broad applications in the literature. 
The models for LDA \citep{lda, ldagibbs} and DMM \citep{dmm} are popular for 
existing data science 
problems. 
The models for GMM \citep{augurv2, histogram}, LDA \citep{augurv2, histogram}, 
and DMM \citep{histogram} have been used as benchmarks for probabilistic inference 
systems.
Gibbs sampling \citep{gibbssampling}, Metropolis-Hastings \citep{mha, 
mhb}, and Likelihood 
Weighting \citep{likelihoodweighting} 
are all widely used inference algorithms in the literature. 
LDA and DMM are 
particularly valuable 
benchmarks because there are published Gibbs sampling algorithms that researchers have manually 
optimized (\citet{ldagibbs} and \citet{dmmcollapsed}, respectively). 

In addition, we included two other benchmarks, namely the Loopy Belief Propagation 
on a 2D grid 
model for the application of Stereo matching (LBP-Stereo \citep{lbpstereo,lbppldi}
),
 and the Cox proportional hazards model (CoxPH) \citep{coxbook, coxpaper}. 
Loopy Belief Propagation \cite{pattrecogml} is an iterative approximate inference 
algorithm, and its instantiation on the 2D grid model has applications in fields such as 
vision \citep{lbpstereo,lbppldi} and physics \cite{isingphysics}. 
CoxPH is a well known statistical model, which is typically combined with Newton's 
method, an iterative optimization algorithm, for inference on the model's
parameters. 
CoxPH is commonly found in medical applications 
\citep{coxmedical1,coxmedical2}, and mechanical systems \citep{coxmechanics}.

All of the benchmarks have a common feature that they are iterative methods 
specialized to 
a generative probabilistic model. 
The parameters for these benchmarks are listed in 
\Cref{tab:benchmarksparamsexplain}.

\subsection{Results} 
As described in \Cref{sec:evaluationmetrics}, we first evaluated our method by 
analyzing the asymptotic performance 
improvements and reported the results in \Cref{sec:complexityresults}.
Then, we validated the runtime improvements of the benchmarks in 
\Cref{sec:runtimevalidation}.

\subsubsection{Complexity Results}
\label{sec:complexityresults}

\begin{table}[t]
\caption{Benchmarks: Comparison of Complexities}
\label{tab:benchmarkstable}
\centering
\small
\begin{adjustbox}{max width=\textwidth}
\renewcommand{\arraystretch}{0.9}
\begin{tabular}{@{}lllllq@{}}
\toprule
Benchmark & Original & Optimized (Heuristic) & SDR-Optimal & 
Manual & \#IR \\
\midrule
GMM-GS & $\BigO(N^2K^2)$    & $\BigO(NK)$ & $\BigO(NK)$ & $\BigO(NK)$ & 16 \\

GMM-MH       & $\BigO(N(N + K))$ & $\BigO(N)$ & $\BigO(N)$ & $\BigO(N)$ & 16  \\
GMM-LW       & $\BigO(N(N + K))$  & $\BigO(N)$ & $\BigO(N)$ & $\BigO(N)$ & 5 \\
LDA-GS   & $\BigO(W^2K^2)$ & $\BigO(WK)$ & $\BigO(WK)$ & $\BigO(WK)$ & 20 \\
LDA-MH      & $\BigO(W^2K)$  & $\BigO(WK)$ & $\BigO(WK)$ & $\BigO(WK)$ & 42 \\
LDA-LW       & $\BigO(W^2K)$  & $\BigO(W)$ & $\BigO(W)$ & $\BigO(W)$ & 7  \\
DMM-GS & $\BigO(WADK^2 + D^2K^2)$ & $\BigO((W+A)KD)$ & $\BigO((W+A)KD)$ & 
$\BigO(AKD)$ & 40 \\
DMM-MH & $\BigO(D^2K^2 + D(W + A))$ & $\BigO((K + W + A)D)$ & $\BigO((K + W + 
A)D)$ & 
$\BigO((K + L + A)D)$ & 82  \\
DMM-LW & $\BigO((WA+K)D)$ & $\BigO((K + W +A)D)$ & $\BigO((K + W +A)D)$ & 
$\BigO((K + L + 
A)D)$ & 10 \\
LBP-Stereo & $\BigO(N K D^2)$ & $\BigO(N K D)$ & $\BigO(N K D)$ &$\BigO(N K D)$  
& 3 \\
CoxPH & $\BigO(K^2 N^2)$ & $\BigO(K^2 N)$ & $\BigO(K^2 N)$ & $\BigO(K^2 N)$ & 6 
\\
\bottomrule
\end{tabular}
\end{adjustbox}
\end{table}

\Cref{tab:benchmarkstable} summarizes the results on comparison of complexities, 
expressed in terms of the corresponding parameters of each benchmark.

The column ``Original'' gives the complexity of the original program for the 
benchmarks.
The column ``Optimized (Heuristic)'' (later abbreviated as ``Optimized'') gives 
the complexity of the transformed program using the 
heuristic 
implementation in \Cref{sec:implementation}.
The column ``SDR-Optimal'' gives the complexity of the transformed program 
by 
potentially solving 
the problem formulation in \Cref{sec:mssribpformulation} exactly 
\footnote{For\rline{howsolveexactly} all of our benchmarks, it turns out 
that our heuristic 
algorithm is able to assign a non-zero reuse vector to each 
non-degenerate face (i.e. one whose cardinality is not 
$\BigO(1)$). This implies that our heuristic algorithm's solution is 
already the optimal solution to SDR, since it does not have further 
complexity reduction opportunities. 
Therefore, through this argument we obtain the SDR-optimal 
complexities without having to actually construct and solve the bilinear 
program.
}; 
and this is the optimal complexity one can achieve using 
techniques presented in this work.
The column ``Manual'' gives the complexity of a potential optimized manual 
implementation written by a  developer; this means that the complexity reduction 
potentially comes from 
transformations not present in this work.

Comparing the ``Original'' and ``Optimized'' columns,  our approach 
reduces the complexities for all benchmarks.
Comparing the ``Optimized'' and ``SDR-Optimal'' columns, our 
heuristic approach produces complexities same as solving SDR optimally for all 
benchmarks. 
Comparing the ``Optimized'' and ``Manual'' columns, our approach 
produces complexities the same as manual implementations for 8 out of 11 
benchmarks. 
We identified that the 
 3 benchmarks related to DMM require additional data layout modifications which we 
 did not consider in this work, which is a direction of future research.

\subsubsection{Runtime Validation}
\label{sec:runtimevalidation}
{
\newcommand{\tablerow}[4]{%
#1 &%
\ifthenelse{\equal{#3}{timeout}}{\textbf{#3}}{\SI{#3}{#2}} & %
\SI{#4}{#2} & %
{
	\ifthenelse{\equal{#3}{timeout}}{%
		\ifthenelse{\equal{#2}{\s}}{%
			\FPset{\numerator}{43200}
		}{%
			\FPset{\numerator}{43200000}
		}
	}{
		\FPset{\numerator}{#3}
	}%
	\FPset{\denom}{#4}
	\FPdiv{\speedup}{\numerator}{\denom}
	\bf \ifthenelse{\equal{#3}{timeout}}{>}{}
	\num[detect-weight=true,scientific-notation=engineering,round-mode=figures]{%
	\FPprint\speedup}
	 x
}%
}
\begin{table}[t]
\hspace{-.2cm}
\begin{minipage}[t]{0.4\linewidth}
\caption{Benchmarks: Parameter Sizes}
\label{tab:benchmarksparamssizes}
\centering
\small
\renewcommand{\arraystretch}{0.828}
\begin{tabular}[t]{@{}lcc@{}}
\toprule
Benchmark(s) & Parameter & Size \\
\midrule
\multirow{2}{*}{GMM-*} & $N$ & $10000$ \\
& $K$ & $10$ \\
\cmidrule{1-3}
\multirow{2}{*}{LDA-*} & $W$ & $466,000$ \\
& $K$ & $50$ \\
\cmidrule{1-3}
\multirow{5}{*}{DMM-*} & $W$ & $570,000$ \\ 
& $K$ & $4$ \\
& $D$ & $278$ \\
& $A$ & $129$ \\
& $L$ & $3202$ \\
\cmidrule{1-3}
\multirow{3}{*}{LBP-Stereo} & $N$ & $110,592$ \\
& $K$ & $16$ \\ 
& $D$ & $4$ \\
\cmidrule{1-3}
\multirow{2}{*}{CoxPH} & $N$ & $424$ \\
& $K$ & $11$ \\
\bottomrule
\end{tabular}
\end{minipage}\hfil%
\begin{minipage}[t]{0.6\linewidth}
\caption{Runtime evaluations}
\label{tab:shufflebench}
\centering
\begin{tabular}[t]{@{}lcc@{}c@{}}
\toprule
\multirow{2}{1.5cm}{Benchmark}  & 
\multirow{2}{1.5cm}{Original} & 
\multirow{2}{1.5cm}{Optimized (Heuristic)} & 
\multirow{2}{2.5cm}{\centering Speedup}  \\
&&& \\
\midrule
\tablerow{GMM-GS}{\ms}{29.2}{4.1} \\
\tablerow{GMM-MH}{\ms}{0.94}{0.72} \\
\tablerow{GMM-LW}{\s}{2.5}{1.7} \\
\tablerow{LDA-GS }{\ms}{timeout}{7.1} \\
\tablerow{LDA-MH}{\s}{timeout}{0.69} \\
\tablerow{LDA-LW}{\s}{timeout}{41.1} \\
\tablerow{DMM-GS}{\s}{2.2}{0.54} \\
\tablerow{DMM-MH}{\ms}{529}{14.3} \\
\tablerow{DMM-LW}{\s}{71}{1.23} \\
\tablerow{LBP-Stereo}{\s}{16.9}{15.1} \\
\tablerow{CoxPH}{\ms}{69.1}{8.4} \\
\bottomrule
\end{tabular}
\end{minipage}
\end{table}
}%
So far we have evaluated our heuristic algorithm using algorithmic complexity as 
the 
primary factor, which ignores constant factors caused by, for example, auxillary 
variables overhead or memory/cache effect induced by ST. 
In this section, we validate our hypothesis that asymptotic complexity improvements 
dominates potential constant factors improvements for the parameters of these 
benchmarks by timing our benchmarks and comparing the runtimes 
of the naive implementations with the optimized implementations. 
We\rline{implementcpp2} generate C++ code for the IR programs 
corresponding to the 
Original and Optimized (Heuristic)  columns 
in~\Cref{tab:benchmarkstable} . 
We ran these implementations and report timeouts for benchmarks that ran for 
12+ hours. 

\paragraph{Parameter Sizes} 
We collected the typical instantiated values for global parameters from the 
corresponding literature.
Specifically, 
for GMM we use
\citet{augurv2}, for LDA we use \citet{kos}, for DMM we use \citet{twins}, for 
LBP-Stereo we use \citet{lbpstereo} 
and for CoxPH we use \citet[Appendix~D2]{coxbook}.
Based on these prior works, we collected the following parameters for each 
model in \Cref{tab:benchmarksparamssizes}.

\paragraph{Results}
In \Cref{tab:shufflebench}, all benchmarks have non-trivial speedups. 
In particular, for LDA benchmarks all the unoptimized implementations timeout. 
This is because, in \Cref{tab:benchmarkstable}, the complexities of LDA benchmarks 
all improve by a factor of $W \times K$.
With our instantiated parameter values this factor is $466,000 \times 50 = 2.3 
\times 10^7$ -- the largest factor across all benchmarks. This large a factor 
unsurprisingly leads to the timeout of the unoptimzied implementations.
On the other hand, in \Cref{tab:benchmarkstable} LBP-Stereo's complexity only 
improves by a factor of $D$, the number of neighbors of a pixel, which is set to 4 
(i.e. number immediate neighbors of a pixel) in our parameter setting. 
Nonetheless, we observe a speedup of $1.1$x for this benchmark. 
We also note that this speedup scales with the specification of the LBP model 
-- setting to 8 neighboring pixels (i.e. nearby 8 pixels for a pixel at the center of 
a 3 by 3 square) would lead to speed up of of $1.4$x. 
In sum, the observed speedups validates that for these benchmarks and our 
technique,  complexity dominates constant-factor concerns.

\section{Related Work}
\label{sec:relatedwork}

\newcommand{\relatedworkparagraph}[1]{\item[#1]}

\begin{description}[style=unboxed,leftmargin=0cm]

\relatedworkparagraph{Reductions}
\citet{pollyred, ginsbach17} introduced techniques to 
detect reductions from loop based 
code; these techniques can be used as front-ends to our technique for 
conversion into our reduction based IR. 
\citet{rauchwerger99, pollyred, reddrawing, ginsbach17} 
optimized reductions in the 
polyhedral model for considerations such as privatization and 
parallelization. They do not optimize for complexities; however, they can be used as 
optimizing backends for generating efficient code for reductions after applying our 
method. 
\citet{programequivalencereduction} proposed a 
semi-algorithm that decides equivalence between programs with reductions; 
it can be used to check correctness of a program transformation.

\relatedworkparagraph{Simplifying Reductions} 
\citet{optaggarray} introduced a loop based transformation algorithm for reducing
complexities on loop programs. The algorithm uses the Omega calculator
\citep{omega} for analysis on a {\em contributing set}. 
The contributing set based anaylsis is general in the sense that it works for 
non-polyhedral sets as well. 
\citet{optaggarray} used only the direction of loop 
increment to decrease the complexity. 
\citet{sr} generalized the method in 
\citet{optaggarray} and introduced the simplifying reductions framework; 
the framework is more general in that it uses reuse vectors not limited to just the 
direction of loop increments.

\relatedworkparagraph{Scheduling}
We review related works on polyhedral scheduling in \Cref{sec:more_related_works}.

\relatedworkparagraph{Program Complexity}
\citet{determinecomplexity1} and \citet{determinecomplexity2} studied
methods\rline{relatedworkcomplexity} to 
infer the complexity of programs. 
These works can reduce the effort for manually 
analyzing the hand optimized benchmarks, which is particularly helpful 
for enlarging the benchmark suite.

\relatedworkparagraph{Incrementalization in Probabilistic Programming}
The problem of incrementalization occurs in probabilistic programming 
system (PPS), and is known as incrementalized inference. 
Existing work \citep{hakaru10, c3, swift, r2, yang14, zhang19} attempted to 
address the problem of incrementalized inference in PPS. 
However, these techniques are variants/combinations of 
1) tracing JITs, 
2) specialization and caching/memoization, 
3) dynamic dependence analysis, 
4) dynamic program slicing, or 
5) runtime symbolic analysis 
-- in summary, dynamic optimizations. These techniques introduce 
significant runtime overhead for storing dependence graph/traces (which is of size proportional to the 
number of the executed statement instances) and/or performing analysis on those graphs/traces 
dynamically. 
Our technique can be applied to PPS to solve the incrementalized inference problem; however, our technique 
is a static compilation techinique which do not suffer from runtime 
overhead.  

Many\rline{incorpratepps} existing and ongoing work 
\citep{church,augurv2,webppl,pyro,jags,hakaru,venture,shuffle} allow
the user to code in high level DSLs.
Though the details on these systems are out of the scope of the paper, 
our method can be potentially integrated into these systems for 
generating code with efficient complexity, which we discuss more
in \Cref{sec:incorprateintopps}.

\end{description}

\section{Conclusion}
\label{sec:conclusion}

In this work, we introduce the simplifying dependent reductions problem
and provide a heuristic algorithm that --- \hlone{myellow}{given an 
affine\rline{finalconclusion} sequential 
schedule for the program} --- is optimal for reduction operators that
have inverses. These reductions have otherwise only appeared 
as domain- or algorithm-specific optimizations as described in the published
description of standard probabilistic inference algorithms. Our
hope is that this work formally outlines a key general-purpose optimization
opportunity that can be delegated to the compiler, rather than
being a significant piece of manual implementation that stands between
the elaboration of a new probabilistic inference algorithm and its high-performance
implementation. Our results hold the promise that emerging languages 
and systems for this increasingly important class of computations could 
see significant performance improvements
by incorporating our techniques.

\begin{acks}                            %

\newcommand{\acklist}{%
\sortalpha{%
{Alex Renda}%
{Jesse Michel}%
{Jonathan Frankle}%
{Tian Jin}%
{Sriram Krishnamoorthy}%
{Charith Mandis}%
{Riyadh Baghdadi}%
{Sanjay Rajopadhye}%
}}

We would like to thank \acklist, and anonymous reviewers for their helpful comments and suggestions.
This work was supported in part by the Office of Naval Research (ONR-N00014-17-1-2699).
Any opinions, findings, and conclusions or recommendations expressed in this material are those of the author and do not necessarily 
reflect the views of the Office of Naval Research.

\end{acks}

\bibliography{paper}

%%% -*-BibTeX-*-
%%% Do NOT edit. File created by BibTeX with style
%%% ACM-Reference-Format-Journals [18-Jan-2012].

\begin{thebibliography}{14}

%%% ====================================================================
%%% NOTE TO THE USER: you can override these defaults by providing
%%% customized versions of any of these macros before the \bibliography
%%% command.  Each of them MUST provide its own final punctuation,
%%% except for \shownote{}, \showDOI{}, and \showURL{}.  The latter two
%%% do not use final punctuation, in order to avoid confusing it with
%%% the Web address.
%%%
%%% To suppress output of a particular field, define its macro to expand
%%% to an empty string, or better, \unskip, like this:
%%%
%%% \newcommand{\showDOI}[1]{\unskip}   % LaTeX syntax
%%%
%%% \def \showDOI #1{\unskip}           % plain TeX syntax
%%%
%%% ====================================================================

\ifx \showCODEN    \undefined \def \showCODEN     #1{\unskip}     \fi
\ifx \showDOI      \undefined \def \showDOI       #1{#1}\fi
\ifx \showISBNx    \undefined \def \showISBNx     #1{\unskip}     \fi
\ifx \showISBNxiii \undefined \def \showISBNxiii  #1{\unskip}     \fi
\ifx \showISSN     \undefined \def \showISSN      #1{\unskip}     \fi
\ifx \showLCCN     \undefined \def \showLCCN      #1{\unskip}     \fi
\ifx \shownote     \undefined \def \shownote      #1{#1}          \fi
\ifx \showarticletitle \undefined \def \showarticletitle #1{#1}   \fi
\ifx \showURL      \undefined \def \showURL       {\relax}        \fi
% The following commands are used for tagged output and should be
% invisible to TeX
\providecommand\bibfield[2]{#2}
\providecommand\bibinfo[2]{#2}
\providecommand\natexlab[1]{#1}
\providecommand\showeprint[2][]{arXiv:#2}

\bibitem[\protect\citeauthoryear{Atkinson, Yang, and Carbin}{Atkinson
  et~al\mbox{.}}{2018}]%
        {shuffle}
\bibfield{author}{\bibinfo{person}{Eric Atkinson}, \bibinfo{person}{Cambridge
  Yang}, {and} \bibinfo{person}{Michael Carbin}.}
  \bibinfo{year}{2018}\natexlab{}.
\newblock \showarticletitle{Verifying Handcoded Probabilistic Inference
  Procedures}. In \bibinfo{booktitle}{\emph{arXiv e-prints}}.
\newblock


\bibitem[\protect\citeauthoryear{Bondhugula, Hartono, Ramanujam, and
  Sadayappan}{Bondhugula et~al\mbox{.}}{2008}]%
        {plutosched}
\bibfield{author}{\bibinfo{person}{Uday Bondhugula}, \bibinfo{person}{Albert
  Hartono}, \bibinfo{person}{J. Ramanujam}, {and} \bibinfo{person}{P.
  Sadayappan}.} \bibinfo{year}{2008}\natexlab{}.
\newblock \showarticletitle{A Practical Automatic Polyhedral Parallelizer and
  Locality Optimizer}. In \bibinfo{booktitle}{\emph{Conference on Programming
  Language Design and Implementation}}.
\newblock


\bibitem[\protect\citeauthoryear{Cusumano-Towner, Saad, Lew, and
  Mansinghka}{Cusumano-Towner et~al\mbox{.}}{2019}]%
        {gen}
\bibfield{author}{\bibinfo{person}{Marco~F Cusumano-Towner},
  \bibinfo{person}{Feras~A Saad}, \bibinfo{person}{Alexander~K Lew}, {and}
  \bibinfo{person}{Vikash~K Mansinghka}.} \bibinfo{year}{2019}\natexlab{}.
\newblock \showarticletitle{Gen: a General-purpose Probabilistic Programming
  System with Programmable Inference}. In \bibinfo{booktitle}{\emph{Conference
  on Programming Language Design and Implementation}}.
\newblock


\bibitem[\protect\citeauthoryear{Daniel~Huang}{Daniel~Huang}{2017}]%
        {augurv2}
\bibfield{author}{\bibinfo{person}{Greg~Morisett Daniel~Huang,
  Jean-Baptiste~Tristan}.} \bibinfo{year}{2017}\natexlab{}.
\newblock \showarticletitle{Compiling Markov Chain Monte Carlo Algorithms for
  Probabilistic Modeling}. In \bibinfo{booktitle}{\emph{Conference on
  Programming Language Design and Implementation}}.
\newblock


\bibitem[\protect\citeauthoryear{Doerfert, Streit, Hack, and Benaissa}{Doerfert
  et~al\mbox{.}}{2015}]%
        {pollyred}
\bibfield{author}{\bibinfo{person}{Johannes Doerfert}, \bibinfo{person}{Kevin
  Streit}, \bibinfo{person}{Sebastian Hack}, {and} \bibinfo{person}{Zino
  Benaissa}.} \bibinfo{year}{2015}\natexlab{}.
\newblock \showarticletitle{Polly's Polyhedral Scheduling in the Presence of
  Reductions}. In \bibinfo{booktitle}{\emph{International Workshop on
  Polyhedral Compilation Techniques}}.
\newblock


\bibitem[\protect\citeauthoryear{Feautrier}{Feautrier}{1992a}]%
        {someeffsol2affschedpt1}
\bibfield{author}{\bibinfo{person}{Paul Feautrier}.}
  \bibinfo{year}{1992}\natexlab{a}.
\newblock \showarticletitle{Some efficient solutions to the affine scheduling
  problem. I. One-dimensional time}.
\newblock \bibinfo{journal}{\emph{International Journal of Parallel
  Programming}} \bibinfo{volume}{21}, \bibinfo{number}{5} (\bibinfo{date}{Oct.}
  \bibinfo{year}{1992}), \bibinfo{pages}{313--347}.
\newblock


\bibitem[\protect\citeauthoryear{Feautrier}{Feautrier}{1992b}]%
        {someeffsol2affschedpt2}
\bibfield{author}{\bibinfo{person}{Paul Feautrier}.}
  \bibinfo{year}{1992}\natexlab{b}.
\newblock \showarticletitle{Some efficient solutions to the affine scheduling
  problem. Part II. Multidimensional time}.
\newblock \bibinfo{journal}{\emph{International Journal of Parallel
  Programming}} \bibinfo{volume}{21}, \bibinfo{number}{6} (\bibinfo{date}{Dec.}
  \bibinfo{year}{1992}), \bibinfo{pages}{389--420}.
\newblock


\bibitem[\protect\citeauthoryear{Gautam and Rajopadhye}{Gautam and
  Rajopadhye}{2006}]%
        {sr}
\bibfield{author}{\bibinfo{person}{Gautam} {and} \bibinfo{person}{S.
  Rajopadhye}.} \bibinfo{year}{2006}\natexlab{}.
\newblock \showarticletitle{Simplifying Reductions}. In
  \bibinfo{booktitle}{\emph{Symposium on Principles of Programming Languages}}.
\newblock


\bibitem[\protect\citeauthoryear{Hall, Murphy, Amarasinghe, Liao, and Lam}{Hall
  et~al\mbox{.}}{1996}]%
        {interproceduralanalysisforparallelization}
\bibfield{author}{\bibinfo{person}{Mary~W. Hall}, \bibinfo{person}{Brian~R.
  Murphy}, \bibinfo{person}{Saman~P. Amarasinghe}, \bibinfo{person}{Shih~Wei
  Liao}, {and} \bibinfo{person}{Monica~S. Lam}.}
  \bibinfo{year}{1996}\natexlab{}.
\newblock \showarticletitle{Interprocedural analysis for parallelization}. In
  \bibinfo{booktitle}{\emph{Languages and Compilers for Parallel Computing}},
  \bibfield{editor}{\bibinfo{person}{Chua-Huang Huang},
  \bibinfo{person}{Ponnuswamy Sadayappan}, \bibinfo{person}{Utpal Banerjee},
  \bibinfo{person}{David Gelernter}, \bibinfo{person}{Alex Nicolau}, {and}
  \bibinfo{person}{David Padua}} (Eds.). \bibinfo{publisher}{Springer Berlin
  Heidelberg}, \bibinfo{address}{Berlin, Heidelberg}, \bibinfo{pages}{61--80}.
\newblock


\bibitem[\protect\citeauthoryear{Narayanan, Carette, Romano, Shan, and
  Zinkov}{Narayanan et~al\mbox{.}}{2016}]%
        {hakaru}
\bibfield{author}{\bibinfo{person}{Praveen Narayanan}, \bibinfo{person}{Jacques
  Carette}, \bibinfo{person}{Wren Romano}, \bibinfo{person}{Chung{-}chieh
  Shan}, {and} \bibinfo{person}{Robert Zinkov}.}
  \bibinfo{year}{2016}\natexlab{}.
\newblock \showarticletitle{Probabilistic Inference by Program Transformation
  in Hakaru (System Description)}. In \bibinfo{booktitle}{\emph{International
  Symposium on Functional and Logic Programming}}.
\newblock


\bibitem[\protect\citeauthoryear{Pouchet, Bastoul, Cohen, and Cavazos}{Pouchet
  et~al\mbox{.}}{2008}]%
        {pouchet.08.pldi}
\bibfield{author}{\bibinfo{person}{Louis-No{\"e}l Pouchet},
  \bibinfo{person}{C{\'e}dric Bastoul}, \bibinfo{person}{Albert Cohen}, {and}
  \bibinfo{person}{John Cavazos}.} \bibinfo{year}{2008}\natexlab{}.
\newblock \showarticletitle{Iterative optimization in the polyhedral model:
  Part {II}, multidimensional time}. In \bibinfo{booktitle}{\emph{Conference on
  Programming Language Design and Implementation}}.
\newblock


\bibitem[\protect\citeauthoryear{Pouchet, Bastoul, Cohen, and
  Vasilache}{Pouchet et~al\mbox{.}}{2007}]%
        {pouchet.07.cgo}
\bibfield{author}{\bibinfo{person}{Louis-No{\"e}l Pouchet},
  \bibinfo{person}{C{\'e}dric Bastoul}, \bibinfo{person}{Albert Cohen}, {and}
  \bibinfo{person}{Nicolas Vasilache}.} \bibinfo{year}{2007}\natexlab{}.
\newblock \showarticletitle{Iterative optimization in the polyhedral model:
  Part {I}, one-dimensional time}. In \bibinfo{booktitle}{\emph{International
  Symposium on Code Generation and Optimization}}.
\newblock


\bibitem[\protect\citeauthoryear{Pouchet, Bondhugula, Bastoul, Cohen,
  Ramanujam, Sadayappan, and Vasilache}{Pouchet et~al\mbox{.}}{2011}]%
        {ilpschedulemultidim}
\bibfield{author}{\bibinfo{person}{Louis-No\"{e}l Pouchet},
  \bibinfo{person}{Uday Bondhugula}, \bibinfo{person}{C\'{e}dric Bastoul},
  \bibinfo{person}{Albert Cohen}, \bibinfo{person}{J. Ramanujam},
  \bibinfo{person}{P. Sadayappan}, {and} \bibinfo{person}{Nicolas Vasilache}.}
  \bibinfo{year}{2011}\natexlab{}.
\newblock \showarticletitle{Loop Transformations: Convexity, Pruning and
  Optimization}. In \bibinfo{booktitle}{\emph{Symposium on Principles of
  Programming Languages}}.
\newblock


\bibitem[\protect\citeauthoryear{Verdoolaege}{Verdoolaege}{2010}]%
        {isl}
\bibfield{author}{\bibinfo{person}{Sven Verdoolaege}.}
  \bibinfo{year}{2010}\natexlab{}.
\newblock \showarticletitle{isl: An Integer Set Library for the Polyhedral
  Model}. In \bibinfo{booktitle}{\emph{International Congress on Mathematical
  Software}}.
\newblock


\end{thebibliography}


%%% -*-BibTeX-*-
%%% Do NOT edit. File created by BibTeX with style
%%% ACM-Reference-Format-Journals [18-Jan-2012].

\begin{thebibliography}{71}

%%% ====================================================================
%%% NOTE TO THE USER: you can override these defaults by providing
%%% customized versions of any of these macros before the \bibliography
%%% command.  Each of them MUST provide its own final punctuation,
%%% except for \shownote{}, \showDOI{}, and \showURL{}.  The latter two
%%% do not use final punctuation, in order to avoid confusing it with
%%% the Web address.
%%%
%%% To suppress output of a particular field, define its macro to expand
%%% to an empty string, or better, \unskip, like this:
%%%
%%% \newcommand{\showDOI}[1]{\unskip}   % LaTeX syntax
%%%
%%% \def \showDOI #1{\unskip}           % plain TeX syntax
%%%
%%% ====================================================================

\ifx \showCODEN    \undefined \def \showCODEN     #1{\unskip}     \fi
\ifx \showDOI      \undefined \def \showDOI       #1{#1}\fi
\ifx \showISBNx    \undefined \def \showISBNx     #1{\unskip}     \fi
\ifx \showISBNxiii \undefined \def \showISBNxiii  #1{\unskip}     \fi
\ifx \showISSN     \undefined \def \showISSN      #1{\unskip}     \fi
\ifx \showLCCN     \undefined \def \showLCCN      #1{\unskip}     \fi
\ifx \shownote     \undefined \def \shownote      #1{#1}          \fi
\ifx \showarticletitle \undefined \def \showarticletitle #1{#1}   \fi
\ifx \showURL      \undefined \def \showURL       {\relax}        \fi
% The following commands are used for tagged output and should be
% invisible to TeX
\providecommand\bibfield[2]{#2}
\providecommand\bibinfo[2]{#2}
\providecommand\natexlab[1]{#1}
\providecommand\showeprint[2][]{arXiv:#2}

\bibitem[\protect\citeauthoryear{Alias, Darte, Feautrier, and Gonnord}{Alias
  et~al\mbox{.}}{2010}]%
        {determinecomplexity2}
\bibfield{author}{\bibinfo{person}{Christophe Alias}, \bibinfo{person}{Alain
  Darte}, \bibinfo{person}{Paul Feautrier}, {and} \bibinfo{person}{Laure
  Gonnord}.} \bibinfo{year}{2010}\natexlab{}.
\newblock \showarticletitle{{Multi-dimensional Rankings, Program Termination,
  and Complexity Bounds of Flowchart Programs}}. In
  \bibinfo{booktitle}{\emph{{Static Analysis Symposium}}}.
\newblock
\urldef\tempurl%
\url{https://hal.inria.fr/inria-00523298}
\showURL{%
\tempurl}


\bibitem[\protect\citeauthoryear{Atkinson, Yang, and Carbin}{Atkinson
  et~al\mbox{.}}{2018}]%
        {shuffle}
\bibfield{author}{\bibinfo{person}{Eric Atkinson}, \bibinfo{person}{Cambridge
  Yang}, {and} \bibinfo{person}{Michael Carbin}.}
  \bibinfo{year}{2018}\natexlab{}.
\newblock \showarticletitle{Verifying Handcoded Probabilistic Inference
  Procedures}. In \bibinfo{booktitle}{\emph{arXiv e-prints}}.
\newblock


\bibitem[\protect\citeauthoryear{Behnel, Bradshaw, Citro, Dalcin, Seljebotn,
  and Smith}{Behnel et~al\mbox{.}}{2011}]%
        {cython}
\bibfield{author}{\bibinfo{person}{Stefan Behnel}, \bibinfo{person}{Robert
  Bradshaw}, \bibinfo{person}{Craig Citro}, \bibinfo{person}{Lisandro Dalcin},
  \bibinfo{person}{Dag~Sverre Seljebotn}, {and} \bibinfo{person}{Kurt Smith}.}
  \bibinfo{year}{2011}\natexlab{}.
\newblock \showarticletitle{Cython: The Best of Both Worlds}.
\newblock \bibinfo{journal}{\emph{Computing in Science and Engg.}}
  \bibinfo{volume}{13}, \bibinfo{number}{2} (\bibinfo{date}{March}
  \bibinfo{year}{2011}), \bibinfo{pages}{31–39}.
\newblock


\bibitem[\protect\citeauthoryear{Benabderrahmane, Pouchet, Cohen, and
  Bastoul}{Benabderrahmane et~al\mbox{.}}{2010}]%
        {polyhedralmodelismoreapplicable}
\bibfield{author}{\bibinfo{person}{Mohamed-Walid Benabderrahmane},
  \bibinfo{person}{Louis-No\"{e}l Pouchet}, \bibinfo{person}{Albert Cohen},
  {and} \bibinfo{person}{C{\'e}dric Bastoul}.} \bibinfo{year}{2010}\natexlab{}.
\newblock \showarticletitle{The Polyhedral Model is More Widely Applicable Than
  You Think}. In \bibinfo{booktitle}{\emph{European Conference on Theory and
  Practice of Software, International Conference on Compiler Construction}}.
\newblock


\bibitem[\protect\citeauthoryear{Bingham, Chen, Jankowiak, Obermeyer, Pradhan,
  Karaletsos, Singh, Szerlip, Horsfall, and Goodman}{Bingham
  et~al\mbox{.}}{2019}]%
        {pyro}
\bibfield{author}{\bibinfo{person}{Eli Bingham}, \bibinfo{person}{Jonathan~P.
  Chen}, \bibinfo{person}{Martin Jankowiak}, \bibinfo{person}{Fritz Obermeyer},
  \bibinfo{person}{Neeraj Pradhan}, \bibinfo{person}{Theofanis Karaletsos},
  \bibinfo{person}{Rohit Singh}, \bibinfo{person}{Paul Szerlip},
  \bibinfo{person}{Paul Horsfall}, {and} \bibinfo{person}{Noah~D. Goodman}.}
  \bibinfo{year}{2019}\natexlab{}.
\newblock \showarticletitle{Pyro: Deep Universal Probabilistic Programming}.
\newblock \bibinfo{journal}{\emph{Journal of Machine Learning Research}}
  \bibinfo{volume}{20}, \bibinfo{number}{28} (\bibinfo{year}{2019}).
\newblock


\bibitem[\protect\citeauthoryear{Bishop}{Bishop}{2006}]%
        {pattrecogml}
\bibfield{author}{\bibinfo{person}{Christopher~M. Bishop}.}
  \bibinfo{year}{2006}\natexlab{}.
\newblock \bibinfo{booktitle}{\emph{Pattern Recognition and Machine Learning
  (Information Science and Statistics)}}.
\newblock \bibinfo{publisher}{Springer-Verlag}, \bibinfo{address}{Berlin,
  Heidelberg}.
\newblock


\bibitem[\protect\citeauthoryear{Blei, Ng, and Jordan}{Blei
  et~al\mbox{.}}{2003}]%
        {lda}
\bibfield{author}{\bibinfo{person}{David~M. Blei}, \bibinfo{person}{Andrew~Y.
  Ng}, {and} \bibinfo{person}{Michael~I. Jordan}.}
  \bibinfo{year}{2003}\natexlab{}.
\newblock \showarticletitle{Latent Dirichlet Allocation}.
\newblock \bibinfo{journal}{\emph{Journal of Machine Learning Research}}
  \bibinfo{volume}{3} (\bibinfo{date}{Jan.} \bibinfo{year}{2003}),
  \bibinfo{pages}{993--1022}.
\newblock


\bibitem[\protect\citeauthoryear{Bondhugula, Hartono, Ramanujam, and
  Sadayappan}{Bondhugula et~al\mbox{.}}{2008}]%
        {plutosched}
\bibfield{author}{\bibinfo{person}{Uday Bondhugula}, \bibinfo{person}{Albert
  Hartono}, \bibinfo{person}{J. Ramanujam}, {and} \bibinfo{person}{P.
  Sadayappan}.} \bibinfo{year}{2008}\natexlab{}.
\newblock \showarticletitle{A Practical Automatic Polyhedral Parallelizer and
  Locality Optimizer}. In \bibinfo{booktitle}{\emph{Conference on Programming
  Language Design and Implementation}}.
\newblock


\bibitem[\protect\citeauthoryear{Collard, Barthou, and Feautrier}{Collard
  et~al\mbox{.}}{1995}]%
        {fuzzyarray}
\bibfield{author}{\bibinfo{person}{Jean-Fran\c{c}ois Collard},
  \bibinfo{person}{Denis Barthou}, {and} \bibinfo{person}{Paul Feautrier}.}
  \bibinfo{year}{1995}\natexlab{}.
\newblock \showarticletitle{Fuzzy Array Dataflow Analysis}. In
  \bibinfo{booktitle}{\emph{Symposium on Principles and Practice of Parallel
  Programming}}.
\newblock


\bibitem[\protect\citeauthoryear{Collett}{Collett}{1993}]%
        {coxmedical1}
\bibfield{author}{\bibinfo{person}{D Collett}.}
  \bibinfo{year}{1993}\natexlab{}.
\newblock \bibinfo{booktitle}{\emph{Modelling Survival Data in Medical
  Research}}.
\newblock \bibinfo{publisher}{Chapman \& Hall}, \bibinfo{address}{New York}.
\newblock


\bibitem[\protect\citeauthoryear{Cox}{Cox}{1972}]%
        {coxpaper}
\bibfield{author}{\bibinfo{person}{D.~R. Cox}.}
  \bibinfo{year}{1972}\natexlab{}.
\newblock \showarticletitle{Regression Models and Life-Tables}.
\newblock \bibinfo{journal}{\emph{Journal of the Royal Statistical Society:
  Series B (Methodological)}} \bibinfo{volume}{34}, \bibinfo{number}{2}
  (\bibinfo{year}{1972}), \bibinfo{pages}{187--202}.
\newblock


\bibitem[\protect\citeauthoryear{Cusumano-Towner, Saad, Lew, and
  Mansinghka}{Cusumano-Towner et~al\mbox{.}}{2019}]%
        {gen}
\bibfield{author}{\bibinfo{person}{Marco~F Cusumano-Towner},
  \bibinfo{person}{Feras~A Saad}, \bibinfo{person}{Alexander~K Lew}, {and}
  \bibinfo{person}{Vikash~K Mansinghka}.} \bibinfo{year}{2019}\natexlab{}.
\newblock \showarticletitle{Gen: a General-purpose Probabilistic Programming
  System with Programmable Inference}. In \bibinfo{booktitle}{\emph{Conference
  on Programming Language Design and Implementation}}.
\newblock


\bibitem[\protect\citeauthoryear{Daniel~Huang}{Daniel~Huang}{2017}]%
        {augurv2}
\bibfield{author}{\bibinfo{person}{Greg~Morisett Daniel~Huang,
  Jean-Baptiste~Tristan}.} \bibinfo{year}{2017}\natexlab{}.
\newblock \showarticletitle{Compiling Markov Chain Monte Carlo Algorithms for
  Probabilistic Modeling}. In \bibinfo{booktitle}{\emph{Conference on
  Programming Language Design and Implementation}}.
\newblock


\bibitem[\protect\citeauthoryear{Doerfert, Streit, Hack, and Benaissa}{Doerfert
  et~al\mbox{.}}{2015}]%
        {pollyred}
\bibfield{author}{\bibinfo{person}{Johannes Doerfert}, \bibinfo{person}{Kevin
  Streit}, \bibinfo{person}{Sebastian Hack}, {and} \bibinfo{person}{Zino
  Benaissa}.} \bibinfo{year}{2015}\natexlab{}.
\newblock \showarticletitle{Polly's Polyhedral Scheduling in the Presence of
  Reductions}. In \bibinfo{booktitle}{\emph{International Workshop on
  Polyhedral Compilation Techniques}}.
\newblock


\bibitem[\protect\citeauthoryear{Ehrhardt}{Ehrhardt}{1967}]%
        {ehrhart}
\bibfield{author}{\bibinfo{person}{E. Ehrhardt}.}
  \bibinfo{year}{1967}\natexlab{}.
\newblock \showarticletitle{Sur un probl\`{e}me de G\'{e}om\'{e}trie
  Diophantienne Lin\'{e}aire. II}.
\newblock \bibinfo{journal}{\emph{Journal f\"{u}r die Reine und Angewandte
  Mathematik}}  \bibinfo{volume}{1967} (\bibinfo{year}{1967}),
  \bibinfo{pages}{25--49}.
\newblock
Issue 227.


\bibitem[\protect\citeauthoryear{Feautrier}{Feautrier}{1988}]%
        {feautrierarrayexpansion}
\bibfield{author}{\bibinfo{person}{P. Feautrier}.}
  \bibinfo{year}{1988}\natexlab{}.
\newblock \showarticletitle{Array Expansion}. In
  \bibinfo{booktitle}{\emph{International Conference on Supercomputing}}.
\newblock


\bibitem[\protect\citeauthoryear{Feautrier}{Feautrier}{1992a}]%
        {someeffsol2affschedpt1}
\bibfield{author}{\bibinfo{person}{Paul Feautrier}.}
  \bibinfo{year}{1992}\natexlab{a}.
\newblock \showarticletitle{Some efficient solutions to the affine scheduling
  problem. I. One-dimensional time}.
\newblock \bibinfo{journal}{\emph{International Journal of Parallel
  Programming}} \bibinfo{volume}{21}, \bibinfo{number}{5} (\bibinfo{date}{Oct.}
  \bibinfo{year}{1992}), \bibinfo{pages}{313--347}.
\newblock


\bibitem[\protect\citeauthoryear{Feautrier}{Feautrier}{1992b}]%
        {someeffsol2affschedpt2}
\bibfield{author}{\bibinfo{person}{Paul Feautrier}.}
  \bibinfo{year}{1992}\natexlab{b}.
\newblock \showarticletitle{Some efficient solutions to the affine scheduling
  problem. Part II. Multidimensional time}.
\newblock \bibinfo{journal}{\emph{International Journal of Parallel
  Programming}} \bibinfo{volume}{21}, \bibinfo{number}{6} (\bibinfo{date}{Dec.}
  \bibinfo{year}{1992}), \bibinfo{pages}{389--420}.
\newblock


\bibitem[\protect\citeauthoryear{Fung and Chang}{Fung and Chang}{1989}]%
        {likelihoodweighting}
\bibfield{author}{\bibinfo{person}{Robert~M. Fung} {and}
  \bibinfo{person}{Kuo-Chu Chang}.} \bibinfo{year}{1989}\natexlab{}.
\newblock \showarticletitle{Weighing and Integrating Evidence for Stochastic
  Simulation on Bayesian Networks}. In \bibinfo{booktitle}{\emph{Conference on
  Uncertainty in Artificial Intelligence}}.
\newblock


\bibitem[\protect\citeauthoryear{Gautam and Rajopadhye}{Gautam and
  Rajopadhye}{2006}]%
        {sr}
\bibfield{author}{\bibinfo{person}{Gautam} {and} \bibinfo{person}{S.
  Rajopadhye}.} \bibinfo{year}{2006}\natexlab{}.
\newblock \showarticletitle{Simplifying Reductions}. In
  \bibinfo{booktitle}{\emph{Symposium on Principles of Programming Languages}}.
\newblock


\bibitem[\protect\citeauthoryear{Gelman, Lee, and Guo}{Gelman
  et~al\mbox{.}}{2015}]%
        {stan}
\bibfield{author}{\bibinfo{person}{Andrew Gelman}, \bibinfo{person}{Daniel
  Lee}, {and} \bibinfo{person}{Jiqiang Guo}.} \bibinfo{year}{2015}\natexlab{}.
\newblock \showarticletitle{Stan: A probabilistic programming language for
  Bayesian inference and optimization}.
\newblock \bibinfo{journal}{\emph{Journal of Educational and Behavioral
  Statistics}} \bibinfo{volume}{40}, \bibinfo{number}{5}
  (\bibinfo{year}{2015}), \bibinfo{pages}{530--543}.
\newblock


\bibitem[\protect\citeauthoryear{Geman and Geman}{Geman and Geman}{1984}]%
        {gibbssampling}
\bibfield{author}{\bibinfo{person}{Stuart Geman} {and} \bibinfo{person}{Donald
  Geman}.} \bibinfo{year}{1984}\natexlab{}.
\newblock \showarticletitle{Stochastic Relaxation, Gibbs Distributions, and the
  Bayesian Restoration of images}.
\newblock \bibinfo{journal}{\emph{Transactions on Pattern Analysis and Machine
  Intelligence}}  \bibinfo{volume}{6} (\bibinfo{date}{Nov.}
  \bibinfo{year}{1984}), \bibinfo{pages}{721--741}.
\newblock
Issue 6.


\bibitem[\protect\citeauthoryear{Ginsbach and O'Boyle}{Ginsbach and
  O'Boyle}{2017}]%
        {ginsbach17}
\bibfield{author}{\bibinfo{person}{Philip Ginsbach} {and}
  \bibinfo{person}{Michael F.~P. O'Boyle}.} \bibinfo{year}{2017}\natexlab{}.
\newblock \showarticletitle{Discovery and Exploitation of General Reductions: A
  Constraint Based Approach}. In \bibinfo{booktitle}{\emph{International
  Symposium on Code Generation and Optimization}}.
\newblock


\bibitem[\protect\citeauthoryear{Goodman, Mansinghka, Roy, Bonawitz, and
  Tenenbaum}{Goodman et~al\mbox{.}}{2008}]%
        {church}
\bibfield{author}{\bibinfo{person}{Noah~D. Goodman}, \bibinfo{person}{Vikash~K.
  Mansinghka}, \bibinfo{person}{Daniel~M. Roy}, \bibinfo{person}{Keith
  Bonawitz}, {and} \bibinfo{person}{Joshua~B. Tenenbaum}.}
  \bibinfo{year}{2008}\natexlab{}.
\newblock \showarticletitle{Church: A language for generative models}. In
  \bibinfo{booktitle}{\emph{Conference on Uncertainty in Artificial
  Intelligence}}.
\newblock


\bibitem[\protect\citeauthoryear{Goodman and Stuhlm\"{u}ller}{Goodman and
  Stuhlm\"{u}ller}{2014}]%
        {webppl}
\bibfield{author}{\bibinfo{person}{Noah~D Goodman} {and}
  \bibinfo{person}{Andreas Stuhlm\"{u}ller}.} \bibinfo{year}{2014}\natexlab{}.
\newblock \showarticletitle{{The Design and Implementation of Probabilistic
  Programming Languages}}. \bibinfo{howpublished}{\url{http://dippl.org}}.
\newblock
\newblock
\shownote{Accessed: 2020-10-30.}


\bibitem[\protect\citeauthoryear{Grauer-Gray and Cavazos}{Grauer-Gray and
  Cavazos}{2011}]%
        {lbppldi}
\bibfield{author}{\bibinfo{person}{Scott Grauer-Gray} {and}
  \bibinfo{person}{John Cavazos}.} \bibinfo{year}{2011}\natexlab{}.
\newblock \showarticletitle{Optimizing and Auto-tuning Belief Propagation on
  the GPU}. In \bibinfo{booktitle}{\emph{Workshop on Languages and Compilers
  for Parallel Computing}}.
\newblock


\bibitem[\protect\citeauthoryear{Griffiths and Steyvers}{Griffiths and
  Steyvers}{2004}]%
        {ldagibbs}
\bibfield{author}{\bibinfo{person}{T. Griffiths} {and} \bibinfo{person}{M.
  Steyvers}.} \bibinfo{year}{2004}\natexlab{}.
\newblock \showarticletitle{Finding Scientific Topics}.
\newblock \bibinfo{journal}{\emph{Proceedings of the National Academy of
  Sciences}} \bibinfo{volume}{101}, \bibinfo{number}{suppl. 1}
  (\bibinfo{date}{April} \bibinfo{year}{2004}), \bibinfo{pages}{5228--5235}.
\newblock


\bibitem[\protect\citeauthoryear{Gupta, Daegon, and Rajopadhye}{Gupta
  et~al\mbox{.}}{2007}]%
        {zpolysched}
\bibfield{author}{\bibinfo{person}{Gautam Gupta}, \bibinfo{person}{Kim Daegon},
  {and} \bibinfo{person}{Sanjay Rajopadhye}.} \bibinfo{year}{2007}\natexlab{}.
\newblock \showarticletitle{Scheduling in the Z-Polyhedral Model}.
\newblock \bibinfo{journal}{\emph{International Parallel and Distributed
  Processing Symposium}}.
\newblock


\bibitem[\protect\citeauthoryear{Gupta, Rajopadhye, and Quinton}{Gupta
  et~al\mbox{.}}{2002}]%
        {schedulereductionsrealistic}
\bibfield{author}{\bibinfo{person}{Gautam Gupta}, \bibinfo{person}{Sanjay
  Rajopadhye}, {and} \bibinfo{person}{Patrice Quinton}.}
  \bibinfo{year}{2002}\natexlab{}.
\newblock \showarticletitle{Scheduling Reductions on Realistic Machines}. In
  \bibinfo{booktitle}{\emph{Symposium on Parallel Algorithms and
  Architectures}}.
\newblock


\bibitem[\protect\citeauthoryear{Hastings}{Hastings}{1970}]%
        {mhb}
\bibfield{author}{\bibinfo{person}{W.~K. Hastings}.}
  \bibinfo{year}{1970}\natexlab{}.
\newblock \showarticletitle{Monte Carlo Sampling Methods Using Markov Chains
  and Their Applications}.
\newblock \bibinfo{journal}{\emph{Biometrika}} \bibinfo{volume}{57},
  \bibinfo{number}{1} (\bibinfo{date}{April} \bibinfo{year}{1970}),
  \bibinfo{pages}{97--109}.
\newblock


\bibitem[\protect\citeauthoryear{Holmes, Harris, and Quince}{Holmes
  et~al\mbox{.}}{2012}]%
        {dmm}
\bibfield{author}{\bibinfo{person}{Ian Holmes}, \bibinfo{person}{Keith Harris},
  {and} \bibinfo{person}{Christopher Quince}.} \bibinfo{year}{2012}\natexlab{}.
\newblock \showarticletitle{Dirichlet Multinomial Mixtures: Generative Models
  for Microbial Metagenomics}.
\newblock \bibinfo{journal}{\emph{PLOS ONE}}  \bibinfo{volume}{7}
  (\bibinfo{date}{2} \bibinfo{year}{2012}).
\newblock
Issue 2.


\bibitem[\protect\citeauthoryear{Iooss, Alias, and Rajopadhye}{Iooss
  et~al\mbox{.}}{2014}]%
        {programequivalencereduction}
\bibfield{author}{\bibinfo{person}{Guillaume Iooss},
  \bibinfo{person}{Christophe Alias}, {and} \bibinfo{person}{Sanjay
  Rajopadhye}.} \bibinfo{year}{2014}\natexlab{}.
\newblock \showarticletitle{On Program Equivalence with Reductions}. In
  \bibinfo{booktitle}{\emph{International Static Analysis Symposium}}.
\newblock


\bibitem[\protect\citeauthoryear{{Jian Sun}, {Nan-Ning Zheng}, and {Heung-Yeung
  Shum}}{{Jian Sun} et~al\mbox{.}}{2003}]%
        {lbpstereo}
\bibfield{author}{\bibinfo{person}{{Jian Sun}}, \bibinfo{person}{{Nan-Ning
  Zheng}}, {and} \bibinfo{person}{{Heung-Yeung Shum}}.}
  \bibinfo{year}{2003}\natexlab{}.
\newblock \showarticletitle{Stereo matching using belief propagation}.
\newblock \bibinfo{journal}{\emph{Transactions on Pattern Analysis and Machine
  Intelligence}} \bibinfo{volume}{25}, \bibinfo{number}{7}
  (\bibinfo{date}{July} \bibinfo{year}{2003}), \bibinfo{pages}{787--800}.
\newblock


\bibitem[\protect\citeauthoryear{Kikuchi}{Kikuchi}{1951}]%
        {isingphysics}
\bibfield{author}{\bibinfo{person}{Ryoichi Kikuchi}.}
  \bibinfo{year}{1951}\natexlab{}.
\newblock \showarticletitle{A Theory of Cooperative Phenomena}.
\newblock \bibinfo{journal}{\emph{Physical Review}}  \bibinfo{volume}{81}
  (\bibinfo{date}{March} \bibinfo{year}{1951}), \bibinfo{pages}{988--1003}.
\newblock
Issue 6.


\bibitem[\protect\citeauthoryear{Kiselyov}{Kiselyov}{2016}]%
        {hakaru10}
\bibfield{author}{\bibinfo{person}{Oleg Kiselyov}.}
  \bibinfo{year}{2016}\natexlab{}.
\newblock \showarticletitle{Probabilistic Programming Language and its
  Incremental Evaluation}. In \bibinfo{booktitle}{\emph{Asian Symposium on
  Programming Languages and Systems}}.
\newblock


\bibitem[\protect\citeauthoryear{Liu}{Liu}{1994}]%
        {gibbscollapsed}
\bibfield{author}{\bibinfo{person}{Jun~S. Liu}.}
  \bibinfo{year}{1994}\natexlab{}.
\newblock \showarticletitle{The Collapsed Gibbs Sampler in Bayesian
  Computations with Applications to a Gene Regulation Problem}.
\newblock \bibinfo{journal}{\emph{J. Amer. Statist. Assoc.}}
  \bibinfo{volume}{89}, \bibinfo{number}{427} (\bibinfo{date}{Sept.}
  \bibinfo{year}{1994}), \bibinfo{pages}{958--966}.
\newblock


\bibitem[\protect\citeauthoryear{Liu, Stoller, Li, and Rothamel}{Liu
  et~al\mbox{.}}{2005}]%
        {optaggarray}
\bibfield{author}{\bibinfo{person}{Yanhong~A. Liu}, \bibinfo{person}{Scott~D.
  Stoller}, \bibinfo{person}{Ning Li}, {and} \bibinfo{person}{Tom Rothamel}.}
  \bibinfo{year}{2005}\natexlab{}.
\newblock \showarticletitle{Optimizing Aggregate Array Computations in Loops}.
\newblock \bibinfo{journal}{\emph{ACM Transactions on Programming Languages and
  Systems}} \bibinfo{volume}{27}, \bibinfo{number}{1} (\bibinfo{date}{Jan.}
  \bibinfo{year}{2005}), \bibinfo{pages}{91--125}.
\newblock


\bibitem[\protect\citeauthoryear{Mansingkha, Schaechtle, Handa, Radul, Chen,
  and Rinard}{Mansingkha et~al\mbox{.}}{2018}]%
        {venture}
\bibfield{author}{\bibinfo{person}{Vikash Mansingkha}, \bibinfo{person}{Ulrich
  Schaechtle}, \bibinfo{person}{Shivam Handa}, \bibinfo{person}{Alexey Radul},
  \bibinfo{person}{Yutian Chen}, {and} \bibinfo{person}{Martin Rinard}.}
  \bibinfo{year}{2018}\natexlab{}.
\newblock \showarticletitle{Probabilistic Programming with Programmable
  Inference}. In \bibinfo{booktitle}{\emph{Conference on Programming Language
  Design and Implementation}}.
\newblock


\bibitem[\protect\citeauthoryear{{Metropolis}, {Rosenbluth}, {Rosenbluth},
  {Teller}, and {Teller}}{{Metropolis} et~al\mbox{.}}{1953}]%
        {mha}
\bibfield{author}{\bibinfo{person}{N. {Metropolis}}, \bibinfo{person}{A.~W.
  {Rosenbluth}}, \bibinfo{person}{M.~N. {Rosenbluth}}, \bibinfo{person}{A.~H.
  {Teller}}, {and} \bibinfo{person}{E. {Teller}}.}
  \bibinfo{year}{1953}\natexlab{}.
\newblock \showarticletitle{{Equation of State Calculations by Fast Computing
  Machines}}.
\newblock \bibinfo{journal}{\emph{Journal of Chemical Physics}}
  \bibinfo{volume}{21}, \bibinfo{number}{6} (\bibinfo{year}{1953}),
  \bibinfo{pages}{1087--1092}.
\newblock


\bibitem[\protect\citeauthoryear{Murphy}{Murphy}{2012}]%
        {gmm}
\bibfield{author}{\bibinfo{person}{Kevin~P. Murphy}.}
  \bibinfo{year}{2012}\natexlab{}.
\newblock \bibinfo{booktitle}{\emph{Machine Learning: A Probabilistic
  Perspective}}.
\newblock \bibinfo{publisher}{MIT Press}, \bibinfo{address}{Cambridge,
  Massachusets}.
\newblock


\bibitem[\protect\citeauthoryear{Narayanan, Carette, Romano, Shan, and
  Zinkov}{Narayanan et~al\mbox{.}}{2016}]%
        {hakaru}
\bibfield{author}{\bibinfo{person}{Praveen Narayanan}, \bibinfo{person}{Jacques
  Carette}, \bibinfo{person}{Wren Romano}, \bibinfo{person}{Chung{-}chieh
  Shan}, {and} \bibinfo{person}{Robert Zinkov}.}
  \bibinfo{year}{2016}\natexlab{}.
\newblock \showarticletitle{Probabilistic Inference by Program Transformation
  in Hakaru (System Description)}. In \bibinfo{booktitle}{\emph{International
  Symposium on Functional and Logic Programming}}.
\newblock


\bibitem[\protect\citeauthoryear{Nemhauser and Wolsey}{Nemhauser and
  Wolsey}{1988}]%
        {ilpbigm}
\bibfield{author}{\bibinfo{person}{George~L. Nemhauser} {and}
  \bibinfo{person}{Laurence~A. Wolsey}.} \bibinfo{year}{1988}\natexlab{}.
\newblock \showarticletitle{Integer and Combinatorial Optimization}.
\newblock


\bibitem[\protect\citeauthoryear{Newman}{Newman}{2008}]%
        {kos}
\bibfield{author}{\bibinfo{person}{David Newman}.}
  \bibinfo{year}{2008}\natexlab{}.
\newblock \showarticletitle{Bag of Words Dataset}. In
  \bibinfo{booktitle}{\emph{UCI Machine Learning Respository}}.
\newblock


\bibitem[\protect\citeauthoryear{Nori, Ozair, Rajamani, and Vijaykeerthy}{Nori
  et~al\mbox{.}}{2015}]%
        {r2}
\bibfield{author}{\bibinfo{person}{Aditya~V. Nori}, \bibinfo{person}{Sherjil
  Ozair}, \bibinfo{person}{Sriram~K. Rajamani}, {and} \bibinfo{person}{Deepak
  Vijaykeerthy}.} \bibinfo{year}{2015}\natexlab{}.
\newblock \showarticletitle{Efficient Synthesis of Probabilistic Programs}. In
  \bibinfo{booktitle}{\emph{Conference on Programming Language Design and
  Implementation}}.
\newblock


\bibitem[\protect\citeauthoryear{Padua}{Padua}{2011}]%
        {omega}
\bibfield{editor}{\bibinfo{person}{David Padua}} (Ed.).
  \bibinfo{year}{2011}\natexlab{}.
\newblock \bibinfo{booktitle}{\emph{Omega Calculator}}.
\newblock \bibinfo{publisher}{Springer US}, \bibinfo{address}{Boston, MA},
  \bibinfo{pages}{1355--1355}.
\newblock
\showISBNx{978-0-387-09766-4}
\urldef\tempurl%
\url{https://doi.org/10.1007/978-0-387-09766-4\_2303}
\showDOI{\tempurl}


\bibitem[\protect\citeauthoryear{Plummer}{Plummer}{2015}]%
        {jags}
\bibfield{author}{\bibinfo{person}{Martyn Plummer}.}
  \bibinfo{year}{2015}\natexlab{}.
\newblock \bibinfo{booktitle}{\emph{JAGS Version 4.0.0 user manual}}.
\newblock \bibinfo{publisher}{Addison-Wesley}, \bibinfo{address}{Reading,
  Massachusetts}.
\newblock


\bibitem[\protect\citeauthoryear{Pouchet, Bastoul, Cohen, and Cavazos}{Pouchet
  et~al\mbox{.}}{2008}]%
        {pouchet.08.pldi}
\bibfield{author}{\bibinfo{person}{Louis-No{\"e}l Pouchet},
  \bibinfo{person}{C{\'e}dric Bastoul}, \bibinfo{person}{Albert Cohen}, {and}
  \bibinfo{person}{John Cavazos}.} \bibinfo{year}{2008}\natexlab{}.
\newblock \showarticletitle{Iterative optimization in the polyhedral model:
  Part {II}, multidimensional time}. In \bibinfo{booktitle}{\emph{Conference on
  Programming Language Design and Implementation}}.
\newblock


\bibitem[\protect\citeauthoryear{Pouchet, Bastoul, Cohen, and
  Vasilache}{Pouchet et~al\mbox{.}}{2007}]%
        {pouchet.07.cgo}
\bibfield{author}{\bibinfo{person}{Louis-No{\"e}l Pouchet},
  \bibinfo{person}{C{\'e}dric Bastoul}, \bibinfo{person}{Albert Cohen}, {and}
  \bibinfo{person}{Nicolas Vasilache}.} \bibinfo{year}{2007}\natexlab{}.
\newblock \showarticletitle{Iterative optimization in the polyhedral model:
  Part {I}, one-dimensional time}. In \bibinfo{booktitle}{\emph{International
  Symposium on Code Generation and Optimization}}.
\newblock


\bibitem[\protect\citeauthoryear{Pouchet, Bondhugula, Bastoul, Cohen,
  Ramanujam, Sadayappan, and Vasilache}{Pouchet et~al\mbox{.}}{2011}]%
        {ilpschedulemultidim}
\bibfield{author}{\bibinfo{person}{Louis-No\"{e}l Pouchet},
  \bibinfo{person}{Uday Bondhugula}, \bibinfo{person}{C\'{e}dric Bastoul},
  \bibinfo{person}{Albert Cohen}, \bibinfo{person}{J. Ramanujam},
  \bibinfo{person}{P. Sadayappan}, {and} \bibinfo{person}{Nicolas Vasilache}.}
  \bibinfo{year}{2011}\natexlab{}.
\newblock \showarticletitle{Loop Transformations: Convexity, Pruning and
  Optimization}. In \bibinfo{booktitle}{\emph{Symposium on Principles of
  Programming Languages}}.
\newblock


\bibitem[\protect\citeauthoryear{{Rauchwerger} and {Padua}}{{Rauchwerger} and
  {Padua}}{1999}]%
        {rauchwerger99}
\bibfield{author}{\bibinfo{person}{L. {Rauchwerger}} {and}
  \bibinfo{person}{D.~A. {Padua}}.} \bibinfo{year}{1999}\natexlab{}.
\newblock \showarticletitle{The LRPD test: speculative run-time parallelization
  of loops with privatization and reduction parallelization}.
\newblock \bibinfo{journal}{\emph{Transactions on Parallel and Distributed
  Systems}} \bibinfo{volume}{10}, \bibinfo{number}{2} (\bibinfo{date}{Feb.}
  \bibinfo{year}{1999}), \bibinfo{pages}{160--180}.
\newblock


\bibitem[\protect\citeauthoryear{Reddy, Kruse, and Cohen}{Reddy
  et~al\mbox{.}}{2016}]%
        {reddrawing}
\bibfield{author}{\bibinfo{person}{C. Reddy}, \bibinfo{person}{M. Kruse}, {and}
  \bibinfo{person}{A. Cohen}.} \bibinfo{year}{2016}\natexlab{}.
\newblock \showarticletitle{Reduction drawing: Language constructs and
  polyhedral compilation for reductions on GPUs}. In
  \bibinfo{booktitle}{\emph{International Conference on Parallel Architecture
  and Compilation Techniques}}.
\newblock


\bibitem[\protect\citeauthoryear{Redon and Feautrier}{Redon and
  Feautrier}{1994}]%
        {schedulereductions94}
\bibfield{author}{\bibinfo{person}{Xavier Redon} {and} \bibinfo{person}{Paul
  Feautrier}.} \bibinfo{year}{1994}\natexlab{}.
\newblock \showarticletitle{Scheduling Reductions}. In
  \bibinfo{booktitle}{\emph{International Conference on Supercomputing}}.
\newblock


\bibitem[\protect\citeauthoryear{Resnik and Hardisty}{Resnik and
  Hardisty}{2010}]%
        {dmmcollapsed}
\bibfield{author}{\bibinfo{person}{Philip Resnik} {and} \bibinfo{person}{Eric
  Hardisty}.} \bibinfo{year}{2010}\natexlab{}.
\newblock \bibinfo{booktitle}{\emph{Gibbs Sampling for the Uninitiated}}.
\newblock \bibinfo{type}{{T}echnical {R}eport}.
  \bibinfo{institution}{University of Maryland Institute of Advanced Computer
  Studies}.
\newblock


\bibitem[\protect\citeauthoryear{Ritchie, Stuhlm\"{u}ller, and Goodman}{Ritchie
  et~al\mbox{.}}{2016}]%
        {c3}
\bibfield{author}{\bibinfo{person}{Daniel Ritchie}, \bibinfo{person}{Andreas
  Stuhlm\"{u}ller}, {and} \bibinfo{person}{Noah Goodman}.}
  \bibinfo{year}{2016}\natexlab{}.
\newblock \showarticletitle{C3: Lightweight Incrementalized MCMC for
  Probabilistic Programs using Continuations and Callsite Caching}. In
  \bibinfo{booktitle}{\emph{International Conference on Artificial Intelligence
  and Statistics}}.
\newblock


\bibitem[\protect\citeauthoryear{Rubiano}{Rubiano}{2017}]%
        {determinecomplexity1}
\bibfield{author}{\bibinfo{person}{Thomas Rubiano}.}
  \bibinfo{year}{2017}\natexlab{}.
\newblock \emph{\bibinfo{title}{{Implicit Computational Complexity and
  Compilers}}}.
\newblock \bibinfo{thesistype}{Ph.D. Dissertation}.
\newblock


\bibitem[\protect\citeauthoryear{Saouter and Quinton}{Saouter and
  Quinton}{1993}]%
        {sarecomputability}
\bibfield{author}{\bibinfo{person}{Yannick Saouter} {and}
  \bibinfo{person}{Patrice Quinton}.} \bibinfo{year}{1993}\natexlab{}.
\newblock \showarticletitle{Computability of Recurrence Equations}.
\newblock \bibinfo{journal}{\emph{Theoretical Computer Science}}
  \bibinfo{volume}{116}, \bibinfo{number}{2} (\bibinfo{date}{Aug.}
  \bibinfo{year}{1993}), \bibinfo{pages}{317--337}.
\newblock


\bibitem[\protect\citeauthoryear{Schrijver}{Schrijver}{1986}]%
        {theoryofilp}
\bibfield{author}{\bibinfo{person}{Alexander Schrijver}.}
  \bibinfo{year}{1986}\natexlab{}.
\newblock \bibinfo{booktitle}{\emph{Theory of Linear and Integer Programming}}.
\newblock \bibinfo{publisher}{John Wiley \& Sons, Inc.}, \bibinfo{address}{New
  York, NY, USA}.
\newblock


\bibitem[\protect\citeauthoryear{{Susto}, {Schirru}, {Pampuri}, {McLoone}, and
  {Beghi}}{{Susto} et~al\mbox{.}}{2015}]%
        {coxmechanics}
\bibfield{author}{\bibinfo{person}{G.~A. {Susto}}, \bibinfo{person}{A.
  {Schirru}}, \bibinfo{person}{S. {Pampuri}}, \bibinfo{person}{S. {McLoone}},
  {and} \bibinfo{person}{A. {Beghi}}.} \bibinfo{year}{2015}\natexlab{}.
\newblock \showarticletitle{Machine Learning for Predictive Maintenance: A
  Multiple Classifier Approach}.
\newblock \bibinfo{journal}{\emph{Transactions on Industrial Informatics}}
  \bibinfo{volume}{11}, \bibinfo{number}{3} (\bibinfo{date}{June}
  \bibinfo{year}{2015}), \bibinfo{pages}{812--820}.
\newblock


\bibitem[\protect\citeauthoryear{Therneau}{Therneau}{2013}]%
        {coxbook}
\bibfield{author}{\bibinfo{person}{Patricia~M Therneau, Terry~M.;Grambsch}.}
  \bibinfo{year}{2013}\natexlab{}.
\newblock \bibinfo{booktitle}{\emph{Modeling Survival Data: Extending the Cox
  Model}}.
\newblock \bibinfo{publisher}{Springer}, \bibinfo{address}{New York}.
\newblock


\bibitem[\protect\citeauthoryear{Tran, Hoffman, Saurous, Brevdo, Murphy, and
  Blei}{Tran et~al\mbox{.}}{2017}]%
        {edward}
\bibfield{author}{\bibinfo{person}{Dustin Tran}, \bibinfo{person}{Matthew~D
  Hoffman}, \bibinfo{person}{Rif~A Saurous}, \bibinfo{person}{Eugene Brevdo},
  \bibinfo{person}{Kevin Murphy}, {and} \bibinfo{person}{David~M Blei}.}
  \bibinfo{year}{2017}\natexlab{}.
\newblock \showarticletitle{Deep probabilistic programming}. In
  \bibinfo{booktitle}{\emph{International Conference on Learning
  Representations}}.
\newblock


\bibitem[\protect\citeauthoryear{Turnbaugh, Hamady, Yatsunenko, Cantarel,
  Duncan, Ley, Sogin, Jones, A~Roe, Affourtit, Egholm, Henrissat, C~Heath,
  Knight, and I~Gordon}{Turnbaugh et~al\mbox{.}}{2008}]%
        {twins}
\bibfield{author}{\bibinfo{person}{Peter Turnbaugh}, \bibinfo{person}{Micah
  Hamady}, \bibinfo{person}{Tanya Yatsunenko}, \bibinfo{person}{Brandi
  Cantarel}, \bibinfo{person}{Alexis Duncan}, \bibinfo{person}{Ruth Ley},
  \bibinfo{person}{Mitchell Sogin}, \bibinfo{person}{Joe Jones},
  \bibinfo{person}{Bruce A~Roe}, \bibinfo{person}{Jason Affourtit},
  \bibinfo{person}{Michael Egholm}, \bibinfo{person}{Bernard Henrissat},
  \bibinfo{person}{Andrew C~Heath}, \bibinfo{person}{Rob Knight}, {and}
  \bibinfo{person}{Jeffrey I~Gordon}.} \bibinfo{year}{2008}\natexlab{}.
\newblock \showarticletitle{A core gut microbiome in obese and lean twins}.
\newblock \bibinfo{journal}{\emph{Nature}}  \bibinfo{volume}{457}
  (\bibinfo{date}{12} \bibinfo{year}{2008}), \bibinfo{pages}{480--4}.
\newblock


\bibitem[\protect\citeauthoryear{Verdoolaege}{Verdoolaege}{2010}]%
        {isl}
\bibfield{author}{\bibinfo{person}{Sven Verdoolaege}.}
  \bibinfo{year}{2010}\natexlab{}.
\newblock \showarticletitle{isl: An Integer Set Library for the Polyhedral
  Model}. In \bibinfo{booktitle}{\emph{International Congress on Mathematical
  Software}}.
\newblock


\bibitem[\protect\citeauthoryear{Verdoolaege}{Verdoolaege}{2016}]%
        {presburgerformulaandpolyhedralcompilation}
\bibfield{author}{\bibinfo{person}{Sven Verdoolaege}.}
  \bibinfo{year}{2016}\natexlab{}.
\newblock \bibinfo{title}{Presburger Formulas and Polyhedral Compilation}.
\newblock
\newblock
\urldef\tempurl%
\url{https://doi.org/10.13140/RG.2.1.1174.6323}
\showDOI{\tempurl}


\bibitem[\protect\citeauthoryear{Verdoolaege, Nikolov, and
  Stefanov}{Verdoolaege et~al\mbox{.}}{2013}]%
        {ondemandarraydatafolow}
\bibfield{author}{\bibinfo{person}{Sven Verdoolaege}, \bibinfo{person}{Hristo
  Nikolov}, {and} \bibinfo{person}{Todor Stefanov}.}
  \bibinfo{year}{2013}\natexlab{}.
\newblock \showarticletitle{On Demand Parametric Array Dataflow Analysis}. In
  \bibinfo{booktitle}{\emph{International Workshop on Polyhedral Compilation
  Techniques}}.
\newblock


\bibitem[\protect\citeauthoryear{Verdoolaege, Seghir, Beyls, Loechner, and
  Bruynooghe}{Verdoolaege et~al\mbox{.}}{2007}]%
        {barvinok}
\bibfield{author}{\bibinfo{person}{Sven Verdoolaege}, \bibinfo{person}{Rachid
  Seghir}, \bibinfo{person}{Kristof Beyls}, \bibinfo{person}{Vincent Loechner},
  {and} \bibinfo{person}{Maurice Bruynooghe}.} \bibinfo{year}{2007}\natexlab{}.
\newblock \showarticletitle{Counting Integer Points in Parametric Polytopes
  Using Barvinok's Rational Functions}.
\newblock \bibinfo{journal}{\emph{Algorithmica}} \bibinfo{volume}{48},
  \bibinfo{number}{1} (\bibinfo{date}{May} \bibinfo{year}{2007}),
  \bibinfo{pages}{37--66}.
\newblock


\bibitem[\protect\citeauthoryear{Walia, Narayanan, Carette, Tobin-Hochstadt,
  and Shan}{Walia et~al\mbox{.}}{2019}]%
        {histogram}
\bibfield{author}{\bibinfo{person}{Rajan Walia}, \bibinfo{person}{Praveen
  Narayanan}, \bibinfo{person}{Jacques Carette}, \bibinfo{person}{Sam
  Tobin-Hochstadt}, {and} \bibinfo{person}{Chung{-}chieh Shan}.}
  \bibinfo{year}{2019}\natexlab{}.
\newblock \showarticletitle{From High-level Inference Algorithms to Efficient
  Code}. In \bibinfo{booktitle}{\emph{International Conference on Functional
  Programming}}.
\newblock


\bibitem[\protect\citeauthoryear{White, Reid, Harris, Harries, and Stone}{White
  et~al\mbox{.}}{2016}]%
        {coxmedical2}
\bibfield{author}{\bibinfo{person}{Nicola White}, \bibinfo{person}{Fiona Reid},
  \bibinfo{person}{Adam Harris}, \bibinfo{person}{Priscilla Harries}, {and}
  \bibinfo{person}{Patrick Stone}.} \bibinfo{year}{2016}\natexlab{}.
\newblock \showarticletitle{A Systematic Review of Predictions of Survival in
  Palliative Care: How Accurate Are Clinicians and Who Are the Experts?}
\newblock \bibinfo{journal}{\emph{PLOS ONE}} \bibinfo{volume}{11},
  \bibinfo{number}{8} (\bibinfo{date}{08} \bibinfo{year}{2016}),
  \bibinfo{pages}{1--20}.
\newblock


\bibitem[\protect\citeauthoryear{Wu, Li, Russell, and Bodik}{Wu
  et~al\mbox{.}}{2016}]%
        {swift}
\bibfield{author}{\bibinfo{person}{Yi Wu}, \bibinfo{person}{Lei Li},
  \bibinfo{person}{Stuart Russell}, {and} \bibinfo{person}{Rastislav Bodik}.}
  \bibinfo{year}{2016}\natexlab{}.
\newblock In \bibinfo{booktitle}{\emph{International Joint Conferences on
  Artificial Intelligence}}.
\newblock


\bibitem[\protect\citeauthoryear{Yang, Hanrahan, and Goodman}{Yang
  et~al\mbox{.}}{2014}]%
        {yang14}
\bibfield{author}{\bibinfo{person}{Lingfeng Yang}, \bibinfo{person}{Patrick
  Hanrahan}, {and} \bibinfo{person}{Noah Goodman}.}
  \bibinfo{year}{2014}\natexlab{}.
\newblock \showarticletitle{{Generating Efficient MCMC Kernels from
  Probabilistic Programs}}. In \bibinfo{booktitle}{\emph{International
  Conference on Artificial Intelligence and Statistics}}.
\newblock


\bibitem[\protect\citeauthoryear{Yuki, Gupta, Kim, Pathan, and Rajopadhye}{Yuki
  et~al\mbox{.}}{2013}]%
        {alphaz}
\bibfield{author}{\bibinfo{person}{Tomofumi Yuki}, \bibinfo{person}{Gautam
  Gupta}, \bibinfo{person}{DaeGon Kim}, \bibinfo{person}{Tanveer Pathan}, {and}
  \bibinfo{person}{Sanjay Rajopadhye}.} \bibinfo{year}{2013}\natexlab{}.
\newblock \showarticletitle{AlphaZ: A System for Design Space Exploration in
  the Polyhedral Model}. In \bibinfo{booktitle}{\emph{Workshop on Languages and
  Compilers for Parallel Computing}}.
\newblock


\bibitem[\protect\citeauthoryear{Zhang and Xue}{Zhang and Xue}{2019}]%
        {zhang19}
\bibfield{author}{\bibinfo{person}{Jieyuan Zhang} {and}
  \bibinfo{person}{Jingling Xue}.} \bibinfo{year}{2019}\natexlab{}.
\newblock \showarticletitle{Incremental Precision-Preserving Symbolic Inference
  for Probabilistic Programs}. In \bibinfo{booktitle}{\emph{Conference on
  Programming Language Design and Implementation}}.
\newblock


\end{thebibliography}

\end{document}

% --- supplement: popl-appendix.tex ---

\makeatletter\@input{xx.tex}\makeatother

%
\title{Simplifying Dependent Reductions in the Polyhedral Model (Appendix)}     
%

%

%
\author{Cambridge Yang}
%
\affiliation{
  %
  \institution{MIT CSAIL}            %
  \country{USA}                    %
}
\email{camyang@csail.mit.edu}          %

%
\author{Eric Atkinson}
%
\affiliation{
  %
  \institution{MIT CSAIL}            %
  \country{USA}                    %
}
\email{eatkinson@csail.mit.edu}          %

%
\author{Michael Carbin}
%
\affiliation{
  %
  \institution{MIT CSAIL}            %
  \country{USA}                    %
}
\email{mcarbin@csail.mit.edu}          %

%

%

%

%

%

%
\appendix

\section{Extra Listings and Equations}
\label{sec:extra_listings}

\begin{clisting}[caption={Alternative optimized PS (right-to-left)}, 
label=lst:optprefixsumr2l]
for(i = 0; i < N; i++)
 B[N - 1] += A[i]
for(i = N-2; i >= 0; i--)
 B[i] = B[i+1] - A[i]
\end{clisting}

\begin{figure}[h]
\begin{subequations}
\allowdisplaybreaks
\label{eq:ms-scan-faces-linealty}
\begin{align}
\mathcal{L}(f_1) &= \polyhedral[N]{[i, j] : 1 = 1} \\
\mathcal{L}(f_2) &= \polyhedral[N]{[i, j] : j = 0} \\
\mathcal{L}(f_3) &= \polyhedral[N]{[i, j] : j = i} \\
\mathcal{L}(f_4) &= \polyhedral[N]{[i, j] : i = 0 \land j = 0} \\
\mathcal{L}(f_5) &= \polyhedral[N]{[i, j] : i = 0 \land j = 0} \\
\mathcal{L}(f_6) &= \polyhedral[N]{[i, j] : i = 0 \land j = 0} \\
\mathcal{L}(f_7) &= \polyhedral[N]{[i, j] : i = 0 \land j = 0}\\
\mathcal{L}(f_8) &= \polyhedral[N]{[i, j] : i = 0 \land j = 0}
\end{align}
\end{subequations}
\caption{$\mathcal{L}(.)$ for the example in \Cref{sec:reuse_constraint_bilp}}
\end{figure}

%

\section{Simplifying Reduction}
\label{sec:simplifying-reduction}

{
	
\newcommand{\Paddonly}{\mathcal{P}_{\text{add}}^{\mathbf{u}} - \mathcal{P}_{\text{int}}^{\mathbf{u}}}

\newcommand{\Preuseonly}{\mathcal{P}_{\text{int}}^{\mathbf{u}} - 
	(\mathcal{P}_{\text{add}}^{\mathbf{u}} \cup \mathcal{P}_{\text{sub}}^{\mathbf{u}})}

\newcommand{\Paddreuse}{\mathcal{P}_{\text{add}}^{\mathbf{u}} \cap 
	(\mathcal{P}_{\text{int}}^{\mathbf{u}} - \mathcal{P}_{\text{sub}}^{\mathbf{u}})}

\newcommand{\Preusesub}{\mathcal{P}_{\text{sub}}^{\mathbf{u}} \cap 
	(\mathcal{P}_{\text{int}}^{\mathbf{u}} - \mathcal{P}_{\text{add}}^{\mathbf{u}})}

\newcommand{\Paddreusesub}{\mathcal{P}_{\text{sub}}^{\mathbf{u}} \cap 
	\mathcal{P}_{\text{int}}^{\mathbf{u}} \cap \mathcal{P}_{\text{add}}^{\mathbf{u}} }

\begin{figure}[ht]
$
\begin{array}{lll}
\texttt{l-add-only: } & \textsf{LHS}[\mathbf{u}] = \textsf{ADD}[\mathbf{u}] & : \Paddonly \\
%
\texttt{l-reuse-only: } & \textsf{LHS}[\mathbf{u}] = \textsf{LHS}[T_r^{\mathbf{u}}(\mathbf{u})] & : 
\Preuseonly \\
%
\texttt{l-add-reuse: } & \textsf{LHS}[\mathbf{u}] = \textsf{ADD}[\mathbf{u}] \oplus 
\textsf{LHS}[T_r^{\mathbf{u}}(\mathbf{u})] & : \Paddreuse \\
%
\texttt{l-reuse-sub: } & \textsf{LHS}[\mathbf{u}] = \textsf{LHS}[T_r^{\mathbf{u}}(\mathbf{u})] \ominus 
\textsf{SUB}[\mathbf{u}] & : \Preusesub \\
%
\texttt{l-add-reuse-sub: } & \textsf{LHS}[\mathbf{u}] =  \textsf{ADD}[\mathbf{u}] \oplus 
\textsf{LHS}[T_r^{\mathbf{u}}(\mathbf{u})] \ominus \textsf{SUB}[\mathbf{u}] & : \Paddreusesub \\
%
\texttt{ladd: } & \textsf{ADD}[\mathbf{u}] \mathrel{\oplus}= \textit{expr} & : 
\mathcal{P}_{\text{add}} 
\\
%
\texttt{lsub: } & \textsf{SUB}[\mathbf{u}] \mathrel{\oplus}= \subst{\textit{expr}}{T_r(\freevars(\textit{expr}))} & : 
(\mathcal{P}_{\text{int}}^{\mathbf{u}})^{\mathbf{s}} \cap \mathcal{P}_{\text{sub}}
\end{array}
$
%
\caption{Simplifying Reduction in the Polyhedral Model}
\vspace{-.1cm}
\label{fig:srtransform}
\end{figure}

\subsection{Simplifying Reduction in Polyhedral Model}
\label{sec:sr}
\label{sec:srtransformation}

Consider an IR statement:
\[
\texttt{label: } \textsf{LHS}[\mathbf{u}] \mathrel{\oplus}=  \textit{expr} 
\; : \; \mathcal{P}
\]

Simplifying reduction (SR) transforms this statement into an equivalent form as
in \Cref{fig:srtransform}.  The transformation takes in one parameter, a
nonzero constant vector $\vec{r}$, representing the direction of reuse, which we will explain shortly.

We first define some notations:
\begin{itemize}
\item we use $p^\mathbf{a}$ to denote projecting $p$ onto space $\mathbf{a}$; the superscript acts 
effectively as an projection function; $p$ can either be a point, an affine transformation or a 
polyhedral set of points.
\item $T_r(x)$ is an affine translation transformation (under homogeneous coordinates). That is, if $x$ 
is a vector $\vec{x}$ representing a point, $T_r$ shifts $\vec{x}$ to $\vec{x} + \vec{r}$. If $x$ is a 
polyhedron 
$\mathcal{P}$,  $T_r$ shifts all points in $\mathcal{P}$ by $+\vec{r}$. 
\end{itemize}

Then, let $\mathcal{P}' = T_{-r}(\mathcal{P})$, i.e. $\mathcal{P}'$ is $\mathcal{P}$
shifted by $-\vec{r}$, we define the following symbols in \Cref{fig:srtransform}:
\[ 
\mathcal{P}_{\text{add}} = \mathcal{P} - \mathcal{P}' \quad\;\;
\mathcal{P}_{\text{sub}} = \mathcal{P}' - \mathcal{P} \quad\;\;
\mathcal{P}_{\text{int}} = \mathcal{P} \cap \mathcal{P}' \quad\;\;
\]

\paragraph{Explanation}
The core intuition behind ST is to realize {\em reuse} of the right hand side 
$\textit{expr}$.
Specifically, we require a choice of $\vec{r}$ so that it presents {\em sharing} for the 
RHS expression, that is: 
$$\llbracket \subst{\textit{expr}}{T_r(\freevars(\textit{expr}))} =
\textit{expr} \rrbracket$$. 
In other words, the value of $\textit{expr}$
is the same for any point $\mathbf{v}$ and its shifted counterpart
$T_r(\mathbf{v})$. This way, we can avoid evaluation of $\textit{expr}$ by simply 
copying from $\subst{\textit{expr}}{T_r(\allowbreak\freevars(\textit{expr}))}$, 
whenever possible. The first 
five statements 
\texttt{l-add-only} through \texttt{l-add-reuse-sub} computes $\textsf{LHS}$ this way and reuse 
$\subst{\textit{expr}}{T_r(\allowbreak\freevars(\textit{expr}))}$ along $\vec{r}$. The 
domains of the five 
statements prescribe the set of points according to each statement's semantics. 

To make this concrete, first notice the following:
\begin{itemize}
\item $\mathcal{P}_{\text{add}}^\mathbf{u}$ is the set of indices that receives
$expr$'s values evaluated in $\mathcal{P}_{\text{add}}$
\item $\mathcal{P}_{\text{sub}}^\mathbf{u}$ is the set of indices that receives $expr$'s
values evaluated in $\mathcal{P}_{\text{sub}}$
\item $\mathcal{P}_{\text{int}}^\mathbf{u}$ is
the set of indices that receives $expr$'s values evaluated in
$\mathcal{P}_{\text{int}}$. Receiving value from the intersection means that it is possible to reuse from 
the index point shifted by $\vec{r}$.
\end{itemize}

We then explain each of the first five statements in turn:

\paragraph{Reuse only} Consider the domain of \texttt{l-reuse-only},
$\Preuseonly$, can be read as: the set of indices that
receive value from intersection, but does not receive from \textsf{ADD} or
\textsf{SUB}, and this is precisely the set of points that can be directly
copied along $\vec{r}$. Thus, \texttt{l-reuse-only} performs just this copy operation: \( 
\textsf{LHS}[\mathbf{u}] = \textsf{LHS}[T_r^{\mathbf{u}}(\mathbf{u})] \).

\paragraph{Add Only} \texttt{l-add-only}'s domain $\Paddonly$ can be read as: the set of indices 
that receive value from 
\textsf{ADD}, but does not recieve value from intersection. One can verify that $\Paddonly = \Paddonly 
- \mathcal{P}_{\text{sub}}^{\mathbf{u}}$ this also implies that the 
set  also does not recieve value from \textsf{SUB}. Therefore, the statement 
just copies from \textsf{ADD}

\paragraph{Add and Reuse}
\texttt{l-add-reuse}'s domain, $\Paddreuse$, can be read as: the set of indices that receive value from 
\textsf{ADD} and the intersection, but does not receive value from \textsf{SUB}. Therefore the 
statement reuses value  along $\vec{r}$, and increments with value calculated from \textsf{ADD}.

\paragraph{Sub and Reuse}
\texttt{l-reuse-sub}'s domain, $\Preusesub$, can be read as: the set of indices that recieves value from 
\textsf{SUB} and the intersection, but does not receive value from \textsf{ADD}. Therefore the 
statement reuses value along $\vec{r}$, and decrements with value calculated from \textsf{SUB}.

\paragraph{Add, Reuse and Sub}
\texttt{l-add-reuse-sub}'s domain, $\Paddreusesub$. can be read as: the set of indices that 
receive value from both \textsf{ADD}, the intersection and \textsf{SUB}. Therefore, the statement reuse 
along $\vec{r}$, increments with \textsf{ADD} and decrements with \textsf{SUB}.

} 

\paragraph{Residual Reductions}

The statements \texttt{ladd} and \texttt{lsub} are themselves reductions, and we will call them {\em 
residual reductions} after SR transformation. They compute additional values that are requested by the 
top five statements. The residual reduction accumulates the same right-side expression as the original 
reduction, but with domains that are subsets of the original domain.

\subsection{Configuration of Simplifying Reduction}

As mentioned in \Cref{sec:srconfig}, we need to consider three constraints -- complexity, 
sharing and dependence -- when choosing the reuse vector $\vec{r}$. Here we discuss the 
constraints in more detail. 

\paragraph{Complexity}
A program's complexity is a function over its input parameters.

The complexity after one SR transformation is equal to the total sum of all cardinalities of 
domain sizes of the statements after the transformation. The complexity of the first five 
statements combined together is equivalent to iterating points of \textsf{LHS} array, and 
therefore it will always remain unchanged, since we will always need to compute answer for 
each point of the \textsf{LHS}. As shown in \cite{sr}, in order for one step of SR to 
be meaningful, in the sense that it decreases the complexity, we need that 
$|\mathcal{P}_{\text{add}}| + 
|(\mathcal{P}_{\text{int}}^{\mathbf{u}})^{\mathbf{s}} \cap \mathcal{P}_{\text{sub}}| < 
|\mathcal{P}|$. 

\paragraph{Sharing} 
Fully determining all possible $\vec{r}$ that presents 
sharing for the right-hand side expression is not decidable: 
for an arbitrary RHS expression $expr$ as an uninterpreted function: we can encode 
the problem as $\exists \vec{r} \forall \mathbf{v}. \llbracket \textit{expr} \rrbracket =  
\llbracket \allowbreak \subst{\allowbreak \textit{expr}}{T_r(\freevars(\textit{expr}))} 
\rrbracket $, 
and this is not decidable in general. 
However, we can still heuristically deduce valid reuse vectors, if we know the internals 
structure of 
$expr$.
\cite{sr} proposed a heuristical approach by computing a polyhedral set
$\mathcal{S}(S)$ called {\em share space}, which is the intersection of
the nullspaces of the dependence functions of all the subexpressions of
$expr$ in statement $S$, and selecting any vector $r \in 
\mathcal{S}(S)$.

%

\section{Enabling Transformations}
\label{sec:sr_enabling_transformations}

In this section we briefly review the enabling transformations introduced in \cite{sr}. Since 
these transformations are important to fully utilize ST for a single reduction, we encourage readers 
to find more details of these transformations in \cite{sr}.

\paragraph{Reduction Decomposition}
For reduction with projection function $\textit{proj}$, we can potentially decompose 
$\textit{proj}= \textit{proj1} \circ \textit{proj2}$, where $\circ$ denotes function composition. It 
is possible to break the reduction into two statements: the first statement with projection 
$\textit{proj2}$ produces an intermediate output, followed by a reduction with projection 
$\textit{proj1}$ that returns the original output. The first statement could lead to a larger share 
space than the original reduction, and therefore reduction decomposition enables 
ST.

\paragraph{Same Operator Transformation}
It's possible to lift inner expressions out of reductions to increase share space.

\paragraph{Distributivity Transformation}
It's possible to utilizie distributivity of an operator to lift inner expression out of reductions to 
increase share space.

\paragraph{Higher Order Operator Transformation}
It's possible to collapse along the entire reuse space, if the reduce operator $\oplus$ has an 
higher order operator $\otimes$. 

\section{Proof of Lemma~6.1}
\label{sec:prooflemma61}
\begin{proof}[Proof of \Cref{lemma:complexitydecrease}]
For any ST application, the reduction's $n$-dimensional domain $\mathcal{P}$ is 
reduced to the  
two half shells $\mathcal{P} - \mathcal{P}'$ and  $\mathcal{P}' - \mathcal{P}$. 
The two half shells decompose into convex polyhedral sets corresponding to all 
$(n-1)$-dimensional faces of $\mathcal{P}$. Further, for each decomposed convex 
polyhedral set, the thickness of the set, which is defined as the spanned width of the set 
orthogonal to its corresponding face, is a 
constant dependent solely on the ST's reuse vector and the face's orientation. 
Therefore, the cardinality of each decomposed polyhedral set is just the cardinality of 
the face multiplied by some constant. 
It then follows that for any two STs with two non-zero reuse vectors, their resultant 
residual reductions' complexities are 
always the same, and equal the sum of the cardinalities of all the faces of $\mathcal{P}$ 
multiplied by some constant. 
\end{proof}
\section{More Related Works and Discussions}
\label{sec:relatedworksanddiscussions}

\subsection{More Related Works}
\label{sec:more_related_works}

\paragraph{ILP Scheduling} Previous work \citep{ilpschedulemultidim, 
pouchet.07.cgo, 
pouchet.08.pldi} gave an ILP formulation of the scheduling problem. Specifically, 
\cite{ilpschedulemultidim} 
showsed how to construct constraints for a convex ILP 
problem to find an $m$-dimensional schedule for a program. Moreover, this 
formulation of constraints 
allows one to incorporate a desired objective to be optimized.
I\texttt{}n this work, we 
used the complexity of the 
final transformed program as the objective and we showed how to encode such 
objective as a affine 
expression in \Cref{sec:mssrcomplexity}. 

\paragraph{Heuristic Scheduling} 
There are also other scheduling methods such as the ones in 
\cite{someeffsol2affschedpt1, 
someeffsol2affschedpt2, plutosched} that use heuristics to schedule a program. 
These methods are 
usually more scalable than an ILP formulation. In this work, we use ideas from the 
ILP formulation to 
formulate the SDR problem, while our provided heuristic algorithm does not 
depend on using the ILP 
formulation for scheduling. Instead, we use the PLUTO \citep{isl, plutosched} 
heuristic scheduling 
algorithm in our implementation. 

\subsection{Incorpration of Technique into Probabilistic Programming System}
\label{sec:incorprateintopps}

We elaborate in limited detail on how we envision our techniques can be 
incorporated into Probabilistic Programming Systems.

In our vision, a user would not need to write probailistic programming algorithm 
in our polyhedral IR. 
However, it would be the responsibility of the probabilistic
programming system to provide a mapping from the source code of the
probabilistic program and the inference algorithm to an intermediate
representation that is amenable to our technique. There are several
probabilistic programming systems that compile probabilistic programs
and their inference algorithms to an intermediate representation, thus
making the resulting programs amenable to a variety of compiler
optimizations \citep{hakaru,gen,augurv2}.  To then specifically apply our
technique, we envision two possibilities:

\begin{enumerate}
\item Polyhedral IR: in this instance, the probabilistic programming system uses a 
polyhedral IR. We anticipate that the primary extension
of a standard polyhedral IR (e.g. IR used by \cite{pollyred}) to support our
work would be to add explicit reductions to the IR. The probabilistic
programming system would then need to lower reductions in the source
program to reductions in the IR. Alternatively, the system could use
reduction recognition techniques 
\cite[Section~6.1]{interproceduralanalysisforparallelization} to recover
reductions if it uses a more standard lowering process.
\item Non-polyhedral IR: in this instance, the probabilistic programming system 
uses a more traditional IR, such as three address code. Such an
approach would still need to identify and distinguish reductions as in
the polyhedral IR case. Where the approach differs is that a
non-polyhedral IR can make it challenging to access the full
capabilities of our technique. For example, the set of available reuse
directions may be limited to the iteration directions of existing
loops. To this end, we have implemented a limited version of our
heuristic algorithm in Shuffle, a probabilistic programming language
for typesafe programmable inference \cite{shuffle}. In this language, users
provide a probabilistic program and inference algorithm written in a
high-level language. Shuffle then translates the program to a
non-polyhedral IR with distinguished reductions that we then optimize
with our heuristic. 
\end{enumerate}

\ifdefined\includesequetialscheduleappendix
{\color{blue}
\section{Sequential schedule}
\label{sec:sequential_schedule}

To ensure the soundness of our heuristic algorithm, 
we require the schedule to be fully sequential, 
meaning no two instances of a statement are 
scheduled at the same timestamp. 
Specifically, for a statement $S$, we require that its schedule, $\Theta^S$, 
satisfies 
the following:
\begin{equation}\label{eq:sequential_schedule_requirement}
\Theta^S \cdot 
[\vec{x}^S_1, \vec{p}, 1]^\transpose \neq 
\Theta^S \cdot 
[\vec{x}^S_2, \vec{p}, 1]^\transpose
\quad \forall \vec{x}^S_1, \vec{x}^S_2. \; \vec{x}^S_1 \neq \vec{x}^S_2.
\end{equation}
A sufficient condition for \Cref{eq:sequential_schedule_requirement} is 
that $\Theta^S$ is {\em injective}, mapping unique iterations to unique 
timestamps.
That is that
\begin{equation}
\label{eq:is_injective}
\Theta^S \cdot [\vec{x}^S_1, \vec{p}_1, 1]^\transpose \neq  
\Theta^S \cdot [\vec{x}^S_2, \vec{p}_2, 1]^\transpose
\quad \forall \vec{x}^S_1,  \vec{p}_1, \vec{x}^S_2, \vec{p}_2. \; 
[\vec{x}^S_1, \vec{p}_1, 1]^\transpose \neq
[\vec{x}^S_2, \vec{p}_2, 1]^\transpose .
\end{equation}

\paragraph{Concept} If one desires to use an ILP formalization for scheduling 
(\Cref{sec:ilp_schedule}), 
then it is 
possible to ensure that $\Theta^S$ is injective by requiring that the last $d = 
\max_{\forall S} 
\left(|\vec{x}^S| + |\vec{p}| + 1\right)$ rows of $\Theta^S$ are the identity 
matrix.
Conceptually, this requirement results in appending the iteration vector 
to its otherwise standardly computed timestamp.
Therefore, different iterations that would otherwise be mapped to the same 
timestamp are then mapped to unique timestamps according to their 
iteration (therefore ensuring that \Cref{eq:is_injective} is satisfied).

We note that while the requirement is sufficient, it is not guaranteed to produce a 
minimal dimension schedule. 
For example, if the dependences in the original 
program by themselves force a sequential schedule, then the additional rows are 
not necessary.

\paragraph{Approach} 
To ensure squentiality for the ILP formulation in \Cref{eq:ilpscheduleconvex}, 
we first require that the schedule timestamp's dimension $m$ is greater than the 
maximum of the dimenions of the iterations vectors of all statements 
\Cref{eq:sequential_schedule_dimension_constraint}. 
\begin{equation}\label{eq:sequential_schedule_dimension_constraint}
m \ge \max_{\forall S} |\vec{x}^S| + |\vec{p}| + 1
\end{equation}

We next augment the original ILP formulation 
(\Cref{eq:ilpscheduleconvex}) with an additional 
constraint to ensure injectivity  (\Cref{eq:enforce_identity_constraint}).
\begin{subequations}\label{eq:ilpscheduleconvexrepeat}
\begin{align}
\forall \mathcal{D}_{S_1, S_2}, & \forall k \in \{1...m\},  
\delta_k^{\mathcal{D}_{S_1, 
S_2}} \in \{0, 1\}  \label{eq:ilpscheduleconvexrepeata}\\
\forall \mathcal{D}_{S_1, S_2}, & \sum_{k = 1}^m  
\delta_k^{\mathcal{D}_{S_1, S_2}} = 1 \label{eq:ilpscheduleconvexrepeatb} \\
\begin{split}
\label{eq:ilpscheduleconvexrepeatc}
\forall \mathcal{D}_{S_1, S_2}, & \forall k \in \{1 ... m\}, \forall 
[\vec{x}^{S_1}, \vec{x}^{S_2}, \vec{p}] 
\in \mathcal{D}_{S_1, S_2} \\
& \Theta^{S_2}_k \cdot \begin{bmatrix} \vec{x}^{S_2} \\ \vec{p} \\ 1 
\end{bmatrix}
- \Theta^{S_1}_k \cdot \begin{bmatrix} \vec{x}^{S_1} \\ \vec{p} \\ 1 
\end{bmatrix} 
\ge \delta_k^{\mathcal{D}_{S_1, S_2}} - \sum_{i=1}^{k-1} 
\delta_i^{\mathcal{D}_{S_1, S_2}}  (K \vec{p} 
+ K)
\end{split} \\
\label{eq:enforce_identity_constraint}
\forall \Theta^S, & \Theta^S_{\left(m - |\vec{x}^S + \vec{p} + 1|\right): m} = I
\end{align}
\end{subequations}
\Cref{eq:ilpscheduleconvexrepeata,eq:ilpscheduleconvexrepeatb,eq:ilpscheduleconvexrepeatc}
  exactly duplicate \Cref{eq:ilpscheduleconvex}.
Additionally, \Cref{eq:enforce_identity_constraint} requires that the last 
$|\vec{x}^S + \vec{p} + 1|$ rows of $\Theta^S$ are the identity matrix $I$.

\paragraph{Existence of Sequential Schedule}
We next demonstrate that if there exists a potentially non-sequential schedule to 
\Cref{eq:ilpscheduleconvex}, then there exists a sequential schedule to 
\Cref{eq:ilpscheduleconvexrepeat}.
\begin{theorem}[Sequential schedule ILP formulation]
\label{thm:sequential_schedule_ilp}
Given a program in the polyhedral IR, if there exists a valid (potentially 
non-sequential) schedule $\hat{\Theta}^S$ 
of dimension $\hat{m}$
to the original ILP formulation  (\Cref{eq:ilpscheduleconvex}), and let 
$m = \hat{m} + (\max_{\forall S} |\vec{x}^S| + |\vec{p}| + 1)$, then there exists a 
valid sequential schedule $\Theta$ 
of dimension $m$ to the augmented ILP formulation
(\Cref{eq:ilpscheduleconvexrepeat}).
\end{theorem}
To prove \Cref{thm:sequential_schedule_ilp}, we first demonstrate
\Cref{lemma:schedule_function_injectivity,%
lemma:sequential_schedule_completeness}, which show the existence and 
sequentiality of the solution to 
\Cref{eq:ilpscheduleconvexrepeat}, respectively.

\begin{lemma}[Scheduling function completeness]
\label{lemma:sequential_schedule_completeness}
Given any program in the polyhedral IR, 
if there exists a valid schedule $\hat{\Theta}$ of 
dimension $\hat{m}$, 
then for any $m > \hat{m}$, define the schedule $\Theta$ as:
(i)  for each statement $S$, $\Theta^S$ is given by
$\Theta^S \triangleq
\begin{bmatrix}\hat{\Theta}^S  \\ M^S
\end{bmatrix}$, and 
(ii) $M^S$ is an arbitrary matrix of shape $(m - \hat{m})  \times 
(|\vec{x}^S| + |\vec{p}| + 1|)$ .
Then $\Theta$ is a valid schedule of dimension $m$ for the given program.
\end{lemma}
In other words, if we append arbitrary rows to end to each 
matrix in a valid 
schedule $\hat{\Theta}$, then the obtained schedule $\Theta$ is also valid.
\begin{proof}
Since $\hat{\Theta}$ is a valid schedule for the given program, it must satisfy all 
dependences in the program. 
That is, for all dependence relation $\mathcal{D}_{S_1, S_2}$ between statements 
$S_1$ and $S_2$,
and for all instances $[\vec{x}^{S_1}, \vec{p}, 1]^T$ and 
$[\vec{x}^{S_2}, \vec{p}, 1]^T$ such that $[\vec{x}^{S_1}, \vec{x}^{S_2}, \vec{p}] 
\in \mathcal{D}_{S_1, S_2}$, we have
\begin{equation}
\hat{\Theta}^{S_1} \cdot [\vec{x}^{S_1}, \vec{p}, 1]^\transpose < 
\hat{\Theta}^{S_2} \cdot [\vec{x}^{S_2}, \vec{p}, 1]^\transpose,
\end{equation}
that is, the schedule timestamp of $\vec{x}^{S_1}$ is less than the schedule 
timestamp of $\vec{x}^{S_2}$.

If we append an arbitrary matrix $M^S$ of $m - \hat{m}$ rows to 
$\hat{\Theta}^S$, then the 
inequality
\begin{equation}
\label{eq:extend_thetahat_valid}
\begin{bmatrix}
\hat{\Theta}^{S_1} \\
M^{S_1}
\end{bmatrix} \cdot [\vec{x}^{S_1}, \vec{p}, 1]^\transpose < 
\begin{bmatrix}
\hat{\Theta}^{S_1} \\
M^{S_2}
\end{bmatrix} \cdot [\vec{x}^{S_2}, \vec{p}, 1]^\transpose,
\end{equation}
still holds because of the entry-by-entry schedule timestamp comparison.
This means that the schedule matrix $\Theta^S = \begin{bmatrix}\hat{\Theta}^S  
\\ M^S \end{bmatrix}$ is a valid schedule matrix of dimension $m$ for statement 
$S$.

Since \Cref{eq:extend_thetahat_valid} holds for each statement $S$, $\Theta$ is a 
valid schedule of dimension $m$ for the given program.
\end{proof}

\begin{lemma}[Scheduling function injectivity]
\label{lemma:schedule_function_injectivity}
For any schedule function $\Theta^S$ that satisfies
\Cref{eq:enforce_identity_constraint}:
\begin{equation}
\Theta^S \cdot [\vec{x}_1, \vec{p}_1, 1]^\transpose \neq  
\Theta^S \cdot [\vec{x}_2, \vec{p}_2, 1]^\transpose
\quad \forall \vec{x}_1,  \vec{p}_1, \vec{x}_2, \vec{p}_2. \; 
[\vec{x}_1, \vec{p}_1, 1]^\transpose \neq
[\vec{x}_2, \vec{p}_2, 1]^\transpose .
\end{equation}
\end{lemma}
\begin{proof}
By the additional constraint in \Cref{eq:enforce_identity_constraint}, 
$\Theta^S$ has the structure that the last $|\vec{x}^S| + |\vec{p}| + 1$ rows are 
the identity matrix.
\Cref{eq:sequential_schedule_theta_illustration} illustrates the structure 
of $\Theta^S$. 
\begin{equation}
\label{eq:sequential_schedule_theta_illustration}
\Theta^S 
= 
\left [ \begin{array}{*{3}{c}}
{\theta}_{1, 1} & ... & {\theta}_{1, |\vec{x}| + |\vec{p}| + 1 } \\
\vdots & ... & \vdots \\
{\theta}_{m-\left(|\vec{x}| + |\vec{p}| + 1\right) - 1, 1} & ... & 
{\theta}_{m-\left(|\vec{x}| + |\vec{p}| + 1\right) - 1, 
|\vec{x}| + |\vec{p}| + 1 } 
\\
\hline
1 & 0 ... & 0 \\
0 & 1 ... & 0 \\
\vdots & ... & \vdots \\
0 & ... & 1
\end{array}
\right ]
{
\begin{array}{@{}l@{}}
\left\}\begin{array}{@{}c@{}}\null\\\vphantom{\vdots}\\\null\end{array}\right.
 m- (|\vec{x}| + |\vec{p}| + 1) \\
\left\}\begin{array}{@{}c@{}}\null\\\vphantom{\vdots}\\\null\\\null\end{array}\right.
 
|\vec{x}| + 
|\vec{p}| + 1
\end{array}
}
\end{equation}

Therefore, for a given $\Theta^S$ and
for any pair of $[\vec{x}_1, \vec{p}_1, 1]^T$ and $[\vec{x}_2, 
\vec{p}_1, 
1]^T$ such that $[\vec{x}_1, \vec{p}_2, 1]^T \neq [\vec{x}_2, 
\vec{p}_2, 
1]^T$, we have 
\begin{equation}
\label{eq:sequential_schedule_time_neq}
\Theta^S \cdot [\vec{x}_1, \vec{p}_1, 1]^\transpose = 
\begin{bmatrix}
\Theta^S_{1:m-\left(|\vec{x}| + |\vec{p}| + 1\right)} \cdot [\vec{x}_1, 
\vec{p}_1, 1]^\transpose \\
\vec{x}_1 \\ \vec{p}_1 \\
1 
\end{bmatrix} \neq 
\begin{bmatrix}
\Theta^S_{1:m-\left(|\vec{x}| + |\vec{p}| + 1\right)} \cdot [\vec{x}_2, 
\vec{p}, 
1]^\transpose \\
\vec{x}_2 \\ \vec{p}_2\\
1 
\end{bmatrix} = 
\Theta^S \cdot [\vec{x}_2, \vec{p}_2, 1]^\transpose .
\end{equation}
The inequality in \Cref{eq:sequential_schedule_time_neq} holds because 
the left and right timestamps differ in the last $|\vec{x}| + |\vec{p}| + 1$ rows.
\end{proof}

We are now ready to prove \Cref{thm:sequential_schedule_ilp}.
\begin{proof}[Proof for \Cref{thm:sequential_schedule_ilp}]
By \Cref{lemma:sequential_schedule_completeness}, solution $\Theta$ exists and 
is a valid schedule.
By \Cref{lemma:schedule_function_injectivity}, each $\Theta^S$ in $\Theta$ is 
injective; thus, $\Theta$ is a sequential schedule.
In sum, we have that $\Theta$ is a valid sequential schedule.
\end{proof}
}
\fi

%

\bibliography{paper}